\documentclass[hidelinks,10pt]{article}

\usepackage{authblk}
\usepackage[utf8]{inputenc} 
\usepackage[T1]{fontenc}    
\usepackage{url}            
\usepackage{booktabs}       
\usepackage{amsfonts}       
\usepackage{nicefrac}       
\usepackage{microtype}      
\usepackage{lipsum}		
\usepackage{graphicx}
\usepackage{doi}
\usepackage{amsmath, amsthm, amssymb}
\usepackage{enumitem}

\usepackage{multirow}
\usepackage{footnote}
\usepackage{float}
\usepackage{comment}
\usepackage{xcolor}
\usepackage{amssymb}
\usepackage{rotating}
\usepackage{natbib}
\usepackage{hyperref}
\hypersetup{colorlinks, citecolor={blue!70!black}, linkcolor={blue!70!black}, urlcolor={blue!80!black}}
\usepackage{titlesec}

\usepackage{listings}
\usepackage[verbose=true,letterpaper]{geometry}
\AtBeginDocument{
  \newgeometry{
    textheight=9in,
    textwidth=6.5in,
    top=1in,
    headheight=14pt,
    headsep=25pt,
    footskip=28pt
  }
}

\usepackage{setspace}

\newcommand{\bE}{\mathbb E}
\newcommand{\bV}{\mathbb V}
\newcommand{\Cov}{\mathrm{Cov}}

\newcommand{\cor}{\mathrm{cor}}

\newcommand{\expit}{\mathrm{expit}}

\newcommand{\IPW}{\mbox{\tiny{IPW}}}

\newcommand{\DE}{\mbox{\tiny{DE}}}
\newcommand{\RCT}{\mbox{\tiny{RCT}}}
\newcommand{\obs}{\mbox{\scriptsize{obs}}}
\newcommand{\SC}{\mbox{\tiny{Schoen}}}
\newcommand{\FR}{\mbox{\tiny{Freed}}}
\newcommand{\HL}{\mbox{\tiny{HL}}}

\usepackage{lineno}
\usepackage{adjustbox, multirow}

\theoremstyle{plain}
\newtheorem{theorem}{Theorem}
\newtheorem{lemma}{Lemma}
\newtheorem*{assumption}{Assumption}
\newtheorem{proposition}{Proposition}
\newtheorem{corollary}{Corollary}

\theoremstyle{definition}

\newtheorem{algorithm}{Algorithm}

\theoremstyle{remark}
\newtheorem{remark}{Remark}

\makeatletter
\renewenvironment{proof}[1][]{%
  \par\pushQED{\qed}%
  \normalfont \topsep6\p@\@plus6\p@\relax
  \trivlist
  \item[\hskip\labelsep \itshape Proof\ifx&#1&\else~#1\fi\@addpunct{.}]\ignorespaces
}{%
  \popQED\endtrivlist\@endpefalse
}
\makeatother

\DeclareMathOperator\logit{\mathrm logit}
\allowdisplaybreaks

\title{Sample size and power calculations for causal inference with time-to-event outcomes}

\author[1]{Chengxin Yang}
\author[2]{Bo Liu}
\author[2]{Fan Li\thanks{Corresponding author (Fan Li, email: fl35@duke.edu). The authors gratefully acknowledge Jay Lusk, Fan Li (Yale), Laine Thomas, Steve Cole, Jing Ning, Mei-Cheng Wang, Ying Qing Chen for helpful comments.}}
\affil[1]{Department of Biostatistics and Bioinformatics, Duke University}
\affil[2]{Department of Statistical Science, Duke University}

\begin{document}

\def\spacingset#1{
  \renewcommand{\baselinestretch}{#1}\small\normalsize
  \setlength{\abovedisplayskip}{4pt}
  \setlength{\belowdisplayskip}{6pt}
  \setlength{\abovedisplayshortskip}{2pt}
  \setlength{\belowdisplayshortskip}{4pt}
}

\spacingset{1.5}

\renewenvironment{abstract}
  {\vspace{0pt}%
   {\centering\textbf{\large Abstract}\par}\vspace{8pt}%
   \normalfont\normalsize}
  {\par}

\date{}
\maketitle

\begin{abstract}

This paper develops power and sample size formulas for causal inference with time-to-event outcomes. The target estimand is the marginal hazard ratio: the coefficient of a marginal structural Cox proportional hazard model with treatment as the only predictor. We extend the robust sandwich variance theory and derive the analytical form of the asymptotic variance for the inverse probability weighted partial likelihood estimator. Building on this, we derive a new analytical sample size formula valid at any prespecified effect size, applicable to both randomized trials and observational studies. For randomized trials, the formula requires only the canonical inputs of treatment proportion, effect size, and event rate. The new formula corrects the mischaracterization of classic log-rank-based formulas. For observational studies, one additional input suffices: an overlap coefficient summarizing covariate similarity between comparison groups. We further develop a variance inflation approach applicable to any propensity score balancing weights, anchored to the corrected baseline variance. We provide an online calculator and an R package \texttt{PSpower} to implement the method.

\end{abstract}

\vspace{36pt}

\textbf{Keywords}: Cox model; inverse probability weighting; marginal hazard ratio; observational studies; variance inflation factor

\newpage
\spacingset{1.5}

\section{Introduction}\label{sec:intro}

The design of a randomized trial routinely requires power and sample size calculations, which rely on quantifying the variance of the chosen test statistic to ensure sufficient statistical power for detecting an anticipated treatment effect \citep{chow2017sample}. A distinctive feature of power analysis is the common absence of individual data; investigators must work with, at best, a few summary quantities. Time-to-event outcomes are ubiquitous in medical research. The Cox proportional hazards model \citep{cox1972regression} is the most widely used method for their analysis, with the hazard ratio as the primary estimand. Power analysis for the hazard ratio has traditionally been based on the log-rank test, for which \cite{freedman1982tables} and \cite{schoenfeld1983sample} established foundational formulas, {both assuming away random censoring}. A known limitation shared by this class of methods, including later extension to complex designs \citep{lakatos1988sample}, is that they are derived under a null effect. Therefore, they systematically miscalculate the required sample size when the effect is not zero \citep{lee2026asymptotic}. However, the magnitude of this miscalculation has not been analytically quantified.

When randomized trials are not feasible, researchers have increasingly used observational data to emulate a target trial. Sample size calculations are also critical in the design of observational studies. Specifically, in prospective studies, recruiting participants is usually time-consuming and costly; therefore, investigators must determine the minimum sample size to achieve the target power before the study begins. Even in retrospective studies with existing data, power analysis is still crucial for understanding the scope of data required and guide data selection in the design stage. For example, commercial population health databases often charge users costs based on the size of the data; a pre-study power calculation is then essential for study planning with limited resources. Moreover, the same data source is often used for investigating multiple hypotheses, e.g. effects on secondary outcomes in clinical research. In these cases, power analysis can help researchers decide the scope of their investigation and the number of hypotheses they can test simultaneously. 

The central challenge to causal inference with observational data is confounding; adjusting for confounders inflates variances of the estimated effects, increasing the sample size required to achieve the same power. However, confounding makes it much more challenging to develop a principled power analysis for observational studies than for randomized trials. Existing approaches broadly fall into two families. The first quantifies the variance inflation relative to the targeted randomized trial \citep{hsieh2000sample, scosyrev2019power, austin2021informing, shook2022power}, but generally requires individual data that are rarely available at the design stage. The second simulates individual data matched to pre-specified summary statistics \citep{mukherjee2003estimating, qin2024sample}, but usually relies on \emph{ad hoc} data-generating models. For non-survival outcomes, \cite{liu2025sample} proposed a third approach: deriving the sample size formula by reducing the asymptotic variance of a consistent estimator under mild modeling assumptions on the propensity score and outcome. Their formula requires, in addition to the standard inputs in a randomized trial, two parameters that quantify the confounder-treatment and confounder-outcome association. 

Censoring of time-to-event outcomes create additional challenges for the above approaches. Variance inflation approaches, in addition, inherit a baseline problem from randomized trials: the log-rank formulas underlying these methods are derived under a null effect and can severely misquantify the variance under non-zero effects or imbalanced designs \citep{scosyrev2019power}. Moreover, existing methods \citep{hsieh2000sample, scosyrev2019power} target at the conditional hazard ratio, which does not carry a causal interpretation due to the non-collapsibility of hazard ratio \citep{hernan2010hazards}. For causal inference, an appropriate estimand is the marginal hazard ratio, defined as the coefficient of a structural Cox model with treatment as the only predictor \citep{martinussen2020subtleties, fay2024causal}, which must be estimated via the weighted partial likelihood with robust sandwich variance \citep{hernan2000marginal}. Extending the analytical approach of \cite{liu2025sample} to this setting is demanding: the asymptotic theory of the weighted Cox model is substantially more involved, and establishing a tractable variance expression requires new theoretical development.

This paper develops {a unified analytical method} for power and sample size calculations for the marginal hazard ratio with time-to-event outcomes, applicable to both randomized trials and observational studies. We first derive the closed-form robust asymptotic variance of the weighted partial likelihood estimator, extending \cite{lin1989robust} and \cite{Lin2000CoxSurvey} (Section \ref{sec:Cox-variance}). Building on this, we derive an analytical sample size formula with inverse probability weights that is valid at any prespecified effect size. For randomized trials, the formula corrects for classic log-rank formulas, requiring only the canonical inputs of treatment proportion, effect size, and event rates, and yields conservative sample sizes under mild conditions (Section \ref{sec:v-rct}). For observational studies, one additional input suffices: the overlap coefficient of \cite{liu2025sample}, which characterizes the propensity score distribution without requiring individual data (Section \ref{sec:v-obs}). We further develop a variance inflation approach applicable to any weighting scheme, anchored to the corrected baseline variance. We confirm the performance of our formulas through extensive numerical studies (Section \ref{sec:simu}). We implement the proposed method in an R package \texttt{PSpower}, available at \url{https://CRAN.R-project.org/package=PSpower}, and an online calculator, available at \url{https://www2.stat.duke.edu/~fl35/pspower.html}.

\section{Variance of the weighted marginal Cox model}\label{sec:Cox-variance}

\subsection{Setup and the marginal Cox model}
Consider a study with a right-censored time-to-event outcome of units indexed by $i=1, \ldots, n$ and any maximum follow-up time $t^\dagger < \infty$ such that $t \in [0, t^\dagger]$. For causal inference, we adopt the potential outcome framework. Let $Z_i \in \{0, 1\}$ denote the treatment indicator, $T_i(z)$ the potential time of event of interest under treatment $z$, $C_i$ the {random} censoring time, and $X_i$ baseline covariates. Throughout, $z=1$ for treatment and $0$ for control. Following \cite{lin1989robust}, each unit $i$ is an i.i.d. draw from a common population distribution over $(T_i(0), T_i(1), C_i, Z_i, X_i)$. Let $r =\Pr(Z_i=1)$ be the \emph{treatment proportion}. We further denote $T^*_i$ the factual event time, $T_i = \min\{T^*_i, C_i, t^\dagger \}$ the observed time, and $\delta_i = \mathbb{I}(T^*_i = T_i)$ the event indicator; $\delta_i = 1$ if event and $0$ if censored. {The at-risk process is $Y_i(t) = \mathbb{I}(T_i \geq t)$}; $Y_i(t)=1$ if unit $i$ has not experienced any event or censoring by time $t$. The risk set at $t$ is $\mathcal{R}(t) = \{i : Y_i(t) = 1\}$. Denote $G_z(t)=\Pr(C_i \geq t|Z_i=z)$ the random censoring distribution and $d_z = \Pr(\delta_i=1 | Z_i=z)$ the observed \emph{event rate} of arm $z$; this implies the combined event rate $d=\Pr(\delta_i=1)=rd_1+(1-r)d_0$. Equivalently, the \emph{censoring rate} of arm $z$ is $1-d_z$, and the combined is $1-d$. 
The observed data is therefore a collection of $n$ i.i.d. tuples $\{V_i\}_{i=1}^n$, where $V_i = (T_i, \delta_i, Z_i, X_i)$ follows a common distribution. Throughout, all expectations are taken over $V_i$ unless otherwise stated.

Denote the marginal potential survival, hazard, and cumulative hazard functions under treatment $z$ by $F_z(t) = \Pr(T_i(z) \geq t)$, $\lambda_z(t) = -(\log F_z(t))'$, and $\Lambda_z(t) = \int_0^t \lambda_z(u) du$, respectively. The marginal Cox model maintains the proportional hazard assumption (PH) that $\lambda_1(t)/\lambda_0(t)$ is constant with respect to $t$, imposing the following marginal structural model on $T(z)$ ($z=0,1$):
\begin{equation}\label{eq:marginal-Cox}
    \lambda_1(t) = \lambda_0(t) \exp(\tau), 
\end{equation}
where $\exp(\tau)$ is the causal estimand---the marginal hazard ratio between the treatment and control groups. 
Model~\eqref{eq:marginal-Cox} excludes all covariates and serves as a working model for defining the estimand rather than a data-generating model. Regardless of whether Model~\eqref{eq:marginal-Cox} holds exactly, $\exp(\tau)$ is a well-defined population-level time-averaged hazard ratio over the study period. 
For detailed discussion, see \cite{xu2000AverageHazard}, \cite{martinussen2020subtleties}, and \cite{fay2024causal}. 
In observational studies, directly fitting Model~\eqref{eq:marginal-Cox} to the observed data generally leads to biased estimates of $\tau$ because of confounding. In addition, it cannot be fitted as a multivariate Cox model with covariates due to the non-collapsibility of the hazard ratio, which jeopardizes the causal interpretation of $\tau$ \citep{hernan2010hazards}. Instead, Model~\eqref{eq:marginal-Cox} is fitted with weights \citep{hernan2000marginal}. To this end, let $e(x)=\Pr(Z=1|X=x)$ be the propensity score \citep{rosenbaum1983central}. In randomized trials, $e_i=e(X_i) \equiv r$; in observational studies, $e_i$ is unknown and varies across $i$. Let $w(z,x)=w(Z=z,X=x)$ be a weight function and $w_{i}=w(Z_i,X_i)$ the weight for unit $i$. 

We maintain three standard identification assumptions for $\tau$: (A1) SUTVA or consistency, $T^*_i = Z_i T_i(1) + (1-Z_i)T_i(0)$; (A2) unconfoundedness, $\{T_i(0), T_i(1)\} \perp Z_i \mid X_i$; (A3) overlap (positivity), $0 < e(X_i) < 1$ for all $i$. In addition, we make an assumption on the censoring mechanism: 
\begin{assumption}
(A4) {Arm-specific} independent censoring: $\{T_i(0), T_i(1), X_i\} \perp C_i \mid Z_i$.
\end{assumption}
Under Assumptions A1-A4, a consistent estimator of $\tau$ is $\widehat{\tau}_n$, the solution to the weighted Cox partial likelihood estimating equation \citep{cox1975partial, lin1989robust, Binder1992CoxSurvey}:
\begin{equation}\label{eq:estimating-equation}
    U_n(\tau) = \sum_{i=1}^n \psi_i^*(\tau)=0, \qquad \psi_i^*(\tau) = w_i \delta_i \left[Z_i - \frac{\sum_{l \in \mathcal{R}_i} w_l \exp(\tau Z_l)Z_l}{\sum_{l \in \mathcal{R}_i} w_l \exp(\tau Z_l)}\right].
\end{equation}
The weight $w_i$ depends on the target population. For the observed (combined) population, $w_i$ is the inverse probability weight, normalized within each group: $w_i = r{Z_i}/{e_i} + (1-r)(1 - Z_i)/(1 - e_i)$. \cite{li2018balancing} proposed the unified class of balancing weights that generalizes $w_i$ to other target populations.

Denote $\tau_0$ the true marginal log hazard ratio of the target population, and $V$ the asymptotic variance of $\widehat{\tau}_n$. Standard calculations show that, for a one-sided level-$\alpha$ Wald hypothesis test of the effect size $\tau_0$, the minimal sample size of \emph{units} to reach power $\beta$ is 
\begin{equation}\label{eq:sample-size-general}
    N = (z_{1 - \alpha} + z_\beta)^2 {V}/ {\tau_0^2},
\end{equation}
where $z_c$ is the $c$-quantile of the standard normal distribution. This shows that the power calculation hinges on correct estimation of the asymptotic variance $V$.

\subsection{Asymptotic variance of $\widehat{\tau}_n$}

Estimating $V$, the asymptotic variance of $\widehat{\tau}_n$, is particularly challenging because {it needs to be robust to model misspecification}: Model~\eqref{eq:marginal-Cox} is rarely the data-generating process, so variance estimators based solely on the empirical partial score $\psi^*_i(\tau)$ are not applicable \citep{Anderson1982Cox}. \cite{lin1989robust} proposed a sandwich variance estimator robust to model misspecification for the unweighted Cox model. 
\cite{Binder1992CoxSurvey} extended it to survey settings with known sampling weight, taking the form $[\sum_{i=1}^n \psi^*_i(\tau)]^{-2}[\sum_{i=1}^n \eta^*_i(\tau)^2]$, where $\eta_i^*(\tau)$ is the weighted empirical influence function.

Because power analysis is conducted at the design stage prior to data collection, our central task is to approximate $V$ using only a few expected summary quantities. This requires an analytical expression of $V$, {which would reveal under what population-level assumptions $V$ reduces to a tractable function of these summary inputs}. Deriving such an expression, however, is nontrivial: Binder's sandwich estimator is built on empirical score and influence functions that are not i.i.d., because each unit's contribution depends on all other units through the risk set. As a result, $V$ cannot be expressed as a closed-form function of $\psi_i^*(\tau)$ and $\eta_i^*(\tau)$ or their moments. \cite{Lin2000CoxSurvey} built a rigorous asymptotic theory of \cite{Binder1992CoxSurvey} but expressed $V$ as a probability limit rather than analytically. Further extensions appear in \cite{Tian2005CoxVarycoef, Breslow2007WeightedSemiparametrics, Zhang2012DRobustMLT, mao2018propensity, hajage2018closed, shu2021variance}. However, none provided an analytical expression for $V$ in the weighted setting. Therefore, we derive this expression in Theorem~\ref{theorem1}.

For the setup, following \cite{lin1989robust}, we define for $k \in \{0,1,2\}$:
\begin{equation}\label{eq:riskset-averages}
S_k^*(\tau, t) = n^{-1} \sum_{j=1}^n w_j Y_j(t) \exp(\tau Z_j) Z_j^k, \quad s_k(\tau, t) = \mathbb{E}\left[w_j Y_j(t)\exp(\tau Z_j) Z_j^k\right],
\end{equation}
as the empirical and population weighted risk-set averages, respectively, with weight $w_j\exp(\tau Z_j)$. Note that $\psi^*_i(\tau, T_i) = w_i\delta_i[Z_i - S^*_1(\tau, T_i)/S^*_0(\tau, T_i)]$, and $s_k(\tau, t)$ is deterministic in $\tau$ and $t$. When $k=0$, $s_0(\tau,t)$ represents the weighted average of the full risk set at time $t$; when $k=1$, $s_1(\tau,t)$ is its analogue within the treated arm. The ratio $s_1/s_0$ is the weighted proportion of treated in the risk set; we define it and its empirical counterpart as:
\begin{equation}\label{eq:riskset-ratio}
\pi_n^*(\tau, t) = \frac{S_1^*(\tau, t)}{S_0^*(\tau, t)}, \qquad \pi(\tau, t) = \frac{s_1(\tau, t)}{s_0(\tau, t)}.
\end{equation}
We maintain the following standard regularity conditions throughout: (R1) Andersen-Gill conditions A–D in \cite{Anderson1982Cox}, adjusted for weighted partial likelihood; (R2) regular weights: $\bE[w_i^2] < \infty$. Theorem 1 is as follows, with proof in Appendix~A.1.
\begin{theorem}
\label{theorem1}
Under regularity conditions, $\sqrt{n}(\widehat{\tau}_n - \tau_0)$ is asymptotically normal with mean $0$ and variance $V = A(\tau_0)^{-2}B(\tau_0)$, where $A(\tau_0) = \mathbb{E}[\psi_i(\tau_0)]$, $B(\tau_0) = \mathbb{E}[\eta_i^2(\tau_0)]$, and
\begin{align*}
\psi_i(\tau_0) &= w_i\delta_i\left[ s_2(\tau_0, T_i) / s_0(\tau_0, T_i) - \pi^2(\tau_0, T_i)\right], \\
\eta_i(\tau_0) &= w_i\delta_i[Z_i - \pi(\tau_0, T_i)] - w_i\exp(\tau_0 Z_i)\int_0^{T_i}[Z_i - \pi(\tau_0, t)]\,d\Lambda_0(t).
\end{align*}
\end{theorem}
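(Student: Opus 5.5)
The proof follows the M-estimation route of \cite{lin1989robust}, adapted to the weighted partial score as in \cite{Lin2000CoxSurvey}. Throughout, the weights $w_i$ are treated as known functions of $(Z_i,X_i)$, e.g.\ the true propensity score.

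\textbf{Step 1: consistency.} Under (R1)--(R2), the law of large numbers applies to $S_k^*(\tau,t)$, and the Glivenko--Cantelli property of the class $\{w\,Y(t)\exp(\tau Z)Z^k\}$ upgrades this to uniform convergence over $t\in[0,t^\dagger]$ and $\tau$ in a compact set. Hence $S_k^*(\tau,t)\to s_k(\tau,t)$ uniformly, and $n^{-1}U_n(\tau)$ converges uniformly to a deterministic function $u(\tau)=\mathbb{E}\{w_i\delta_i[Z_i-\pi(\tau,T_i)]\}$. The map $\tau\mapsto n^{-1}U_n(\tau)$ is monotone decreasing, since its derivative is minus a weighted variance. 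Therefore $\widehat\tau_n$ converges to the unique root of $u$, which is the estimand $\tau_0$, consistent with the paper's definition of $\tau_0$ as the time-averaged marginal log hazard ratio.

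\textbf{Step 2: Taylor expansion.} Expanding gives
\[
\sqrt n(\widehat\tau_n-\tau_0)=-\bigl[n^{-1}\partial_\tau U_n(\tilde\tau)\bigr]^{-1}n^{-1/2}U_n(\tau_0),
\]
with $\tilde\tau$ between $\widehat\tau_n$ and $\tau_0$. A direct computation gives
\[
-n^{-1}\partial_\tau U_n(\tau)=n^{-1}\sum_i w_i\delta_i\bigl[S_2^*/S_0^*-(S_1^*/S_0^*)^2\bigr](\tau,T_i).
\]
Uniform convergence and consistency then give convergence in probability to $\mathbb{E}[\psi_i(\tau_0)]=A(\tau_0)$. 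Here $Z^2=Z$, so $s_2=s_1$ and the bracket is $\pi(1-\pi)$, which keeps $A(\tau_0)>0$.

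\textbf{Step 3: i.i.d.\ representation of the score (main obstacle).} I would write
\[
n^{-1/2}U_n(\tau_0)=n^{-1/2}\sum_i w_i\delta_i[Z_i-\pi(\tau_0,T_i)]-n^{-1/2}\sum_i w_i\delta_i\bigl[\pi_n^*(\tau_0,T_i)-\pi(\tau_0,T_i)\bigr].
\]
The first sum is already i.i.d. The second is the difficult term, because every summand depends on all units through the risk set. The first thing I would try is to linearize $\pi_n^*-\pi$ to first order:
\[
\pi_n^*-\pi\approx \frac{(S_1^*-s_1)-\pi(S_0^*-s_0)}{s_0}=\frac{1}{n}\sum_j \frac{w_jY_j(t)\exp(\tau_0Z_j)[Z_j-\pi(\tau_0,t)]}{s_0(\tau_0,t)}.
\]
The remainder is $o_p(n^{-1/2})$ uniformly in $t$, by a Donsker argument for the risk-set processes. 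Substituting makes the second term a V-statistic
\[
n^{-3/2}\sum_{i,j} w_i\delta_i\, \frac{w_jY_j(T_i)\exp(\tau_0Z_j)[Z_j-\pi(\tau_0,T_i)]}{s_0(\tau_0,T_i)}.
\]
Its H\'ajek projection onto $j$ replaces the average over $i$ by its expectation $\int_0^{t^\dagger}(\cdot)\,d\mathbb{E}[w_i\delta_i\mathbb{I}(T_i\le t)]$. Under Assumptions A1--A4 and with the weighting (identification of marginal potential-outcome hazards), $d\,\mathbb{E}[w_i\delta_i\mathbb{I}(T_i\le t)]/s_0(\tau_0,t)$ reduces to the weighted marginal baseline hazard $d\Lambda_0(t)$ under the working model with $\exp(\tau_0 Z)$. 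Using $Y_j(t)=\mathbb{I}(t\le T_j)$, the projection becomes
\[
n^{-1/2}\sum_j w_j\exp(\tau_0Z_j)\int_0^{T_j}[Z_j-\pi(\tau_0,t)]\,d\Lambda_0(t).
\]
The other projection term and the degenerate part are negligible. The $i$-projection has mean zero because $\mathbb{E}[w_jY_j(t)\exp(\tau_0 Z_j)(Z_j-\pi)]=s_1-\pi s_0=0$, and the degenerate part is $O_p(n^{-1/2})$. Care is needed precisely in identifying $d\Lambda_0$: if the PH working model fails, $d\Lambda_0$ must be read as the population ratio $d\mathbb{E}[w\,dN]/s_0(\tau_0,\cdot)$, and I would state it that way.

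\textbf{Step 4: CLT and Slutsky.} Together these give $n^{-1/2}U_n(\tau_0)=n^{-1/2}\sum_i\eta_i(\tau_0)+o_p(1)$, with $\eta_i$ i.i.d. Each $\eta_i$ has mean zero, since $u(\tau_0)=0$ and the projection has mean equal to the mean of the linear term. It also has finite variance, because $\mathbb{E}w_i^2<\infty$ and $\Lambda_0(t^\dagger)<\infty$. The CLT gives $N(0,B(\tau_0))$, and Slutsky's theorem yields $V=A(\tau_0)^{-2}B(\tau_0)$.
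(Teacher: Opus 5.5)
Your proposal is correct and follows essentially the same route as the paper: you split $n^{-1/2}U_n(\tau_0)$ into the i.i.d. term and the risk-set term, linearize $\pi_n^*$ around $(s_1,s_0)$, and treat the result as a U/V-statistic whose only non-vanishing H\'ajek projection yields the $\Lambda_0$-integral. You then close with the Taylor expansion, convergence of the derivative to $A(\tau_0)$, the CLT and Slutsky, exactly as the paper does.

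The differences are minor. You add an explicit consistency argument via monotonicity of the score. You also evaluate the projection by identifying $d\,\mathbb{E}[w_i\,dN_i(t)] = s_0(\tau_0,t)\,d\Lambda_0(t)$ directly from A1--A4 and the weights, rather than through the paper's martingale-compensator lemma. This makes explicit, as in Lin and Wei, how $d\Lambda_0$ must be read when the working model is misspecified.
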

\begin{remark} 
The functions $\psi_i(\tau_0)$ and $\eta_i(\tau_0)$ are population-level influence functions depending only on the unit $i$, and are therefore i.i.d. across units, in contrast to their empirical counterparts $\psi_i^*$ and $\eta_i^*$, which involve the full risk set. Theorem \ref{theorem1} extends \cite{lin1989robust}: setting $w_i \equiv 1$ recovers their asymptotic variance exactly.
\end{remark}

\begin{remark}
Theorem \ref{theorem1} imposes no restrictions on $w$ beyond measurability with respect to $\sigma(Z, X)$ and the regularity conditions. In particular, it accommodates all propensity score weights, including inverse probability, overlap, and the treated population weights.
\end{remark}

\subsection{Asymptotic variance with inverse probability weights}\label{sec:v-ipw}

This section focuses on $V_{\IPW}$, where $w$ is the normalized inverse probability weight. Theorem~\ref{theorem1} reveals that computing $V$ requires the full trajectory of $\pi(\tau_0, t)$ over the study period, which further depends on the {marginal survival functions and arm-specific censoring distributions}. These are unknown prior to data collection and cannot be determined by limited summary inputs. Therefore, assumptions on time-dependent behaviors of $\pi(\tau_0, t)$ are necessary for sample size calculation. We impose the following assumption. 
\begin{assumption}
(A5) Proportional risk-set: $\pi(\tau_0,t) \equiv \pi(\tau_0,0)$ for all $t\in[0,t^\dagger]$.
\end{assumption}

Let $F_z(t|X_i) = \Pr(T_i(z) \geq t \mid X_i)$ be the conditional survival function under treatment $z$. Theorem~\ref{theorem2} shows that $V_{\IPW}$ reduces to a closed-form expression in a few summary inputs alone and serves as the umbrella result for this paper. Corollaries~\ref{corollary:var-rct} and~\ref{corollary:var-obs} tailor the result to randomized trials and observational studies, respectively. All the proofs are given in Appendix~A.2.

\begin{theorem}
\label{theorem2}
Under assumption \textnormal{(A5)}, $V_{\IPW}=\widetilde{V}_{\IPW} + \epsilon$, where
\begin{equation*} 
    \widetilde{V}_{\IPW} = \left( \frac{\lambda_1 + \lambda_0}{d} \right)^2 \left[ r^2 \lambda_0^2 d_1 \, \bE\left\{ \frac{1}{e_i} \right\} + (1-r)^2 \lambda_1^2 d_0 \, \bE\left\{ \frac{1}{1-e_i} \right\} \right],
\end{equation*}
where 
\begin{align*}
    \lambda_1 = [r/(1 - r)]^{1/2}\exp(\tau_0 / 2), \quad \lambda_0 = 1 / \lambda_1.
\end{align*}
The residual $\epsilon$ is a signed linear functional of $C_z(t)=\Cov[F_z(t|X_i), w_z(X_i)]$, and its closed-form expression is given in the proof.
\end{theorem}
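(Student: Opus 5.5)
Under (A5), $\pi(\tau_0,t)\equiv\pi$ is a constant, and the plan is to evaluate $A$ and $B$ from Theorem~\ref{theorem1} in closed form. The first step is to identify $\pi$ at $t=0$. There $Y_j(0)=1$, and normalized inverse probability weights give $\bE[w_jZ_j]=r$ and $\bE[w_j(1-Z_j)]=1-r$. Hence $s_0(\tau_0,0)=re^{\tau_0}+1-r$ and $s_1(\tau_0,0)=re^{\tau_0}$, so
\[
\pi = \frac{re^{\tau_0}}{re^{\tau_0}+1-r} = \frac{\lambda_1}{\lambda_1+\lambda_0},
\qquad
1-\pi = \frac{\lambda_0}{\lambda_1+\lambda_0}.
\]
The second form follows by dividing numerator and denominator by $[r(1-r)e^{\tau_0}]^{1/2}$. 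Because $Z^2=Z$, we have $s_2=s_1$, so $s_2/s_0-\pi^2=\pi(1-\pi)=\lambda_1\lambda_0/(\lambda_1+\lambda_0)^2=(\lambda_1+\lambda_0)^{-2}$.

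It follows that $A=\bE[w_i\delta_i]\,(\lambda_1+\lambda_0)^{-2}$. By (A4) and (A1)--(A2), $\bE[w_i\delta_i\mid Z_i=1]$ reduces to $d_1$ plus a covariance term between $\Pr(\delta=1\mid X,Z=1)$ and $w_1(X)$. That conditional probability is $\int G_1\,d(-F_1(\cdot\mid X))$, so it is a linear functional of $F_1(t\mid X)$. Up to such covariance terms, $A\approx d/(\lambda_1+\lambda_0)^2$, which produces the prefactor $[(\lambda_1+\lambda_0)/d]^2$.

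For $B$, I would split $\eta_i$ by arm. For $Z_i=1$,
\[
\eta_i = w_i(1-\pi)\bigl[\delta_i - e^{\tau_0}\Lambda_0(T_i)\bigr] = w_i(1-\pi)M_i,
\]
where $M_i=\delta_i-\Lambda_1(T_i)$ is a counting-process martingale residual at $t^\dagger$. For $Z_i=0$ the analogous expression is $\eta_i=-w_i\pi M_i$. The marginal hazard appears here, but unit $i$'s intensity given $X_i$ is conditional, so $M_i$ is a martingale only in the marginal (unweighted) sense. The key identity I will use is that, for the marginal model, $\bE[M_i^2\mid Z=z]=\bE[\delta_i\mid Z=z]=d_z$ by the predictable variation. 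With $w_i=r/e_i$ on the treated arm,
\[
\bE[w_i^2M_i^2 Z_i] = r^2\,\bE\{Z_iM_i^2/e_i^2\},
\]
and conditioning on $X$ gives $r^2\,\bE[\bE(M^2\mid X,Z=1)/e]$. Replacing $\bE(M^2\mid X,Z=1)$ by its marginal value $d_1$ gives $r^2d_1\bE[1/e_i]$. Multiplying by $(1-\pi)^2=\lambda_0^2/(\lambda_1+\lambda_0)^2$, doing the same on the control arm, and combining with $A^{-2}$ yields exactly $\widetilde V_{\IPW}$. The $(\lambda_1+\lambda_0)^{-2}$ factor in $B$ cancels one of the two such factors from $A^{-2}$.

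The main obstacle is the residual $\epsilon$. It collects every discrepancy between conditional-on-$X$ quantities, namely $\bE(M^2\mid X,Z)$, $\bE(\delta\mid X,Z)$ and the cross term in $M$ involving $\Lambda_z$ versus $X$-specific hazards, and their marginal versions weighted by $w_z(X)$. I will write each conditional moment as an integral against $G_z(t)$ and $dF_z(t\mid X)$, for example $\bE[\delta\mid X,Z=z]=-\int_0^{t^\dagger}G_z\,dF_z(t\mid X)$, with a similar form for $\bE[\Lambda_z(T)\mid X,Z=z]$. By Fubini, each term of the form $\bE[w_z(X)\,h(X)]-\bE[w_z]\bE[h]$ becomes $\int(\cdot)\,dC_z(t)$ or $\int(\cdot)\,C_z(t)\,dt$. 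The squared-martingale term needs a predictable-variation identity that holds conditionally on $X$ under the true conditional intensity; its difference from the marginal-intensity version is again linear in $F_z(\cdot\mid X)$. A first-order expansion of $A^{-2}$ around $d$ then shows $\epsilon$ is a signed linear functional of $C_0,C_1$, with deterministic kernels built from $G_z$, $\Lambda_z$ and $\pi$. Keeping track of these kernels, and confirming that no term quadratic in $F_z(\cdot\mid X)$ survives, is the delicate bookkeeping. If a quadratic term does survive, I would absorb it by redefining the functional through $\Cov[F_z(t\mid X)F_z(s\mid X),w_z]$ or by linearizing.
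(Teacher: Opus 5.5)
Your $\pi=\lambda_1/(\lambda_1+\lambda_0)$, the reduction $A=\bE[w_i\delta_i]/(\lambda_1+\lambda_0)^2$, and the arm-wise form of $\eta_i$ are correct. But the plan does not deliver the theorem: an exact identity $V_{\IPW}=\widetilde V_{\IPW}+\epsilon$ with $\epsilon$ linear in $C_z(t)=\Cov[F_z(t\mid X_i),w_z(X_i)]$. There are two problems.

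First, your key identity $\bE[M_i^2\mid Z_i=z]=d_z$ is false under confounding. Within arm $z$ the event time has survival $\bE[F_z(t\mid X_i)\mid Z_i=z]\neq F_z(t)$, so $M_i=\delta_i-\Lambda_z(T_i)$ is not a martingale residual in the unweighted arm. It is one only when $X$ is averaged over its marginal law, i.e., in the IPW pseudo-population. That gives $\bE_X[m_z(X)]=\int_0^{t^\dagger}F_zG_z\,d\Lambda_z$, where $m_z(X)=\bE[M_i^2\mid X_i=X,Z_i=z]$.

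Second, if you keep $d_z=\Pr(\delta_i=1\mid Z_i=z)$ literally, the discrepancies you push into $\epsilon$ are not of the form $\bE[w_zh]-\bE[w_z]\bE[h]$. For instance, $\bE[w_i\delta_i]-d=\Cov\{e(X_i),p_0(X_i)-p_1(X_i)\}$ with $p_z(X)=\Pr(\delta_i=1\mid X_i=X,Z_i=z)$. This is a covariance with $e$, not with $w_z$, so Fubini cannot turn it into a functional of $C_z$. Because it sits inside $A^{-2}$, your first-order expansion also gives linearity only approximately.

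The missing step is that normalized IPW has $e(X)w_1(X)\equiv r$ and $\{1-e(X)\}w_0(X)\equiv 1-r$. Hence $\bE[w_i\delta_i]=r\bE_X[p_1(X)]+(1-r)\bE_X[p_0(X)]$ and $\bE[w_i^2Z_iM_i^2]=r\bE_X[w_1(X)m_1(X)]$, with $X$ averaged over its marginal law. The paper equates $d_z$ with $\bE_X[p_z(X)]=\int_0^{t^\dagger}F_zG_z\,d\Lambda_z$, the arm-$z$ event rate of the pseudo-population; this equals $\Pr(\delta_i=1\mid Z_i=z)$ only absent confounding. Your argument needs the same identification, since without it the residual is not a functional of $C_z$. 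With it, $A=d/(\lambda_1+\lambda_0)^2$ carries no residual and $\bE_X[m_z(X)]=d_z$. Then $\epsilon=A^{-2}[r(1-\pi)^2\Cov\{w_1,m_1\}+(1-r)\pi^2\Cov\{w_0,m_0\}]$ is exactly linear, with no expansion needed.

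Your worry about quadratic terms is unfounded. Since $\Lambda_z$ is deterministic, $M_i^2=\delta_i\{1-2\Lambda_z(T_i)\}+\Lambda_z(T_i)^2$, and its conditional mean given $X$ is affine in $F_z(\cdot\mid X)$.

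This direct route does differ from the paper's treatment of $B$. The paper applies the predictable-variation identity of the marginal compensator after conditioning on $(Z_i,X_i)$. That effectively replaces $m_z(X)$ by $\int_0^{t^\dagger}F_z(t\mid X)G_z(t)\,d\Lambda_z(t)$. The two differ by $\int_0^{t^\dagger}G_z(t)\{1-2\Lambda_z(t)\}\{f_z(t\mid X)\,dt-F_z(t\mid X)\,d\Lambda_z(t)\}$. This averages to zero over $X$ but is generally nonzero for individual $X$; it vanishes identically when conditional and marginal hazards coincide. Your computation therefore gives the same $\widetilde V_{\IPW}$, but a residual with an extra linear term rather than the paper's kernel $G_z\,d\Lambda_z$.
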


\begin{remark}
    The quantity $\pi(\tau_0, t)$, defined in Equation~\eqref{eq:riskset-ratio}, is the weighted proportion of treated units in the risk set at time $t$. Under the marginal Cox model~\eqref{eq:marginal-Cox}, it varies in $t$ through $[F_0(t)]^{\exp(\tau_0)-1}$, {$G_0(t)$, and $G_1(t)$}; see Appendix A.2. Assumption A5 anchors $\pi(\tau_0, t)$ at $t=0$. Because $F(0)=G(0)=1$, we have $\pi(\tau_0,0) = \lambda_1/(\lambda_1+\lambda_0)$, fully determined by the design-stage inputs $(r, \tau_0)$. This assumption is implicitly invoked by the log-rank formulas of \cite{freedman1982tables} and \cite{schoenfeld1983sample}, which assume $\tau_0=0$ and no random censoring, under which $\pi(\tau_0, t)$ is time-invariant and reduces to $\pi(\tau_0, 0)$.
\end{remark}

\begin{remark}
    The observed event rate $d$ reflects all sources of censoring, including attrition during the study and censoring at the end of follow-up. At the design stage, $d$ can usually be estimated without individual-level data. For randomized trials, prior studies such as earlier-phase trials report event rates of each arm as part of standard trial reporting. For observational studies, event rates are estimable from preliminary queries to health databases on the targeted clinical events without requesting the full data.
\end{remark}

The proportional risk-set assumption is the only principled design-stage choice. Specifying $\pi(\tau_0,t)$ for $t>0$ requires knowledge of $F(t)$ and $G(t)$, which are the objects the study is designed to estimate and has yet to produce. Any design-stage value assigned to $\pi(\tau_0,t)$ at $t>0$ is therefore an extrapolation, and truly knowing these trajectories would render the study itself unnecessary. Among all time points in the study period, $t=0$ is the unique point at which $\pi(\tau_0,t)$ is determined by design-stage information alone, and A5 anchors to this principled specification. Moreover, anchoring to $t=0$ is practically favorable. Under mild conditions on the treatment proportion and $F_0(t^\dagger)$, the control survival at the end of follow-up, $\widetilde{V}_{\IPW}$ tends to overestimate the true variance $V_{\IPW}$, providing extra robustness against uncertainty in design-stage inputs. This is formally established for randomized trials in Appendix~A.2 and supported by simulations for observational studies.

\section{Sample size calculation in randomized trials}\label{sec:v-rct}

In a randomized trial, $e_i \equiv r$ for all $i$ and all propensity score balancing weights reduce to $w_{i} \equiv 1$; therefore, $\mathbb{E}[1/e_i]=1/r$ and $\mathbb{E}[1/(1-e_i)]=1/(1-r)$. Theorem~\ref{theorem2} is thus applicable, and we denote the corresponding variance as $V_{\RCT}$. Moreover, $C_z(t)\equiv 0$ for $z=0,1$, giving an exact $\epsilon=0$. Then the following corollary  is immediate. 
\begin{corollary}[Randomized trials] \label{corollary:var-rct}
For a randomized controlled trial, it holds 
\begin{equation}\label{eq:V-ipw-rct}
        V_{\RCT}=\widetilde{V}_{\RCT} = (\lambda_1 +\lambda_0)^2 \left[r\lambda_0^2 d_1 + (1 - r)\lambda_1^2 d_0 \right] \big/ d^2 .
    \end{equation}
\end{corollary}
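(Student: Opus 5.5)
The corollary follows by specializing Theorem~\ref{theorem2} to the randomized design, so I will check its two ingredients: the leading term $\widetilde{V}_{\IPW}$ and the residual $\epsilon$.

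\emph{Leading term.} Randomization gives $e_i \equiv r$. The normalized inverse probability weight is then $w_i = rZ_i/r + (1-r)(1-Z_i)/(1-r) \equiv 1$. Substituting $\bE\{1/e_i\} = 1/r$ and $\bE\{1/(1-e_i)\} = 1/(1-r)$ into $\widetilde{V}_{\IPW}$ gives
\[
\left(\frac{\lambda_1+\lambda_0}{d}\right)^2\left[ r^2\lambda_0^2 d_1 \cdot \frac{1}{r} + (1-r)^2\lambda_1^2 d_0\cdot\frac{1}{1-r}\right].
\]
This simplifies directly to \eqref{eq:V-ipw-rct}.

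\emph{Residual.} It remains to show $\epsilon = 0$ exactly. Theorem~\ref{theorem2} states that $\epsilon$ is a signed linear functional of $C_z(t) = \Cov[F_z(t\mid X_i), w_z(X_i)]$, $z = 0,1$. In a trial, $w_z(x) = w(z,x)$ is constant in $x$ (equal to $1$). The covariance of any integrable random variable with a constant is zero, so $C_z(t) \equiv 0$ for all $t \in [0,t^\dagger]$. Here $F_z(t\mid X_i)$ is integrable because it lies in $[0,1]$. A linear functional of the zero function vanishes, so $\epsilon = 0$.

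\emph{Role of (A5) and the main obstacle.} The weighted estimator coincides with the ordinary Cox estimator because $w_i \equiv 1$, which is consistent with the remark that Theorem~\ref{theorem1} recovers \cite{lin1989robust}. The equality $V_{\RCT} = \widetilde{V}_{\RCT}$ still relies on Theorem~\ref{theorem2}, and hence on (A5): by the remark following Theorem~\ref{theorem2}, $\pi(\tau_0,t)$ generally varies with $t$ even in a trial. Under (A5), $\pi \equiv \lambda_1/(\lambda_1+\lambda_0)$.

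The only genuine obstacle is confirming that $\epsilon$ really is linear in $(C_0, C_1)$ with no constant term. In particular, the decomposition in Theorem~\ref{theorem2} must isolate every dependence on covariate heterogeneity into $C_z$. This means factoring terms such as $\bE[w_i F_z(t\mid X_i)]$ into $\bE[w_z]F_z(t) + C_z(t)$ inside the closed-form expression for $\epsilon$ given in the proof of Theorem~\ref{theorem2}. Once that structure is confirmed, the corollary is immediate.
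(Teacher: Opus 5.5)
Your proposal is correct and follows the paper's proof: substitute $\bE[1/e_i]=1/r$ and $\bE[1/(1-e_i)]=1/(1-r)$ into $\widetilde{V}_{\IPW}$, and note that $w_{z,i}\equiv 1$ forces $C_z(t)\equiv 0$, hence $\epsilon=0$. The obstacle you flag is already settled in the paper's reduction of $B(\tau_0)$, where $\epsilon=A(\tau_0)^{-2}B_\epsilon(\tau_0)$ and $B_\epsilon(\tau_0)$ is explicitly a linear combination of the integrals $\int_0^\infty C_z(t)G_z(t)\,d\Lambda_0(t)$ with no constant term; you are also right that $V_{\RCT}=\widetilde{V}_{\RCT}$ holds only under (A5).
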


\begin{remark}
    Under equal censoring rates across arms that $d_1=d_0=d$, $\widetilde{V}_{\RCT}$ reduces to
\begin{equation}\label{eq:V-ipw-rct-reduced}
    \widetilde{V}_{\RCT} = (\lambda_1 +\lambda_0)^2 \left[r\lambda_0^2 + (1 - r)\lambda_1^2 \right] \big/ d.
\end{equation}
\end{remark}

\begin{remark}
    The variance $\widetilde{V}_{\RCT}$ invokes no distributional assumptions. In contrast, \cite{lachin1981size} and \cite{lachin1986evaluation} derived sample size formulas assuming exponential survival times, which is subject to misspecification. It also creates a design-analysis mismatch when trials are not analyzed via the exponential model. \cite{lakatos1988sample} relaxed proportional hazards but required complex time-varying inputs, which impose a considerable design-stage burden and produce sample sizes sensitive to their specification \citep{Phadnis2021nonPH}.
\end{remark}

The variance $\widetilde{V}_{\RCT}$ is derived with the M-estimation theory \citep{tsiatis2006semiparametric} and is valid for any postulated $\tau_0$. In contrast, the widely adopted formulas of \cite{schoenfeld1983sample} and \cite{freedman1982tables} are both derived from the log-rank statistic under the null effect. Schoenfeld's canonical formula for the required number of \emph{events} is $\widetilde{V}_{\SC} = 1/[r(1-r)]$, with no dependence on $\tau_0$. Freedman proposed an alternative, $\widetilde{V}_{\FR} = \widetilde{V}_{\SC} \cdot  \left[ \tau_0 \{1-r+r\exp(\tau_0)\}/\{1-\exp(\tau_0)\} \right]^2$, which partially adjusts for non-null effect by evaluating the mean of the log-rank statistic at $\tau_0$, while retaining its variance under the null; the correction thus becomes inadequate as $|\tau_0|$ grows. Moreover, both formulas target the log-rank test on the existence of a difference in survival functions between arms, while practical analyses often require the Wald test on $\widehat{\tau}_n$ as the inferential goal.
Proposition~\ref{prop:logrank} formalizes the discrepancy among these formulas under a balanced design; for comparison with the log-rank formulas, we multiply $\widetilde{V}_{\RCT}$ from formula~\eqref{eq:V-ipw-rct-reduced} by $d$ so it also yields the number of events.
\begin{proposition} \label{prop:logrank}
For a balanced randomized trial ($r=1/2$), 
\begin{align}
    & \widetilde{V}_{\RCT}/\widetilde{V}_{\FR} = 2 \cosh(\tau_0) [\cosh(\tau_0) - 1] / \tau_0^2 \ \geq \ 1, \label{eq:freedman-vif} \\
    & \widetilde{V}_{\RCT}/\widetilde{V}_{\SC} = \cosh(\tau_0) [\cosh(\tau_0) + 1] /2 \ \geq \ 1, \label{eq:schoen-vif}
\end{align}
with $\widetilde{V}_{\RCT} \geq \widetilde{V}_{\FR} \geq \widetilde{V}_{\SC}$, where $\cosh(\tau_0)=(e^{\tau_0} + e^{-\tau_0})/2$, and equality throughout if and only if $\tau_0=0$.
\end{proposition}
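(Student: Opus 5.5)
The plan is a direct computation: specialize \eqref{eq:V-ipw-rct-reduced} and the two log-rank formulas to $r=1/2$, rewrite everything with hyperbolic functions, and then reduce each inequality to an elementary fact about $\cosh$ or $\tanh$. The comparison is on the number-of-events scale, so I compare $d\,\widetilde{V}_{\RCT}$ with $\widetilde{V}_{\SC}$ and $\widetilde{V}_{\FR}$. At $\tau_0=0$ the expressions involve $0/0$, so I interpret them by their continuous limits.

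\emph{Step 1 (closed forms).} With $r=1/2$ we have $\lambda_1=e^{\tau_0/2}$ and $\lambda_0=e^{-\tau_0/2}$, so $\lambda_1+\lambda_0=2\cosh(\tau_0/2)$. Also $r\lambda_0^2+(1-r)\lambda_1^2=\tfrac12(e^{-\tau_0}+e^{\tau_0})=\cosh(\tau_0)$. Hence
\begin{equation*}
d\,\widetilde{V}_{\RCT}=4\cosh^2(\tau_0/2)\cosh(\tau_0)=2\cosh(\tau_0)\,[\cosh(\tau_0)+1].
\end{equation*}
Schoenfeld's formula gives $\widetilde{V}_{\SC}=1/(r(1-r))=4$, and dividing yields \eqref{eq:schoen-vif} immediately. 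For Freedman, $\{1-r+re^{\tau_0}\}/\{1-e^{\tau_0}\}=\tfrac12(1+e^{\tau_0})/(1-e^{\tau_0})=-\tfrac12\coth(\tau_0/2)$. Therefore
\begin{equation*}
\widetilde{V}_{\FR}=\tau_0^2\coth^2(\tau_0/2)=\tau_0^2\,\frac{\cosh(\tau_0)+1}{\cosh(\tau_0)-1}.
\end{equation*}
Taking the ratio, the factor $\cosh(\tau_0)+1$ cancels and \eqref{eq:freedman-vif} follows.

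\emph{Step 2 (inequalities).} For \eqref{eq:schoen-vif}, $\cosh(\tau_0)\ge1$ with equality iff $\tau_0=0$, so $\cosh(\tau_0)[\cosh(\tau_0)+1]/2\ge1$. For \eqref{eq:freedman-vif}, I use the Taylor series $\cosh(\tau_0)-1=\sum_{k\ge1}\tau_0^{2k}/(2k)!\ge\tau_0^2/2$, with equality iff $\tau_0=0$. Combined with $\cosh(\tau_0)\ge1$, this gives $2\cosh(\tau_0)[\cosh(\tau_0)-1]/\tau_0^2\ge1$, with strict inequality for $\tau_0\neq0$ and limit $1$ as $\tau_0\to0$.

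For the middle link $\widetilde{V}_{\FR}\ge\widetilde{V}_{\SC}$, write
\begin{equation*}
\widetilde{V}_{\FR}/\widetilde{V}_{\SC}=\bigl[(\tau_0/2)/\tanh(\tau_0/2)\bigr]^2,
\end{equation*}
which is at least $1$ because $|\tanh x|\le|x|$, with equality only at $x=0$. Chaining the two ratios then gives $\widetilde{V}_{\RCT}\ge\widetilde{V}_{\FR}\ge\widetilde{V}_{\SC}$, with equality throughout iff $\tau_0=0$.

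\emph{Main obstacle.} The inequalities themselves are routine. The only delicate point is bookkeeping: the factor $d$ must be handled correctly so the comparison is genuinely on the events scale, and the sign and squaring in Freedman's factor must be simplified via $\coth$. The removable singularity at $\tau_0=0$ also needs to be stated explicitly, so that "equality iff $\tau_0=0$" is meaningful.
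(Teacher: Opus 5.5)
Your proposal is correct and follows the paper's own route: specialize to $r=1/2$, obtain $d\,\widetilde{V}_{\RCT}=2\cosh(\tau_0)[\cosh(\tau_0)+1]$, $\widetilde{V}_{\SC}=4$ and $\widetilde{V}_{\FR}=\tau_0^2\coth^2(\tau_0/2)$, then conclude with $\cosh(\tau_0)\ge 1$ and $\cosh(\tau_0)-1\ge\tau_0^2/2$ (the paper writes the latter as $2\sinh^2(\tau_0/2)\ge\tau_0^2/2$). Keep your extra step $\widetilde{V}_{\FR}/\widetilde{V}_{\SC}=[(\tau_0/2)/\tanh(\tau_0/2)]^2\ge 1$: the paper's proof only bounds the two ratios against $\widetilde{V}_{\RCT}$, and these alone do not order Freedman relative to Schoenfeld, so your argument supplies the middle link of the stated chain.
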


Proposition~\ref{prop:logrank} reveals that, under a balanced design, the log-rank formulas underestimate the required number of events relative to the proposed formula, more so as $|\tau_0|$ increases. Freedman improves only modestly over Schoenfeld: for hazard ratios of $0.8$, $0.6$, and $0.4$, the proposed $\widetilde{V}_{\RCT}$ requires $1.04$, $1.21$, and $1.78$ times as many events as Schoenfeld's, and $1.03$, $1.16$, and $1.55$ times Freedman's. Therefore, both may be underpowered at practical effect sizes. Under imbalanced designs such as an enriched treated arm, $\widetilde{V}_{\SC}$ is symmetric in $r$ about $1/2$ whereas the true variance is asymmetric at non-null effects; the ordering between $\widetilde{V}_{\RCT}$ and $\widetilde{V}_{\SC}$ is also undetermined. As a result, Schoenfeld can either underpower trials or lead to over-enrollment.  We examine this bidirectional pitfall numerically in Section \ref{sec:simu}.

In practice, users of power analysis usually desire the number of units rather than events, which depends on the censoring structure. The formulas of Schoenfeld and Freedman both assume away censoring, and their extensions to censored outcomes do not account for censoring rates that may differ across arms. In contrast, our proposed Formula~\eqref{eq:V-ipw-rct} adapts to arm-specific censoring with distinct inputs, and under equal rates across arms, reduces to formula~\eqref{eq:V-ipw-rct-reduced} requiring the single input $d$. Moreover, the proposed and Schoenfeld formulas both rely on observed event rates, whereas Freedman requires the marginal arm-specific survival probabilities at the end of follow-up and a rate of random censoring only. In general, these presuppose a full survival analysis of prior randomized data with sufficient follow-up that is impossible at the design stage. The modest improvement of Freedman over Schoenfeld, therefore, can come at a cost of higher burden of inputs from users.

\section{Sample size calculation in observational studies}\label{sec:v-obs}

\subsection{Analytical formula with inverse probability weights}\label{sec:v-ipw-obs}
In observational studies, $e_i$ is unknown and varies across units. Theorem~\ref{theorem2} therefore requires the propensity score distribution, which is rarely available before data collection. We address this by adopting the approach of \cite{liu2025sample}, which characterizes the distribution of $e_i$ through two summary inputs: (i) the treatment proportion $r$, and (ii) the overlap coefficient $\phi$, defined as the Bhattacharyya coefficient of the propensity score distribution between the two arms, $\phi := \int_0^1 \sqrt{f_1(u) f_0(u)} \mathrm{d} u$, where $f_z(u)$ is the density of $e_i|Z_i=z$ for $z=0,1$. Larger $\phi$ indicates better covariate overlap, and $\phi=1$ in randomized trials.

Under a logistic model for propensity scores, \cite{liu2025sample} showed that $e(x)$ is well approximated by a Beta distribution $\mathrm{Beta}(a, b)$: the linear combination of covariates is approximately normal by the Lyapunov central limit theorem, so $e_i$ follows a logit-normal distribution, which is in turn closely approximated by a Beta distribution. The parameters $(a,b)$ are uniquely determined by the \emph{user-specified} inputs $(r,\phi)$, by solving the equations:
\begin{equation} \label{eq:ab-rphi}
    r = \frac{a}{a + b}, \quad  \phi =\frac{\Gamma(a + 0.5)}{{a}^{1/2}\Gamma(a)}\frac{\Gamma(b + 0.5)}{{b}^{1/2}\Gamma(b)}.
\end{equation}
The mapping from $(r,\phi) \in (0, 1)^2$ to $(a,b) \in (0, \infty)^2$ is bijective, with $\phi$ monotonically increasing in both $a$ and $b$ for any fixed $r$. The overlap coefficient $\phi$ is highly concentrated in $(0.8,1)$; \cite{liu2025sample} recommended a rule of thumb $\phi < 0.8$, $[0.8, 0.9)$, $[0.9, 0.95)$, $\geq 0.95$ for very poor, poor, moderate, and good overlap, respectively. 

Theorem~\ref{theorem2} requires the inverse propensity scores to have finite mean for a finite asymptotic variance, a condition inherent to the inverse probability weighted estimator rather than a restriction of the Beta approximation. Under $e(x) \sim \mathrm{Beta}(a,b)$, this is equivalent to $a,b>1$. For $r \in \{0.1,0.3,0.5\}$, the minimum $\phi$ guaranteeing $a,b>1$ is approximately $\phi \geq \{0.88, 0.84, 0.80\}$, so more balanced designs tolerate poorer overlap before this condition is violated. Substituting $e_i \sim \mathrm{Beta}(a,b)$ into Theorem~\ref{theorem2} yields the following corollary.

\begin{corollary}[Observational studies] \label{corollary:var-obs}
    Under a logistic working model for the propensity scores with $e(x) \sim \mathrm{Beta}(a,b)$, if $a,b>1$, then $V_{\IPW} = \widetilde{V}_{\obs} + \epsilon$, where
    \begin{equation} \label{eq:V-ipw-obs}
        \widetilde{V}_{\obs} = \left( \frac{\lambda_1 + \lambda_0}{d} \right)^2  \left[ r^2 \lambda_0^2 d_1 \frac{a+b-1}{a-1} + (1-r)^2 \lambda_1^2 d_0 \frac{a+b-1}{b-1} \right],
    \end{equation}
    and $\epsilon$ is the residual from Theorem~\ref{theorem2}. It holds that $\epsilon \to 0$ as $\phi \to 1$.
\end{corollary}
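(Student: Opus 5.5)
The corollary follows from Theorem~\ref{theorem2} by substitution, together with one limiting argument for the residual; I will handle these two parts separately. For the closed form, Theorem~\ref{theorem2} shows that $\widetilde{V}_{\IPW}$ depends on the propensity score distribution only through $\bE\{1/e_i\}$ and $\bE\{1/(1-e_i)\}$. I therefore only need these two moments under $e_i \sim \mathrm{Beta}(a,b)$. Direct integration against the Beta density gives
\[
\bE\{1/e_i\} = \frac{B(a-1,b)}{B(a,b)} = \frac{\Gamma(a-1)\Gamma(a+b)}{\Gamma(a)\Gamma(a+b-1)} = \frac{a+b-1}{a-1},
\]
which is finite exactly when $a>1$. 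By symmetry, since $1-e_i \sim \mathrm{Beta}(b,a)$, we get $\bE\{1/(1-e_i)\} = (a+b-1)/(b-1)$, finite exactly when $b>1$. The hypothesis $a,b>1$ is what makes both moments finite. It also guarantees the regularity condition (R2), because $\bE[w_i^2] = r^2\bE[Z_i/e_i^2]+(1-r)^2\bE[(1-Z_i)/(1-e_i)^2] = r^2\bE[1/e_i]+(1-r)^2\bE[1/(1-e_i)]<\infty$. Substituting the two moments into $\widetilde{V}_{\IPW}$ gives \eqref{eq:V-ipw-obs}. The identity $V_{\IPW}=\widetilde{V}_{\obs}+\epsilon$, with the same residual $\epsilon$, is then inherited directly from Theorem~\ref{theorem2}.

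For the claim $\epsilon\to 0$ as $\phi\to 1$, I will use the structure stated in Theorem~\ref{theorem2}: $\epsilon$ is a signed linear functional of $C_z(t)=\Cov[F_z(t\mid X_i), w_z(X_i)]$ for $z=0,1$, with coefficients built from bounded objects such as $\pi$, $d\Lambda_0$, and $G_z$ on $[0,t^\dagger]$. So it suffices to show two things as $\phi\to1$:
\begin{enumerate}
\item $\sup_{t\in[0,t^\dagger]} |C_z(t)| \to 0$;
\item the coefficients in the functional stay bounded.
\end{enumerate}

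For the first point, fix $r$ and use the bijection \eqref{eq:ab-rphi}. Because $\phi$ is monotone increasing in $a$ and $b$, the limit $\phi\to 1$ corresponds to $a,b\to\infty$ with $a/(a+b)=r$. Then $\bV(e_i)=ab/\{(a+b)^2(a+b+1)\}\to 0$, so $e_i\to r$ in $L^2$. Consequently $w_1(X_i)=r/e_i$ and $w_0(X_i)=(1-r)/(1-e_i)$ both converge to $1$. To get convergence in $L^2$ rather than only in probability, I will compute the second moment directly: $\bE[(r/e_i)^2] = r^2 (a+b-1)(a+b-2)/\{(a-1)(a-2)\} \to 1$, and $\bE[r/e_i] = r(a+b-1)/(a-1) \to 1$. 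Together these give $\bV(w_1(X_i))\to 0$, and the same argument applies to $w_0$. Since $0\le F_z(t\mid X_i)\le 1$, Cauchy--Schwarz then gives, uniformly in $t$,
\[
|C_z(t)| \le \sqrt{\bV(F_z(t\mid X_i))\,\bV(w_z(X_i))} \le \tfrac12\sqrt{\bV(w_z(X_i))} \to 0 .
\]

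The main obstacle is the second point. The paper gives the closed form of $\epsilon$ only in its proof, so I cannot quote the coefficient bound directly. My plan is to argue from the derivation of Theorem~\ref{theorem2}. The residual arises when expanding $A(\tau_0)$ and $B(\tau_0)$ under (A5). There the conditional survival enters through $\bE[w_i F_z(t\mid X_i)] = \bE[w_z]F_z(t) + C_z(t)$, and $\epsilon$ collects all terms involving $C_z$. Its coefficients involve products of $\pi\in[0,1]$, $\exp(\tau_0)$, $G_z\le 1$, and integrals against $d\Lambda_0$ over $[0,t^\dagger]$, all of which are finite by (R1). They may also involve $\bE[w_z^2]$, which converges to $1$ as shown above, and $A(\tau_0)^{-1}$, which stays bounded away from zero because $A$ converges to its randomized-trial value (by the same $L^2$ convergence of the weights). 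Terms quadratic in $C_z$, if they appear, vanish as well. Hence $\epsilon\to0$, consistent with the randomized-trial case $\phi=1$, $C_z\equiv0$ in Corollary~\ref{corollary:var-rct}.
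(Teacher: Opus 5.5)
Your proposal is correct and follows the paper's route: substitute $\bE\{1/e_i\}=(a+b-1)/(a-1)$ and $\bE\{1/(1-e_i)\}=(a+b-1)/(b-1)$ into Theorem~\ref{theorem2}, then use Cauchy--Schwarz to bound $|C_z(t)|$ by a multiple of $\sqrt{\bV[w_{z,i}]}$, which vanishes as $a,b\to\infty$. The differences are minor: the paper uses the sharper bound $\bV[F_z(t\mid X_i)]\le F_z(t)\{1-F_z(t)\}$ together with $\int\sqrt{F_z(1-F_z)}\,d\Lambda_0\le \tfrac{\pi}{2}e^{-\tau_0 z}$ (its bound $M_1$), so unlike your $\tfrac14$-bound it does not need $\Lambda_0(t^\dagger)<\infty$; and your hedging about the coefficients is unnecessary, because under (A5) $A(\tau_0)=d/(\lambda_1+\lambda_0)^2$ exactly and $\epsilon=A(\tau_0)^{-2}B_\epsilon(\tau_0)$ is linear in $\int_0^\infty C_z(t)G_z(t)\,d\Lambda_0(t)$ with fixed coefficients.
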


\begin{remark}
Theorem~\ref{theorem1} is derived with true weights $w$, so $\widetilde{V}_{\obs}$ does not include the uncertainty in estimating the propensity score. While variance estimation in the analysis stage should generally incorporate this uncertainty, doing so in the design stage, however, would greatly complicate $\widetilde{V}_{\obs}$ and require individual data that are rarely available. Moreover, treating the estimated propensities as fixed for inverse probability weights would overestimate the true variance, yielding a conservative sample size \citep{Henmi2004paradox, shu2021variance}, which is desirable for power analysis. We verified this in numerical studies of Section \ref{sec:simu} (omitted for brevity), echoing \cite{liu2025sample} in non-survival context.
\end{remark}

The residual $\epsilon$ arises from confounding, and its sign and magnitude are study-specific. At the design stage, we use $\widetilde{V}_{\obs}$ as the default \emph{working variance} for sample size calculation. We establish explicit closed-form bounds on $|\epsilon|$ in Appendix A.2; an estimate of $F_0(t^\dagger)$ further tightens the bounds but is not required. These bounds support a sensitivity assessment of the required sample size. Investigators are best positioned to conduct it, informed by subject-matter knowledge of the study population and the treatment selection mechanism.

\subsection{Variance inflation of general balancing weights}

Inverse probability weights correspond to a target population that is represented by the entire study sample and are sensitive to extreme propensities in the lack of overlap between groups. In real-world observational studies, the scientific question often concerns different target populations. For example, the overlap population consists of patients in clinical equipoise and the treated population consists of only units who receive the intervention. In these cases, alternative weights are adopted; therefore, we extend the results in previous sections to the general class of propensity score balancing weights of \cite{li2018balancing}, where $w_i=Z_iw_1(e_i)+(1-Z_i)w_0(e_i)$ for arm-specific weight functions $w_z$, $z=0,1$, and denote the asymptotic variance of $\widehat{\tau}_n$ by $V_w$. Besides the inverse probability weight, balancing weights include two other common weights as special cases: (i) overlap weight, $\{w_1=1-e(x), w_0=e(x)\}$, targeting at estimands on the overlap population (ATO); (ii) the treated weights, $\{w_1=1, w_0=e(x)/(1-e(x))\}$, targeting at estimands on the treated population (ATT). In randomized trials, all weighting schemes reduce to $w_i \equiv 1$; in observational studies, different weighting schemes may lead to markedly different target populations and causal estimates.

For general balancing weights, establishing a closed-form decomposition of $V_w$ parallel to Theorem \ref{theorem2} requires an additional assumption specific to the weight function and confounding structure of the study, which inverse probability weights satisfy automatically; see Appendix A.3 for details. Therefore, we develop an alternative variance inflation approach that is applicable to all balancing weights, requiring no assumptions beyond those in Section \ref{sec:v-obs}.

The \emph{variance inflation factor} for a weight function $w$ is defined as:
\begin{equation}\label{eq:vif-definition}
    \kappa_w := V_w / \widetilde{V}_{\RCT},
\end{equation}
so that the required sample size is $\kappa_w N_{\RCT}$, where $N_{\RCT} = (z_{1-\alpha}+z_\beta)^2\widetilde{V}_{\RCT}/\tau_0^2$ is the baseline trial sample size from Corollary~\ref{corollary:var-rct}. Existing variance inflation methods for survival outcomes anchor their baseline to the log-rank null variance \citep{hsieh2000sample, hsieh2003overview}, which can substantially misquantify the true variance \citep{scosyrev2019power}. Anchoring to $\widetilde{V}_{\RCT}$ corrects it, so that $\kappa_w$ isolates the variance inflation attributable solely to the weights. This also aligns the design formula with the analysis, as the common inferential goal is a Wald test on the estimated hazard ratio beyond a log-rank test. Moreover, these methods require individual-level pilot data to estimate the variance inflation, which are usually not available at the design stage and may not be representative of the target population.

We approximate $\kappa_w$ by $\kappa_{\DE}$, the population limit of the \emph{design effect} $K_{\DE}$ \citep{kish1965survey}:
\begin{equation}\label{eq:Kish-DE}
K_{\DE} = \left(\frac{1}{n_1}+\frac{1}{n_0}\right)^{-1}\!\left[\frac{\sum_i Z_i w_i^2}{\left(\sum_i Z_i w_i\right)^2}+\frac{\sum_i (1-Z_i)w_i^2}{\left(\sum_i(1-Z_i)w_i\right)^2}\right],
\end{equation}
where $n_1$ and $n_0 = n - n_1$ are the treatment and control group sizes, and $\kappa_{\DE} := \mathrm{plim}_{n\to\infty} K_{\DE}$. Recent works propose to use $\kappa_{\DE}$ as a measure of variance inflation in propensity score weighting \citep{zhou2020propensity, austin2021informing}, but all require individual weights to evaluate $\kappa_{\DE}$. We address this problem by adopting the Beta approximation of the propensity score in Section~\ref{sec:v-ipw-obs}, which allows evaluating $\kappa_{\DE}$ based solely on two summary inputs $(r, \phi)$ for any given weighting scheme.

Under the normalized inverse probability weights, $\kappa_w$ has an analytical expression $\widetilde{V}_{\obs}/\widetilde{V}_{\RCT}$ based on equations \eqref{eq:V-ipw-rct} and \eqref{eq:V-ipw-obs}; this provides a closed-form benchmark for assessing $\kappa_{\DE}$. Proposition~\ref{prop:kish-vif} formalizes their relationship.
\begin{proposition} \label{prop:kish-vif}
Under the normalized inverse probability weights and the Beta approximation $e(x) \sim \mathrm{Beta}(a,b)$ with $a,b>1$:
\begin{enumerate}[itemsep=-2pt]
    \item [(i)] For all balanced designs ($r=1/2$), $\kappa_{\DE} = \widetilde{V}_{\obs}/\widetilde{V}_{\RCT}$;
    \item [(ii)] For all imbalanced designs ($r\neq 1/2$), $| \kappa_{\DE} - \widetilde{V}_{\obs}/\widetilde{V}_{\RCT}| \to 0$ as $\phi \to 1$.
\end{enumerate}
\end{proposition}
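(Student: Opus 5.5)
The plan is to compute both quantities in closed form under the Beta approximation and compare them directly. Write $M_1 := \bE[1/e_i]$ and $M_0 := \bE[1/(1-e_i)]$. Under $e_i \sim \mathrm{Beta}(a,b)$ with $a,b>1$, these are $M_1=(a+b-1)/(a-1)$ and $M_0=(a+b-1)/(b-1)$, as already used in Corollary~\ref{corollary:var-obs}.

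\textbf{Step 1: the limit $\kappa_{\DE}$.} I would take probability limits of each piece of $K_{\DE}$ in \eqref{eq:Kish-DE} with $w_i = rZ_i/e_i + (1-r)(1-Z_i)/(1-e_i)$, using the law of large numbers (finiteness is ensured by $a,b>1$ and R2).
\begin{itemize}[itemsep=-2pt]
\item $n^{-1}\sum_i Z_i w_i \to \bE[Z_i r/e_i] = r$ and $n^{-1}\sum_i Z_i w_i^2 \to r^2\bE[Z_i/e_i^2] = r^2 M_1$.
\item Symmetrically, the control-arm sums converge to $1-r$ and $(1-r)^2 M_0$.
\item $n(1/n_1+1/n_0)^{-1}\cdot n^{-1} \to r(1-r)$ after rescaling.
\end{itemize}
The treated bracket term then behaves like $M_1/n$ and the control term like $M_0/n$. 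Combining,
\[
\kappa_{\DE} = r(1-r)\,(M_1+M_0).
\]

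\textbf{Step 2: the benchmark ratio.} From \eqref{eq:V-ipw-rct} and \eqref{eq:V-ipw-obs}, the common factor $(\lambda_1+\lambda_0)^2/d^2$ cancels, giving
\[
\frac{\widetilde{V}_{\obs}}{\widetilde{V}_{\RCT}} = \frac{r^2\lambda_0^2 d_1 M_1 + (1-r)^2\lambda_1^2 d_0 M_0}{r\lambda_0^2 d_1 + (1-r)\lambda_1^2 d_0}.
\]

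\textbf{Part (i).} When $r=1/2$, equation \eqref{eq:ab-rphi} forces $a=b$, so $M_1=M_0=:M=(2a-1)/(a-1)$. The ratio becomes
\[
\frac{\tfrac14 M(\lambda_0^2 d_1+\lambda_1^2 d_0)}{\tfrac12(\lambda_0^2 d_1+\lambda_1^2 d_0)} = \frac{M}{2}.
\]
This holds for every $\tau_0$, $d_0$ and $d_1$. Meanwhile $\kappa_{\DE} = \tfrac14\cdot 2M = M/2$, so the two coincide.

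\textbf{Part (ii).} It suffices to show that both quantities tend to $1$ as $\phi\to1$ with $r$ fixed. Then
\[
\Big|\kappa_{\DE} - \widetilde{V}_{\obs}/\widetilde{V}_{\RCT}\Big| \le |\kappa_{\DE}-1| + \Big|\widetilde{V}_{\obs}/\widetilde{V}_{\RCT}-1\Big| \to 0.
\]
The key fact is that $\phi\to1$ forces $a,b\to\infty$. Parametrize $a=rs$ and $b=(1-r)s$. Each factor $g(a)=\Gamma(a+1/2)/(a^{1/2}\Gamma(a))$ is strictly less than $1$ and increases to $1$ only as $a\to\infty$. This follows from Gautschi/Wendel-type inequalities, consistent with the stated monotonicity of $\phi$. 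Hence $\phi = g(a)g(b)\to1$ requires $s\to\infty$.

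Given $a,b\to\infty$, we have $rM_1 = r(a+b-1)/(a-1) \to r(a+b)/a = 1$, and likewise $(1-r)M_0\to1$. Consequently:
\begin{itemize}[itemsep=-2pt]
\item $\kappa_{\DE} = (1-r)\cdot rM_1 + r\cdot(1-r)M_0 \to 1$.
\item In the benchmark ratio, the numerator terms $r\lambda_0^2 d_1\cdot(rM_1)$ and $(1-r)\lambda_1^2 d_0\cdot((1-r)M_0)$ converge to the corresponding denominator terms, so the ratio also tends to $1$.
\end{itemize}
Note that the convergence constants depend on $\lambda_z$ and $d_z$ only through a fixed positive denominator, which causes no difficulty.

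The main obstacle is the rigorous justification that $\phi\to1$ implies $a,b\to\infty$, namely the strict bound $g(a)<1$ with $g(a)\uparrow1$. I would settle it by citing the log-convexity of $\Gamma$ (Gautschi's inequality) together with the asymptotic $\Gamma(a+1/2)/\Gamma(a)\sim a^{1/2}$.
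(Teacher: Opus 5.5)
Your proof is correct. Steps 1--2, the limit $\kappa_{\DE}=r(1-r)(M_1+M_0)$ and the cancellation of $(\lambda_1+\lambda_0)^2/d^2$, coincide with the paper's computation.

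One cosmetic fix in Step 1: the prefactor that converges to $r(1-r)$ is $n^{-1}(1/n_1+1/n_0)^{-1}=(n_1/n)(n_0/n)$. It should be paired with $n$ times the bracket, which tends to $M_1+M_0$. As written, $(1/n_1+1/n_0)^{-1}$ itself diverges.

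Where you genuinely differ is in how you compare the two limits. The paper subtracts them directly and, using $r=a/(a+b)$ and $\lambda_1^2=\tfrac{r}{1-r}e^{\tau_0}$, obtains the exact identity
\[
\kappa_{\DE}-\widetilde{V}_{\obs}/\widetilde{V}_{\RCT}=\frac{r(1-r)(1-2r)(a+b-1)\,[d_0e^{\tau_0}-d_1e^{-\tau_0}]}{(a-1)(b-1)\,S},\qquad S=r\lambda_0^2d_1+(1-r)\lambda_1^2d_0 .
\]
Part (i) follows from the factor $1-2r$, and part (ii) from $(a+b-1)/[(a-1)(b-1)]=O(1/(a+b))\to 0$.

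You instead verify (i) by substituting $a=b$ directly. You prove (ii) by squeezing both quantities toward the common limit $1$: since $rM_1\to1$ and $(1-r)M_0\to1$, both the design effect and the analytic ratio collapse to the no-inflation trial value.

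Your route is more conceptual. It reads as ``both inflation factors disappear once the design is effectively randomized.'' However, it yields neither the sign nor the structure of the discrepancy. The paper's identity gives both, plus the explicit rate, and it also shows the discrepancy vanishes for any $r$ whenever $d_0e^{\tau_0}=d_1e^{-\tau_0}$.

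In return, you explicitly justify that $\phi\to1$ with $r$ fixed forces $a,b\to\infty$. You do this via $g(a)<1$, from strict log-convexity of $\Gamma$. The paper takes this step for granted from the stated monotonicity of the $(r,\phi)\leftrightarrow(a,b)$ map. For this step you do not even need monotonicity of $g$. Continuity of $g$ and $g<1$ give $\max_{[1,K]}g<1$, so $\phi\le g(rs)$ stays bounded away from $1$ whenever $s$ is bounded.
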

Proposition~\ref{prop:kish-vif} shows that, under all balanced designs, $\kappa_{\DE}$ is exactly the true variance ratio; otherwise, $\kappa_{\DE}$ always converges to the true ratio as the covariate overlap increases. The results hold regardless of other inputs. \cite{kish1995DEmethods} recognized the potential of extending $K_{\DE}$ to regression coefficients but the literature has not formalized the theoretical link. Proposition~\ref{prop:kish-vif} established this connection for the weighted partial likelihood estimator under inverse probability weights. This supports the use of $\kappa_{\DE}$ as a working approximation to $\kappa_w$ for general balancing weights, where no analytical benchmark is available.

We provide a Monte Carlo algorithm to estimate $\kappa_{\DE}$ by exploiting the Beta approximation of propensity scores and the factorization of $p(e_i,Z_i,w_i) = p(e_i) p(Z_i|e_i) p(w_i|Z_i, e_i)$.

\begin{algorithm}
\label{alg:VIF}
\textbf{Monte Carlo estimation of $\kappa_{\DE}$}

\emph{Input}: treatment proportion $r$, overlap coefficient $\phi$, weighting function $w_z(e)$. 
\begin{enumerate}[itemsep=-3pt]
    \item Sample $e_i \sim \mathrm{Beta}(a,b)$, where $(a,b)$ are computed from $(r,\phi)$ via bisection from equation \eqref{eq:ab-rphi};
    \item Sample $Z_i \sim \mathrm{Bernoulli}(e_i)$;
    \item Compute $w_i = Z_iw_1(e_i)+(1-Z_i)w_0(e_i)$ based on the sampled $Z_i$ and $e_i$;
    \item Repeat Steps 1--3 for a large number of iterations $n$; substitute $\{(Z_i,w_i)\}_{i=1}^n$ into $K_{\DE}$ in equation \eqref{eq:Kish-DE} to obtain $\kappa_{\DE}$.
\end{enumerate}
\end{algorithm}

\section{Simulations}\label{sec:simu}

\subsection{Synthetic Data}\label{sec:simu-synthetic}
\subsubsection{Simulation design}\label{sec:simu-design}

Adopting the simulation design in \cite{Cheng2022survivals}, we simulate a superpopulation of $M=1,000,000$ units. Covariates $X_1$, $X_2$, and $X_3$ are jointly normal with mean zero, marginal variance one, and pairwise correlation $0.5$; $X_4$, $X_5$, and $X_6$ are independent $\mathrm{Bernoulli}(0.5)$ draws. The true propensity score follows the logistic model $e(X)= \expit (\beta_0 + c X \beta)$, where $\beta = (0.2,0.3,-0.3,-0.2,-0.3,0.2)^\top$ and $c$ controls the covariate overlap between the two arms. We draw $Z \sim \mathrm{Bernoulli}(e(X))$. The potential event time $T(z)$ follows the conditional hazard $\lambda_z(t|X) = (k/s)(t/s)^{k-1} \exp(\alpha z + X\theta)$, with Weibull shape $k=1.2$ and scale $s=3$ producing a time-increasing hazard, and $\theta = (-0.4,-0.2,0.1,0.1,0.2,-0.3)^\top$. The factual event time is $T^* = ZT(1)+(1-Z)T(0)$. For each weighting scheme $w$, we tune $\alpha$ for the target $\tau_w$, solved from estimating equation~\eqref{eq:estimating-equation} on the superpopulation; we consider marginal hazard ratios $\exp(\tau_w) \in \{0.4, 0.6, 0.8\}$, which we refer to as large, moderate, and small effects, respectively. We set the end of follow-up time $t^\dagger$ so that $20\%$ of the superpopulation units would survive to the end under control without censoring. We generate the random censoring time in each arm independently as $C|Z=z \sim \mathrm{Exponential}(\nu_z)$, with $\nu_z$ tuned to the target rate. The observed time is $T=\min\{T^*,C,t^\dagger\}$ with event indicator $\delta = \mathbb{I}(T^*=T)$. We consider two random censoring scenarios: $20\%$ in the treated arm only and $0\%$ in both arms; we present the former for primary results and the latter in Appendix B.

For each configuration, we compute the required sample size $N$ by substituting the relevant variance formula into equation~\eqref{eq:sample-size-general}. We then randomly draw $N$ units from the $M$ units, stratified by $Z$ in randomized trials and unstratified in observation studies, and repeat for $B=10,000$ replicates. In each replicate, we compute the estimate $\hat{\tau}_N$, with $w \equiv 1$ for randomized trials and weights estimated from a logistic propensity score model for observational studies. We fix the target power at $0.8$ and significance level at $0.05$. We conduct a one-sided Wald hypothesis test using the empirical variance of $\hat{\tau}_N$ in all $B$ replicates, and report the proportion of rejections as the empirical power.

For randomized trials, we set $c=0$ and $\beta_0 = \logit(r)$ so that $e_i \equiv r$, and consider $r \in \{1/3, 1/2, 2/3\}$ to represent common allocations $2{:}1$, $1{:}1$, and $1{:}2$, and treatment enrichment; this further covers multi-arm designs for pairwise comparisons, for example, allocations of $1{:}1{:}1$ and $1{:}2{:}2$, corresponding to control proportions of $1/3$ and $1/5$, respectively. We evaluate the formulas in Section~\ref{sec:v-rct}: the proposed $\widetilde{V}_{\RCT}$, Schoenfeld's $\widetilde{V}_{\SC}$, and Freedman's $\widetilde{V}_{\FR}$. Both the proposed and Schoenfeld formulas require only the observed event rate $d$, though the proposed formula can be tailored to arm-specific random censoring by replacing $d$ with $d_1$ and $d_0$. In contrast, the Freedman formula requires the potential survival probability $F_0(t^\dagger)$ and the random censoring rate $\Pr(C < T^* | C < t^\dagger)$, which we compute from the superpopulation potential survival curves and censoring distribution.

For observational studies, we fix $r=1/2$ and vary $\phi \in \{0.99, 0.96, 0.93, 0.90, 0.87, 0.85, 0.83\}$ by calibrating $c$ and $\beta_0$. We consider normalized inverse probability, overlap, and treated weights, with $\tau_w$ calibrated separately for each scheme. We present the primary results under $\exp(\tau_w)=0.6$, with remaining results in Appendix B. For the inverse probability weight, we compare the proposed $\widetilde{V}_{\obs}$ with \cite{hsieh2000sample}, which inflates Schoenfeld's null variance by $1/(1-R^2)$, where $R^2$ is the coefficient of determination from regressing $Z$ on $X$. We do not extend to Freedman's variance because it requires potential marginal survivals of the anticipated cohort, impractical in observational settings. We derive an approximated $R^2$ from $(r, \phi)$ via the Beta approximation $e(X) \sim \mathrm{Beta}(a,b)$, under which $R^2 = \bV[e(X)]/\bV[Z]=1/(a+b+1)$, yielding $\widetilde{V}_{\HL}=\widetilde{V}_{\SC}\cdot [1+1/(a+b)]$. Thus, both methods are evaluated without individual-level data. For overlap and treated weights, we implement Algorithm~\ref{alg:VIF}. We exclude \cite{scosyrev2019power} from the comparison, because they require fitting a Cox model on individual-level data.

\subsubsection{Results: Randomized trials}\label{sec:simu-rct}

\begin{table}[ht]
    \centering
    \small
    \setlength{\tabcolsep}{3pt}
    \renewcommand{\arraystretch}{0.96}
    \caption{Sample size and empirical power of the proposed, Schoenfeld, and Freedman formulas in randomized trials with 20\% random censoring in the treated arm.}
    \label{tab:rct-power-cens0-20}
    \begin{tabular*}{0.75\textwidth}{@{\extracolsep{\fill}}ccccccc@{}}
    \toprule
    \multirow{2}{*}[-4pt]{\shortstack[c]{Marginal \\ hazard ratio}} & \multicolumn{2}{c}{Proposed} & \multicolumn{2}{c}{Schoenfeld} & \multicolumn{2}{c}{Freedman} \\
    \cmidrule(lr){2-3} \cmidrule(lr){4-5} \cmidrule(lr){6-7}
    & N & Power & N & Power & N & Power \\
    \midrule
    \multicolumn{7}{c}{\textit{Treatment proportion $= 1/3$}} \\
    0.4 & 95 & .916 & 50 & .671 & 77 & .852 \\
    0.6 & 199 & .841 & 150 & .742 & 190 & .818 \\
    0.8 & 811 & .809 & 757 & .768 & 839 & .820 \\
    \midrule
    \multicolumn{7}{c}{\textit{Treatment proportion $= 1/2$}} \\
    0.4 & 67 & .862 & 49 & .754 & 59 & .826 \\
    0.6 & 156 & .818 & 142 & .791 & 155 & .820 \\
    0.8 & 707 & .801 & 701 & .800 & 731 & .807 \\
    \midrule
    \multicolumn{7}{c}{\textit{Treatment proportion $= 2/3$}} \\
    0.4 & 57 & .807 & 61 & .828 & 56 & .796 \\
    0.6 & 152 & .799 & 171 & .830 & 159 & .796 \\
    0.8 & 781 & .814 & 825 & .824 & 805 & .817 \\
    \bottomrule
    \end{tabular*}
\end{table}

Table \ref{tab:rct-power-cens0-20} reports results under arm-specific random censoring. Across the scenarios considered, the proposed formula achieves the nominal power of $0.8$, confirming that it adequately quantifies the variance of the partial likelihood estimator at any postulated effect. 
In contrast, the Schoenfeld formula exhibits the bidirectional pitfall discussed in Section \ref{sec:v-rct}. It underpowers at $r=1/3$ across all effects, and at the largest effect under $r=1/2$. Conversely, it falls above the nominal target at $r=2/3$, causing over-enrollment when beneficial treatments are given to an enriched arm. The Freedman formula attains empirical power similar to the proposed formula but requires different inputs. The three formulas converge in empirical power as the effect goes to null under a balanced design, consistent with Proposition \ref{prop:logrank}. Similar patterns hold without random censoring in Appendix B.

The proposed formula falls mildly above the target power mainly when a beneficial treatment is assigned to a smaller arm, because the risk-set ratio is anchored at the start of follow-up, as discussed in Section \ref{sec:v-ipw}. This setting, however, is less common in practice, given that markedly beneficial treatments are typically not assigned to the smaller arm. In other limited cases of mild excess, this conservativeness provides a safe margin against the uncertainty in design-stage inputs without overinflating the required sample size. Schoenfeld's deviation differs in nature, because it is undetermined in direction and incidental to the particular study, not foreseeable at the design stage.

\subsubsection{Results: Observational studies}\label{sec:simu-obs}

\begin{table}[ht]
    \centering
    \small
    \setlength{\tabcolsep}{3pt}
    \renewcommand{\arraystretch}{0.96}
    \caption{Sample size and empirical power based on the proposed method for a marginal hazard ratio of $0.6$ in the observed (ATE), overlap (ATO) and treated (ATT) population, respectively, of observational studies with a balanced design and 20\% random censoring in the treated arm, with varying degrees of overlap. For ATE, the Hsieh \& Lavori method is compared, whereas no alternative method is available for ATO and ATT.}
    \label{tab:obs-power-cens0-20}
    \begin{tabular*}{0.85\textwidth}{@{\extracolsep{\fill}}ccccccccc@{}}
    \toprule
    & \multicolumn{4}{c}{ATE} & \multicolumn{2}{c}{\multirow{2}{*}{ATO}} & \multicolumn{2}{c}{\multirow{2}{*}{ATT}} \\
    \cmidrule(lr){2-5}
    & \multicolumn{2}{c}{Proposed} & \multicolumn{2}{c}{Hsieh \& Lavori} & \multicolumn{2}{c}{} & \multicolumn{2}{c}{} \\
    \cmidrule(lr){2-3} \cmidrule(lr){4-5} \cmidrule(lr){6-7} \cmidrule(lr){8-9}
    $\phi$ & $N$ & Power & $N$ & Power & $N$ & Power & $N$ & Power \\
    \midrule
    0.99 & 157 & .831 & 145 & .809 & 157 & .832 & 161 & .820 \\
    0.96 & 167 & .833 & 154 & .805 & 165 & .839 & 182 & .791 \\
    0.93 & 182 & .821 & 163 & .786 & 173 & .834 & 212 & .786 \\
    0.90 & 206 & .832 & 172 & .757 & 182 & .826 & 261 & .790 \\
    0.87 & 246 & .855 & 182 & .733 & 192 & .825 & 326 & .807 \\
    0.85 & 295 & .876 & 189 & .719 & 198 & .816 & 411 & .827 \\
    0.83 & 386 & .912 & 196 & .691 & 205 & .815 & 541 & .878 \\
    \bottomrule
    \end{tabular*}
\end{table}

Table \ref{tab:obs-power-cens0-20} reports results of observational studies at a marginal hazard ratio of $0.6$ under balanced design and random censoring. For the observed population (ATE), the proposed formula achieves the nominal target of $0.8$ across the overlap coefficients evaluated. By contrast, Hsieh and Lavori systematically underpowers, with the shortfall widening as $\phi$ decreases. Their method inflates Schoenfeld's variance by a factor derived for the conditional hazard ratio, and therefore, not only inherits the baseline variance problem under non-null effects but also mismatches the marginal causal estimand. As overlap deteriorates, their variance inflation becomes increasing inadequate for the marginal Cox model. We evaluate both methods using the Beta approximation without individual-level data, facilitating a fair comparison under the same inputs possible in practice.

For overlap and treated weights, the proposed variance inflation method consistently attains the nominal power. This empirically supports the design effect as a working  approximation of the variance inflation factor for general balancing weights, extending its validity beyond the inverse probability weights in Proposition \ref{prop:kish-vif}. The required sample size follows a consistent ordering across target populations: the overlap, observed, and treated, from smallest to largest. They diverge as overlap deteriorates, reflecting differences in robustness across the weighting schemes to extreme propensities. The inverse probability and treated weights both place large weights on units with extreme propensities, whereas overlap weights downweight them and thus remain efficient under poor overlap. While the target population is primarily determined by the scientific question, this ordering provides a practical reference when researchers face limited feasible sample size under anticipated poor overlap. Similar patterns hold without random censoring under alternative effect sizes in Appendix B.

For the observed population, the proposed method consistently achieves empirical power slightly above the nominal target, with larger excess at the poorest overlap considered, because we treat estimated propensities as fixed in Section \ref{sec:Cox-variance}. This is favorable in observational settings, because it provides a safe margin against not only the uncertainty in design-stage inputs, but more importantly, the variance residual due to confounding, as discussed in Section \ref{sec:v-ipw-obs}. It is challenging, in general, to determine the structure and magnitude of confounding at the design stage. The proposed method is evaluated without requiring any knowledge on confounding, and Table \ref{tab:obs-power-cens0-20} confirms its robustness when investigators lack such information. A similar margin persists under overlap weights.

\subsection{Real-world data calibrated simulation}\label{sec:simu-rwd}

To complement the synthetic study in Section \ref{sec:simu-synthetic}, we further evaluate the proposed methods on two real-world studies, a randomized trial and an observational study. We treat each study cohort as the superpopulation and resample from it, paralleling the synthetic-data design. For each cohort of size $M$, we calculate the required sample size $N$ using cohort-level summaries, draw $N$ patients from the $M$ with replacement (stratified by arm for randomized trials and unstratified for observational studies), and conduct a Wald hypothesis testing with the estimated marginal hazard ratio and robust variance. We repeat across $B=10,000$ replicates and report the rejection proportion as the empirical power.

These simulations are nonparametric in that we invoke no parametric data-generating model. Their purpose is not to re-analyze the existing studies or perform \textit{post-hoc} power calculations, but to validate the proposed methods in real-world settings under unknown data generation and potential violations to assumptions and approximations invoked throughout.

\subsubsection{A randomized trial: colon cancer}

We use the landmark trial of \cite{moertel1990colon} on adjuvant therapy of resected colon carcinoma. We restrict to the comparison between surgery-only (control) and fluorouracil plus levamisole after surgery (treatment), resulting in a full cohort of $M=619$ patients (control, $315$; treatment, $304$). Following \cite{moertel1990colon}, we use death as the endpoint and consider a hypothetical trial with $3.5$-year follow-up. We obtain the marginal hazard ratio of $0.685$ from the full cohort and treat it as the target effect size, which represents a moderate effect typical of clinical trials. We consider $r\in\{1/3,1/2,2/3\}$ and a one-sided test against the null $\tau_0 \geq 0$. We compute the observed event rate of each arm as input for the proposed and Schoenfeld formulas. For Freedman, we instead fit a Kaplan-Meier curve to obtain the marginal survival under control at $3.5$ years and compute the random censoring rate of the full cohort as inputs.

\begin{table}[ht]
    \centering
    \small
    \setlength{\tabcolsep}{3pt}
    \renewcommand{\arraystretch}{0.96}
    \caption{Sample size and empirical power of the proposed, Schoenfeld, and Freedman formulas for the colon trial, across treatment proportions $r$.}
    \label{tab:colon-rct-death}
    \begin{tabular*}{0.75\textwidth}{@{\extracolsep{\fill}}ccccccc@{}}
    \toprule
    & \multicolumn{2}{c}{Proposed} & \multicolumn{2}{c}{Schoenfeld} & \multicolumn{2}{c}{Freedman} \\
    \cmidrule(lr){2-3} \cmidrule(lr){4-5} \cmidrule(lr){6-7}
    $r$ & $N$ & Power & $N$ & Power & $N$ & Power \\
    \midrule
    $1/3$ & 644 & .830 & 536 & .770 & 615 & .814 \\
    $1/2$ & 525 & .814 & 502 & .794 & 509 & .800 \\
    $2/3$ & 539 & .798 & 596 & .824 & 530 & .789 \\
    \bottomrule
    \end{tabular*}
\end{table}

Table \ref{tab:colon-rct-death} reports results on the colon trial. The proposed formula attains empirical power at or slightly above the nominal target across all treatment allocations, confirming its robustness in real data. By contrast, the Schoenfeld formula underpowers at $r=1/3$ and falls above the nominal target at $r=2/3$, replicating the bidirectional pitfall under imbalanced designs. The Freedman formula yields results comparable to the proposed formula, but requires a Kaplan-Meier analysis on prior individual-level data, which the other two formulas avoid.

\subsubsection{An observational study: right heart catheterization}

We use the publicly available benchmark observational data in \cite{Connors96}, which evaluates the causal effect of right heart catheterization among hospitalized critically ill patients. The treatment is whether or not the patient received right heart catheterization and the outcome is the time-to-death up to 180 days after admission. The data contain $5,735$ patients, each with the treatment indicator, survival outcome (event indicator and time), and 72 covariates; we follow \cite{hirano2001rhc} on covariate selection for the propensity score model. The original cohort exhibits severely poor overlap. To make the simulation more representative of common observational studies, we trim to patients whose estimated propensities fall in $[0.05, 0.95]$, retaining $M=5,393$ patients ($2,175$ treated, $3,218$ controls), which we regard as the full cohort. We re-estimate the propensities on this cohort and obtain an empirical overlap coefficient of $\phi=0.861$. We then obtain the marginal hazard ratios of the observed, overlap, and treated populations, respectively, by fitting a weighted marginal Cox model with estimated propensities; we take them as the target effects. We also calculate the observed event rate of each arm.

We then design a hypothetical observational study on a similar population to detect the target effects against the null $\tau_w=0$. Following Section \ref{sec:simu-design}, we calculate the required sample size $N$ using $(r, \phi)$, event rates, and target effect obtained from the full cohort. For the observed population, the proposed approach attains empirical power above the nominal ($0.892$, $N=3,940$), consistent with the synthetic-data case at similar overlap; in contrast, Hsieh \& Lavori's method considerably underpowers ($0.716$, $N=2,588$) for the reasons elaborated in Section \ref{sec:simu-obs}. For overlap and treated populations, the proposed method attains the nominal target precisely (ATO: $0.794$, $N=2,081$; ATT: $0.796$, $N=4,892$), which strongly supports the design effect as a working  approximation of the variance inflation factor for general balancing weights in real-world settings.

\section{Discussion}\label{sec:discussion}

This paper develops unified power and sample size formulas for causal inference with time-to-event outcomes. We focus on the hazard ratio of the Cox model as it is the dominant estimand for survival analysis in medical research. In randomized experiments, our formula corrects the log-rank formulas at non-null effects with only the canonical inputs, and can further adjust for arm-specific censoring rates; in observational studies, one additional input characterizing covariate overlap suffices, extending \cite{liu2025sample} to survival outcomes. We further extend to general balancing weights using the design effect as the variance inflation factor.

Investigators who anticipate severe non-proportional hazards or prefer reporting on the survival scale may turn to alternative estimands, such as the restricted mean survival time and survival probability at fixed time points \citep{Zhang2012DRobustMLT, mao2018propensity, Cheng2022survivals}. These estimands rely on their own asymptotic theory and often parametric models, and studies should be planned with design tools dedicated to the planned analysis to avoid design-analysis mismatch. Our formulas complement design tools tailored to these estimands.

The proposed method can be extended to several directions. In multi-arm or factorial trials with a shared control arm, the formula applies to each pairwise comparison without modification \citep{zhao2026propensity}. When competing events are present, the method applies directly if the question concerns the cause-specific hazard, which reduces to a Cox model on the cause-specific event indicator; the subdistribution hazard would require analyzing the variance under the redefined risk set, parallel to the proposed method. We can also extend to clustered studies by introducing a cluster design effect to evaluate variance inflation due to within-cluster correlation, and is applicable to both cluster-randomized trials and observational studies. In the latter case, the interaction between variance inflation due to clustering and weighting would require careful investigation.

\section{Data Availability Statement}\label{data-availability-statement}

The colon cancer trial data are publicly available in the R package \texttt{survival} on CRAN (\url{https://cran.r-project.org/web/packages/survival/index.html}). The right heart catheterization data are publicly available from the Vanderbilt Biostatistics Datasets repository (\url{https://hbiostat.org/data/repo/rhc}).

\bibliographystyle{chicago}
\bibliography{PS-power}

\newpage
\spacingset{1.25}

\appendix
\titleformat{\section}[block]
  {\normalfont\Large\bfseries\centering}
  {Appendix \thesection.}{0.5em}{}
\titleformat{\subsection}[block]
  {\normalfont\large\bfseries\centering}
  {\thesubsection}{0.5em}{}
\titleformat{\subsubsection}[block]
  {\normalfont\normalsize\bfseries}
  {\thesubsubsection}{0.5em}{}

\section{Technical Proofs}

\subsection{Theorem 1}

In this section, we prove Theorem 1: main proof (A.1.2), lemmas (A.1.3), and relations to \cite{lin1989robust} (A.1.4).

\subsubsection{Regularity coditions}

Following Section 2, we define $S_k^*(\tau, t)$ and $s_k(\tau, t)$, $k\in\{0,1,2\}$, as the empirical and population weighted risk-set averages:
\begin{equation*}
    S_k^*(\tau, t) = n^{-1} \sum_{j=1}^n w_j Y_j(t) \exp(\tau Z_j) Z_j^k, k \in \{0,1,2\}, \quad s_k(\tau, t) = \bE\left[w_j Y_j(t)\exp(\tau Z_j) Z_j^k\right],
\end{equation*}
where the expectation is over a \textcolor{black}{generic} unit $j$ from the population distribution, and therefore $s_k(\tau, t)$ is a deterministic function of $t$ and $\tau$. Furthermore, we define the empirical and population risk-set ratios:
\begin{equation*}
    \pi_n^*(\tau, t) = \frac{S_1^*(\tau, t)}{S_0^*(\tau, t)}, \qquad \pi(\tau, t) = \frac{s_1(\tau, t)}{s_0(\tau, t)}.
\end{equation*}

We maintain the following \textcolor{black}{regularity conditions} throughout:

(R1) Andersen-Gill type: conditions A--D in \cite{Anderson1982Cox}. Specifically, conditions A (finite interval) and C (Lindeberg) remain unchanged. Conditions B (asymptotic stability) and D (asymptotic regularity conditions) are imposed on the redefined $S_k^*(\tau, t)$ and $s_k(\tau, t)$ which include the weight $w$, while fully retain their forms.

(R2) Regular weights: $\bE[w_i^2] < \infty$.

\subsubsection{Proof of Theorem 1}\label{appendix:theorem1-proofs}


\begin{proof}[of Theorem~\ref{theorem1}]

The estimator $\widehat\tau_n$ solves the partial likelihood estimating equation:
\begin{equation}\label{eq:esteq}
    U_n(\tau) = \sum_{i=1}^n \psi_i^*(\tau)=0, \qquad \psi_i^*(\tau) = w_i \delta_i \left[Z_i - \frac{S_1^*(\tau, T_i)}{S_0^*(\tau, T_i)}\right]
\end{equation}

Under the regularity conditions and invoke the continuous mapping theorem under sup-norm over the corresponding space,
$$\sup_{t,\; \tau} \left|S_k^*(\tau, t) - s_k(\tau, t)\right| \xrightarrow{p} 0, \quad \sup_{t,\tau}\left|\pi_n^*(\tau, t) - \pi(\tau, t)\right| \xrightarrow{p} 0,$$
where $t \in [0, t^\dagger]$ with $t^\dagger<\infty$ and $\tau \in \mathcal{B}$ where $\mathcal{B}$ is some neighborhood of $\tau_0$.

The empirical partial score $\psi_i^*(\tau_0)$ can be decomposed as
$$\psi_i^*(\tau_0) = w_i\delta_i\left[Z_i - \pi(\tau_0, T_i)\right] - w_i\delta_i\left[R_n^*(\tau_0, T_i) - \pi(\tau_0, T_i)\right],$$
and therefore $n^{-1/2}U_n(\tau_0)$ can be written as
\begin{equation}\label{eq:esteq-decompose}
    \frac{1}{\sqrt{n}} U_n(\tau_0) = \underbrace{\frac{1}{\sqrt{n}}\sum_{i=1}^n w_i\delta_i\left[Z_i - \pi(\tau_0, T_i)\right]}_{\text{Term I}} \;-\; \underbrace{\frac{1}{\sqrt{n}}\sum_{i=1}^n w_i\delta_i\left[\pi_n^*(\tau_0, T_i) - \pi(\tau_0, T_i)\right]}_{\text{Term II}}
\end{equation}
Notice that Term I is a sum of i.i.d.\ mean-zero random variables because each summand depends only on unit $i$. Apply first-order expansion around $(s_1(\tau_0,t), s_0(\tau_0,t))$ to $\pi_n^*(\tau_0, T_i)$:
\begin{align*}
    \pi_n^*(\tau_0,t) = \pi(\tau_0,t) + \frac{1}{s_0(\tau_0,t)}\left(\frac{S_1^*(\tau_0,t)}{n} - s_1(\tau_0,t)\right) - \frac{s_1(\tau_0,t)}{s_0^2(\tau_0,t)}\left(\frac{S_0^*(\tau_0,t)}{n} - s_0(\tau_0,t)\right) + o_p
\end{align*}
where the remainder is $o_p(n^{-1/2})$ uniformly in $t$.

Denote $S_k^*(\tau_0,t) - s_k(\tau_0,t) = n^{-1} \sum_{j=1}^n \xi_{j,k}(t)$, $\xi_{j,k}(t) = w_j Y_j(t)\exp(\tau_0 Z_j)Z_j^k - s_k(\tau_0, t)$. By definition, $\bE[\xi_{j,k}(t)] = 0$ and $\xi_{j,k}(t)$ depends only on unit $j$. Substituting the expansion and $\xi_{j,k}(t)$ into Term II:
$$\text{Term II} = \frac{1}{n\sqrt{n}}\sum_{i=1}^n\sum_{j=1}^n \frac{w_i\delta_i}{s_0(\tau_0, T_i)}\left[\xi_{j,1}(T_i) - \pi(\tau_0, T_i)\xi_{j,0}(T_i)\right] + o_p(1).$$
Notice that $Y_j(T_i)=\mathbb{I}(T_j \geq T_i)$ and $\xi_{j,1}(t) - \pi(\tau_0, t)\xi_{j,0}(t) = w_jY_j(t) \exp(\tau_0 Z_j) [Z_j - \pi(\tau_0, t)]$ for all $t$.
Plug them into Term II and exchange the order of summation:
\begin{equation*}
    \text{Term II} = \frac{1}{n\sqrt{n}}\sum_{j=1}^n\sum_{i=1}^n \frac{w_i\delta_i \cdot w_j\exp(\tau_0 Z_j)\mathbb{I}(T_i \leq T_j)[Z_j - \pi(\tau_0, T_i)]}{s_0(\tau_0, T_i)} + o_p(1).
\end{equation*}
The above Term II is in the form of $n^{-3/2}\sum_{i,j}h(V_i,V_j)$, and asymptotically equivalent to a scaled second-order U-statistic with asymmetric kernel $h$; each diagonal term $h(V_i, V_i)$ is $O_p(1)$ under regularity conditions and therefore their sum scaled by $n^{-3/2}$ is $o_p(1)$. A symmetric reconstruction of the kernel by $\widetilde{h}=[h(v_i,v_j)+h(v_j,v_i)]/2$ further yields
\begin{equation}\label{eq:termII-doublesummation}
    \text{Term II} = H_n + o_p(1), \quad H_n=\frac{1}{n\sqrt{n}} \sum_{i \neq j}^n \widetilde{h}(V_i,V_j).
\end{equation}
By Lemma~\ref{lemma:U-statistic}, we decompose $H_n$ into a first-order projection and a degenerate remainder:
\begin{equation}\label{eq:termII-result}
    H_n = \frac{1}{\sqrt{n}}\sum_{i=1}^n w_i\exp(\tau_0 Z_i)\int_0^{T_i}[Z_i - \pi(\tau_0, t)]\,d\Lambda_0(t) + o_p(1).
\end{equation}
Substituting equation~\eqref{eq:termII-result} into equation \eqref{eq:termII-doublesummation} and then \eqref{eq:esteq-decompose} yields
\begin{equation}\label{eq:esteq-iid}
    \frac{1}{\sqrt{n}}U_n(\tau_0) = \frac{1}{\sqrt{n}}\sum_{i=1}^n\eta_i(\tau_0) + o_p(1),
\end{equation}
where $\eta_i(\tau_0)$ is the \textcolor{black}{population-level} influence function:
\begin{equation}\label{eq:poplevel-influencefunction}
\eta_i(\tau_0) = w_i\delta_i[Z_i - \pi(\tau_0, T_i)] - w_i\exp(\tau_0 Z_i)\int_0^{T_i}[Z_i - \pi(\tau_0, t)]\,d\Lambda_0(t).
\end{equation}
Both $\pi(\tau_0, t)$ and $\Lambda_0(t)$ are deterministic functions of $t$ so $\eta_i(\tau_0)$ depends only on unit $i$, and thus \textcolor{black}{$\{\eta_i(\tau_0)\}_{i=1}^n$ are i.i.d}. See Lemma~\ref{lemma:eta-meanzero} for $\bE[\eta_i(\tau_0)] = 0$, ensuring consistency of $\widehat\tau_n$ to $\tau_0$.

To derive the asymptotic distribution of $\widehat\tau_n$, expand the estimating equation~\eqref{eq:esteq} around $\tau_0$:
\begin{equation}\label{eq:esteq-iid-mvt-expansion}
    0 = U_n(\widehat\tau_n) = U_n(\tau_0) + U_n'(\widetilde\tau)(\widehat\tau_n - \tau_0),
\end{equation}
where $\widetilde\tau$ is in $(\widehat\tau_n,\tau_0)$ and guaranteed by the mean value theorem. The derivative is:
\begin{equation*}
    U_n'(\tau) = \frac{\partial}{\partial \tau}\sum_{i=1}^n \psi_i^*(\tau)= -\sum_{i=1}^n w_i\delta_i\left[\frac{S_2^*(\tau, T_i)}{S_0^*(\tau, T_i)} - \left(\frac{S_1^*(\tau, T_i)}{S_0^*(\tau, T_i)}\right)^2\right],
\end{equation*}
because $\partial S_k^*/\partial\tau = S_{k+1}^*$ and $\partial \psi_i^*(\tau) / \partial \tau = -w_i \delta_i \cdot [\partial(S_1^*/S_0^*)/\partial \tau]$. The bracketed term, $S_2^*/S_0^*-(S_1^*/S_0^*)\otimes (S_1^*/S_0^*)$, which reduces to $S_2^*/S_0^*-(S_1^*/S_0^*)^2$ with binary univariate covariate $Z_j$, evaluated at any time $T_i$, is the weighted sample variance of $Z_j$ with weights $w_jY_j(T_i)\exp(\tau_0Z_j)/S_0^*(\tau_0,T_i)$. By regularity conditions, it is almost surely positive, ensuring a unique solution to estimating equation~\eqref{eq:esteq}. By uniform law of large number (ULLN),
\begin{equation}\label{eq:convergence-Amatrix}
    \frac{1}{n}U_n'(\widetilde\tau) \xrightarrow{p} -A(\tau_0), \quad A(\tau_0) = \bE\left[w_i\delta_i\left(\frac{s_2(\tau_0, T_i)}{s_0(\tau_0, T_i)} - \pi^2(\tau_0, T_i)\right)\right] > 0.
\end{equation}
Rearrange equation~\eqref{eq:esteq-iid} by \eqref{eq:esteq-iid-mvt-expansion} and apply CLT to the sum of $\eta_i(\tau_0)$; then with equation~\eqref{eq:convergence-Amatrix} and Slutsky's theorem:
\begin{equation}\label{eq:theorem1}
    \sqrt{n}(\widehat\tau_n - \tau_0) = \left[-\frac{1}{n}U_n'(\widetilde\tau)\right]^{-1}\cdot\frac{1}{\sqrt{n}}U_n(\tau_0) \xrightarrow{d} \mathcal{N}(0, V),
\end{equation}
where $V = A(\tau_0)^{-2}B(\tau_0)$ and $B(\tau_0) = \bE[\eta_i^2(\tau_0)]$, establishing Theorem 1.

\end{proof}

\subsubsection{Lemmas and proofs}

\begin{lemma}\label{lemma:compensator-integral-representation}
    Under the Cox model with intensity process $\omega_i(t) = Y_i(t)\lambda_0(t)\exp(\tau_0 Z_i)$ for unit $i$, where $\lambda_0(t)$ is the baseline hazard. Then, for any bounded Borel-measurable function $f:[0,t^\dagger]\to\mathbb{R}$ that does not depend on any random attribute of unit $i$ other than through its argument $t$:
    \begin{equation}
        \bE[w_i\delta_i f(T_i)] = \int_0^\infty f(t)\,s_0(\tau_0,t) d\Lambda_0(t),
    \end{equation}
\end{lemma}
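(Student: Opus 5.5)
The approach is the usual counting-process compensator argument. Write $N_i(t)=\mathbb{I}(T_i\le t,\delta_i=1)$ for the observed event counting process of unit $i$. Then $\delta_i f(T_i)=\int_0^{t^\dagger} f(t)\,dN_i(t)$, because $N_i$ has at most one jump, located at $T_i$ when $\delta_i=1$. Hence
\[
\bE[w_i\delta_i f(T_i)]=\bE\Big[\int_0^{t^\dagger} w_i f(t)\,dN_i(t)\Big].
\]

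The key step is to replace $dN_i$ by its compensator. Take the filtration generated by $(Z_i,X_i)$ at time $0$ together with the history of $N_i$ and of censoring. By the hypothesis of the lemma, the intensity is $\omega_i(t)=Y_i(t)\lambda_0(t)\exp(\tau_0 Z_i)$. Censoring does not disturb this intensity, by Assumption A4 (independent censoring given $Z_i$). So $M_i(t)=N_i(t)-\int_0^t \omega_i(u)\,du$ is a martingale.

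The integrand $w_i f(t)$ is predictable: $w_i$ is $\sigma(Z_i,X_i)$-measurable, hence known at time $0$, and $f$ is deterministic. It is also square-integrable, since $f$ is bounded and $\bE[w_i^2]<\infty$ by (R2). Therefore $\bE\int_0^{t^\dagger} w_i f(t)\,dM_i(t)=0$, and
\[
\bE[w_i\delta_i f(T_i)]=\bE\Big[\int_0^{t^\dagger} w_i f(t)Y_i(t)\exp(\tau_0Z_i)\lambda_0(t)\,dt\Big].
\]

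Next apply Fubini. This is justified because $f$ is bounded, $\Lambda_0(t^\dagger)<\infty$, and $w_i$ is integrable. Exchanging expectation and integral gives
\[
\int_0^{t^\dagger} f(t)\,\bE[w_iY_i(t)\exp(\tau_0Z_i)]\,d\Lambda_0(t)=\int f(t)\,s_0(\tau_0,t)\,d\Lambda_0(t),
\]
which is the definition of $s_0$ in~\eqref{eq:riskset-averages}. Since $Y_i(t)=0$ for $t>t^\dagger$, the upper limit $t^\dagger$ may be written as $\infty$.

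The main obstacle is justifying that $\omega_i$ really is the intensity of $N_i$ with respect to a filtration in which $w_i$ is predictable. The Cox intensity statement is given as a hypothesis of the lemma. Two things must still be checked:
\begin{itemize}
\item including $X_i$ (and hence $w_i$) in the time-$0$ sigma field does not alter the intensity;
\item the censoring times $C_i$ and $t^\dagger$ do not alter it either.
\end{itemize}
I would handle both by reading the hypothesis as the intensity relative to the filtration $\mathcal{F}_t=\sigma\{Z_i,X_i,N_i(u),\mathbb{I}(C_i\le u):u\le t\}$. A4 ensures that censoring is independent of $(T_i(z),X_i)$ given $Z_i$, so it is noninformative in this filtration. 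If the intensity were only valid marginally over $X_i$, I would instead condition on $(Z_i,X_i)$ and use the tower property. The integrand $w_if(t)$ is measurable with respect to this conditioning, so it factors through, and the same identity follows.
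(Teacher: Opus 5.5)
Your main argument is the paper's proof: the representation $w_i\delta_i f(T_i)=\int_0^{t^\dagger} w_if(t)\,dN_i(t)$, the compensator decomposition with the predictable integrand $w_if(t)$ under a filtration with $\mathcal{F}_0=\sigma(Z_i,X_i)$, the square-integrability check giving $\bE\int_0^{t^\dagger} w_if(t)\,dM_i(t)=0$, and Fubini to produce $s_0(\tau_0,t)$. One caution: your fallback remark is wrong as stated. If the Cox intensity held only marginally over $X_i$, conditioning on $(Z_i,X_i)$ would bring in the conditional hazard of $T_i(z)$ given $X_i$, which in general differs from $\lambda_0(t)\exp(\tau_0 z)$, so the tower property alone does not recover the identity. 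Recovering it would need the weights to remove the dependence on $X_i$ (for inverse probability weights, $\Pr(Z_i=1\mid X_i)\,w_{1,i}=r$ and $\Pr(Z_i=0\mid X_i)\,w_{0,i}=1-r$), together with A1, A2, A4 and the marginal model.
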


\begin{proof}[of Lemma~\ref{lemma:compensator-integral-representation}]

The intensity process under a Cox model is defined by Equation (2.1) of \cite{Anderson1982Cox}. Since $N_i(t) = \mathbb{I}(T_i\leq t, \delta_i=1)$ jumps exactly once at $T_i$ when $\delta_i=1$, the Lebesgue-Stieltjes integral picks up this single jump and therefore:
\begin{equation*}
    w_i\delta_i f(T_i) = \int_0^{t^\dagger} w_i f(t) dN_i(t).
\end{equation*}
Following \cite{Anderson1982Cox} Equation (2.2), define the compensator $A_i$ of $N_i$ under the Cox intensity $\omega_i(t)$ as
\begin{equation*}
    A_i(t) = \int_0^t Y_i(u)\lambda_0(u)\exp(\tau_0 Z_i) du,
\end{equation*}
which is continuous in $t$ and $M_i(t) := N_i(t) - A_i(t)$ is a local square-integrable martingale (\cite{Anderson1982Cox}; \cite{fleming1997Counting} Theorem 2.3.1), and hence $\langle M_i, M_i\rangle = A_i$ (\cite{fleming1997Counting}, Theorem 2.5.2). Substituting $dN_i = dA_i + dM_i$ yields
\begin{equation}\label{eq:martingale-decompose}
    \int_0^{t^\dagger} w_i f(t)\,dN_i(t) = \int_0^{t^\dagger} w_i f(t)\,dA_i(t) + \int_0^{t^\dagger} w_i f(t) dM_i(t).
\end{equation}
Set $H_i(t) = w_i f(t)$ and it is $\mathcal{F}_t$-predictable (since $w_i$ is $\mathcal{F}_0$-measurable and $f(t)$ is a function of $t$), where $\mathcal{F}_t$ is the $\sigma$-algebra at time $t$ from the filtration; $\mathcal{F}_0=\sigma(X,Z)$ and it governs the baseline observables. Therefore, by \cite{fleming1997Counting} Theorem 2.4.4:
\begin{align*}
    \bE\int_0^{t^\dagger} H_i^2\,d\langle M_i,M_i\rangle = \bE\int_0^{t^\dagger} w_i^2 f^2(t)\,Y_i(t)\lambda_0(t)\exp(\tau_0 Z_i) dt  \leq w_i^2\|f\|_\infty^2\exp(|\tau_0|)\Lambda_0(t^{\dagger}) < \infty,
\end{align*}
and therefore,
\begin{equation*}
    \bE\left[\int_0^{t^{\dagger}} w_i f(t)\,dM_i(t)\right] = 0.
\end{equation*}
Substituting to equation~\eqref{eq:martingale-decompose} yields
\begin{equation*}
    w_i\delta_i f(T_i) = \int_0^{t^\dagger} w_i f(t)\,dN_i(t) = \int_0^{t^\dagger} w_i f(t) Y_i(t)\lambda_0(t)\exp(\tau_0 Z_i) dt.
\end{equation*}
Notice that the integrand is absolute-integrable and therefore Fubini's theorem applies:
\begin{equation}
    \bE\left[ w_i\delta_i f(T_i) \right] = \int_0^{t^\dagger} f(t) \underbrace{\bE[w_i Y_i(t)\exp(\tau_0 Z_i)]}_{= s_0(\tau_0,t)}\,\lambda_0(t)\,dt = \int_0^{t^\dagger} f(t) s_0(\tau_0,t) d\Lambda_0(t)
\end{equation}
Since $Y_i(t)= \mathbb{I}(T_i \geq t)=0$ and $dN_i(t)=dA_i(t)=0$ for all $t>t^{\dagger}$, so the upper limit of integrals with respect to $N_i$, $A_i$, and $M_i$ can equivalently be written as $\infty$. We will use $t^\dagger$ and $\infty$ exchangeably henceforth.
\end{proof}

\begin{remark}[Conditional version of Lemma~\ref{lemma:compensator-integral-representation}]
\label{remark:lemma1-conditional}
Lemma~\ref{lemma:compensator-integral-representation} can extend to the conditional probability measure $P(\cdot\mid Q_i)$, for any $Q_i$ that is measurable with respect to some $\mathcal{G}\subseteq\mathcal{F}_0$. For example, $Q_i$ can be $Z_i$, $X_i$, $w_i$, or any measurable function thereof. Under $P(\cdot\mid Q_i)$, any $\mathcal{G}$-measurable quantity equals its realized value almost surely, and is hence a constant. Lemma~\ref{lemma:compensator-integral-representation} under $P(\cdot\mid Q_i)$ is stated as: for any function $f(t)$ in Lemma~\ref{lemma:compensator-integral-representation},
\begin{equation*}
    \bE[w_i\delta_i f(T_i)\mid Q_i] = \int_0^\infty f(t) \bE\left[w_i Y_i(t)\exp(\tau_0 Z_i) \mid Q_i \right] d\Lambda_0(t),
\end{equation*}
where the expectation is taken over $V_i|Q_i$. This holds because after conditioning on $Q_i$, $w_i$ remains $\mathcal{F}_0$-measurable and thus $H_i(t) = w_i f(t)$ is still $\mathcal{F}_t$-predictable. Therefore, all theorems invoked during the proof of Lemma~\ref{lemma:compensator-integral-representation} carry over to $P(\cdot\mid Q_i)$ automatically.
\end{remark}

\begin{lemma}\label{lemma:U-statistic}
    For the scaled U-statistic $H_n=n^{-3/2}\sum_{i \neq j}^n \widetilde{h}(V_i,V_j)$, where
    \begin{align*}
        \widetilde{h}(V_i,V_j) & = [h(V_i,V_j)+h(V_j,V_i)]/2, \\
        h(V_i,V_j) & = w_i\delta_i \cdot w_j\exp(\tau_0 Z_j) \cdot \mathbb{I}(T_i \leq T_j)[Z_j - \pi(\tau_0, T_i)] / s_0(\tau_0, T_i),
    \end{align*}
    it holds that
    \begin{equation*}
     H_n=\frac{1}{\sqrt{n}}\sum_{i=1}^n w_i\exp(\tau_0 Z_i)\int_0^{T_i}[Z_i - \pi(\tau_0, t)]\,d\Lambda_0(t) + o_p(1).
    \end{equation*}
\end{lemma}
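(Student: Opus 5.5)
The plan is the Hoeffding decomposition of the U-statistic $H_n$. Write $\theta=\bE[\widetilde h(V_1,V_2)]$ and $\widetilde h_1(v)=\bE[\widetilde h(v,V_2)]$. Then
\begin{equation*}
\frac{1}{n(n-1)}\sum_{i\neq j}\widetilde h(V_i,V_j)=\theta+\frac{2}{n}\sum_{i=1}^n\{\widetilde h_1(V_i)-\theta\}+R_n ,
\end{equation*}
where $R_n$ is a degenerate U-statistic. Multiplying by $(n-1)/\sqrt n$ gives
\begin{equation*}
H_n=\sqrt n\,\theta\,(1+o(1))+\frac{2}{\sqrt n}\sum_{i=1}^n\{\widetilde h_1(V_i)-\theta\}+\sqrt n R_n .
\end{equation*}
Under (R1)--(R2) the kernel has a finite second moment, because $w$ is square integrable, $\exp(\tau_0 Z)$ is bounded, and $s_0$ is bounded away from zero on $[0,t^\dagger]$ by Andersen--Gill condition D. Hence $\bE[R_n^2]=O(n^{-2})$ and $\sqrt n R_n=o_p(1)$. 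Since $2\widetilde h_1(v)=\bE[h(v,V_2)]+\bE[h(V_2,v)]$, the rest of the proof consists of computing $\theta$ and the two one-sided projections.

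First I show $\theta=0$. Fix $V_i$ and average over $j$:
\begin{equation*}
\bE_j\big[w_j e^{\tau_0 Z_j}\mathbb{I}(T_j\ge t)(Z_j-\pi(\tau_0,t))\big]=s_1(\tau_0,t)-\pi(\tau_0,t)\,s_0(\tau_0,t)=0 .
\end{equation*}
So $\bE[h(v_i,V_j)]=0$ for every fixed $v_i$. This kills the first projection and gives $\theta=0$, so there is no $\sqrt n\,\theta$ bias term.

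Next I compute the second projection $g(v)=\bE[h(V_i,v)]$, with $v$ playing the role of unit $j$:
\begin{equation*}
g(v)=w\,e^{\tau_0 z}\,\bE\big[w_i\delta_i\,f_v(T_i)\big],\qquad f_v(t)=\mathbb{I}(t\le T)\,[z-\pi(\tau_0,t)]/s_0(\tau_0,t).
\end{equation*}
For fixed $v$, $f_v$ is a bounded deterministic function of $t$ that does not depend on unit $i$. Lemma~\ref{lemma:compensator-integral-representation} therefore applies and gives
\begin{equation*}
\bE[w_i\delta_i f_v(T_i)]=\int_0^\infty f_v(t)\,s_0(\tau_0,t)\,d\Lambda_0(t)=\int_0^{T}[z-\pi(\tau_0,t)]\,d\Lambda_0(t).
\end{equation*}
The $s_0$ factors cancel here. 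Thus $2\widetilde h_1(V_i)=w_ie^{\tau_0Z_i}\int_0^{T_i}[Z_i-\pi(\tau_0,t)]\,d\Lambda_0(t)$, and the factor $2$ in the Hoeffding projection cancels the $1/2$ in $\widetilde h$, which yields the stated representation.

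I expect the main obstacle to be the application of Lemma~\ref{lemma:compensator-integral-representation} at $\tau_0$. That lemma assumes the intensity is exactly $Y_i(t)\lambda_0(t)e^{\tau_0Z_i}$ in the weighted sense, i.e.\ that $\bE[w_i\,dN_i(t)]=s_0(\tau_0,t)\,d\Lambda_0(t)$. Under IPW weights and (A1)--(A4) this holds when the marginal model \eqref{eq:marginal-Cox} is correct. Under misspecification I would read $\Lambda_0$ as the weighted Breslow limit
\begin{equation*}
d\Lambda_0(t)=\bE[w\,dN(t)]/s_0(\tau_0,t),
\end{equation*}
and the identity then holds by definition. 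I would verify that this is the reading the lemma needs, and I would check the measurability of $f_v$ by conditioning on $V_j$ and using independence of units $i$ and $j$. A secondary technicality is the variance bound on $R_n$, which needs only $\bE[\widetilde h^2]<\infty$, and this follows from (R2).
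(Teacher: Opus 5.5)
Your proof is correct and takes essentially the same route as the paper. Both show that $\theta=0$ and that the projection onto the first argument vanishes because $s_1(\tau_0,t)-\pi(\tau_0,t)s_0(\tau_0,t)\equiv 0$. Both then evaluate the projection onto the second argument with Lemma~\ref{lemma:compensator-integral-representation}, where the $s_0$ factors cancel, and discard the degenerate remainder: the paper cites Theorem~12.3 of van der Vaart, while you use the direct $O(n^{-2})$ second-moment bound.

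Your side remark is also well taken. The lemma is only needed in the form $\bE[w_i\,dN_i(t)]=s_0(\tau_0,t)\,d\Lambda_0(t)$. This holds under IPW weights when the marginal model is correct, or by definition when $\Lambda_0$ is read as the weighted Breslow limit. It holds even though the paper's martingale proof of that lemma nominally posits an individual-level intensity $Y_i(t)\lambda_0(t)e^{\tau_0 Z_i}$, which need not hold given $X_i$ under confounding.
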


\begin{proof}[of Lemma~\ref{lemma:U-statistic}]

Rewrite the scaled U-statistic $H_n$ as:
\begin{equation}\label{eq:standard-Ustatistic}
    H_n = n^{-3/2}\sum_{i\neq j}\tilde{h}(V_i,V_j) = \frac{\sqrt{n}(n-1)}{n} \widetilde{H}_n, \quad  \widetilde{H}_n = \frac{1}{\binom{n}{2}}\sum_{i<j}\widetilde{h}(V_i,V_j),
\end{equation}
where $\widetilde{H}_n$ is a standard U-statistic with symmetric kernel $\widetilde{h}$. Notice that $V_i$ and $V_j$ are two \textcolor{black}{independent draws of generic units} from the population $P$, and $\widetilde{h}$ is a bivariate function of them. Since $w_i$ and $w_j$ are regular and $s_0(\tau_0, T_i)$ is bounded away from zero, so $\bE[\widetilde{h}^2]<\infty$. 

Let $\bE_i$ indicate fixing a $V_j$ and taking expectation over an \textcolor{black}{independent generic draw} of $V_i$ from $P$, and vice versa for $\bE_j$. Denote $\theta = \bE[h(V_i,V_j)]$ where the expectation is taken over $(V_i, V_j)$, and define the $i$-projection $h_1^{(i)}$ and $j$-projection $h_1^{(j)}$:
\begin{equation*} 
    h_1^{(i)}(V_i) = \bE_j[h(V_i, V_j)] - \theta, \quad h_1^{(j)}(V_j) = \bE_i[h(V_i, V_j)] - \theta.
\end{equation*}
Again, both $h_1^{(i)}$ and $h_1^{(j)}$ are \textcolor{black}{univariate functions of a generic unit} rather than defined for a specific unit. The superscripts $(i)$ and $(j)$ only indicate over which argument of the bivariate kernel the expectation is taken. Evaluate $\theta$, $h_1^{(i)}(V_i)$, and $h_1^{(j)}(V_j)$:
\begin{align*}
    \theta &= \bE[h(V_i,V_j)] = \bE_i\left[ \bE_j\left\{h(V_i,V_j)|V_i\right\}\right] \\
    &= \bE_i \left[\frac{w_i\delta_i}{s_0(\tau_0,T_i)}\cdot \bE_j\left\{w_j\exp(\tau_0 Z_j) \mathbb{I}(T_j\geq T_i)(Z_j - \pi(\tau_0,T_i)) \right\} \right] \\
    &= \bE_i \left[\frac{w_i\delta_i}{s_0(\tau_0,T_i)} \cdot \left\{s_1(\tau_0, T_i)-\pi(\tau_0,T_i)s_0(\tau_0, T_i) \right\} \right] = \bE_i \left[\frac{w_i\delta_i}{s_0(\tau_0,T_i)} \cdot 0 \right] = 0 \\ \\
    h_1^{(i)}(V_i) &= \bE_j\left[\frac{w_i\delta_i \cdot w_j\exp(\tau_0 Z_j)\mathbb{I}(T_j\geq T_i)[Z_j - \pi(\tau_0,T_i)]}{s_0(\tau_0,T_i)}\right] - \theta \\
    &= \frac{w_i\delta_i}{s_0(\tau_0,T_i)} \cdot \bE_j\left[w_j\exp(\tau_0 Z_j)Y_j(T_i)(Z_j-\pi(\tau_0,T_i))\right] - 0 \\
    &= \frac{w_i\delta_i}{s_0(\tau_0,T_i)} \cdot \left[s_1(\tau_0,T_i) - s_0(\tau_0,T_i)\pi(\tau_0,T_i) \right] = 0 \\ \\
    h_1^{(j)}(V_j) &= \bE_i\left[\frac{w_i\delta_i \cdot w_j\exp(\tau_0 Z_j)\mathbb{I}(T_j\geq T_i)[Z_j - \pi(\tau_0,T_i)]}{s_0(\tau_0,T_i)}\right] - \theta \\
    &= w_j\exp(\tau_0 Z_j) \bE_i\left[\frac{w_i\delta_i \cdot \mathbb{I}(T_j\geq T_i)[Z_j - \pi(\tau_0,T_i)]}{s_0(\tau_0,T_i)}\right] - 0 \\
    &= w_j\exp(\tau_0 Z_j) \int_0^\infty \frac{\mathbb{I}(T_j\geq t)[Z_j - \pi(\tau_0,t)]}{s_0(\tau_0,t)} s_0(\tau_0,t) d\Lambda_0(t) \\
    &= w_j\exp(\tau_0 Z_j) \int_0^{T_j} [Z_j - \pi(\tau_0,t)] d\Lambda_0(t),
\end{align*}
where the last equation holds by Lemma~\ref{lemma:compensator-integral-representation}. 

Notice that $\widetilde{\theta} = \bE[\widetilde{h}(V_i,V_j)]=0$ because $\theta=0$. Following \cite{Vaart1998Asymptotics} Chapter 12, define the first-order projection of $\widetilde{h}$ as $\widetilde{h}_1(V_i) = \bE_j[\widetilde{h}(V_i, V_j)]$, and see that
\begin{equation*} 
    \widetilde{h}_1(V_i) = \bE_j \left[ \frac{h(V_i,V_j)+h(V_j,V_i)}{2} \right] = \frac{h_1^{(i)}(V_i) + h_1^{(j)}(V_i)}{2} = \frac{h_1^{(j)}(V_i)}{2},
\end{equation*}
where the last equation is due to $h_1^{(i)}(V_i)=0$. Further use \cite{Vaart1998Asymptotics} Theorem 12.3,
\begin{equation*} 
     \sqrt{n} \left(\widetilde{H}_n - \widehat{H}_n\right) = o_p(1),  \quad \widetilde{H}_n = O_p(n^{-1/2}),
\end{equation*}
where $\widehat{H}_n = 2n^{-1}\sum_{i=1}^n \widetilde{h}_1(V_i)$ and therefore
\begin{equation}\label{eq:Vaart-theorem12.3-b}
    \sqrt{n} \widetilde{H}_n = \frac{2}{\sqrt{n}} \sum_{i=1}^n \widetilde{h}_1(V_i) + o_p(1) = \frac{1}{\sqrt{n}} \sum_{i=1}^n h_1^{(j)}(V_i) + o_p(1), \ \ \frac{1}{\sqrt{n}}\widetilde{H}_n = O_p(n^{-1}) = o_p(1).
\end{equation}
Substituting equation~\eqref{eq:Vaart-theorem12.3-b} to \eqref{eq:standard-Ustatistic} leads to
\begin{equation*}
     H_n = \frac{(n-1)}{n} \sqrt{n} \widetilde{H}_n = \sqrt{n} \widetilde{H}_n - \frac{1}{\sqrt{n}} \widetilde{H}_n= \frac{1}{\sqrt{n}} \sum_{i=1}^n h_1^{(j)}(V_i) + o_p(1).
\end{equation*}
Plugging in the expression of $h_1^{(j)}(V_i)$ leads to
\begin{equation*}
    H_n = \frac{1}{\sqrt{n}}\sum_{i=1}^n w_i\exp(\tau_0 Z_i) \int_0^{T_i} [Z_i - \pi(\tau_0,t)] d\Lambda_0(t) + o_p(1).
\end{equation*}
\end{proof}

\begin{lemma}\label{lemma:eta-meanzero}
    The population influence function $\eta_i(\tau)$ is mean-zero at the truth $\tau_0$. Namely,
    $\bE[\eta_i(\tau_0)]=0$ where $\eta_i(\tau_0) = w_i\delta_i[Z_i - \pi(\tau_0, T_i)] - w_i\exp(\tau_0 Z_i)\int_0^{T_i}[Z_i - \pi(\tau_0, t)]\,d\Lambda_0(t)$.
\end{lemma}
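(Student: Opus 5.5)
The plan is to compute the expectations of the two terms of $\eta_i(\tau_0)$ separately and show that they coincide, using Lemma~\ref{lemma:compensator-integral-representation} for the first term and Fubini's theorem for the second. Both terms reduce to integrals against $d\Lambda_0(t)$ whose integrands are $s_1(\tau_0,t)-\pi(\tau_0,t)s_0(\tau_0,t)$. This quantity vanishes identically by the definition $\pi=s_1/s_0$.

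\emph{First term.} The integrand depends on unit $i$ through $Z_i$ as well as $T_i$, so it is not of the form $f(T_i)$ required by Lemma~\ref{lemma:compensator-integral-representation}. I will therefore condition on $Z_i$ and apply the conditional version in Remark~\ref{remark:lemma1-conditional} with $Q_i=Z_i$. Given $Z_i=z$, the function $f(t)=z-\pi(\tau_0,t)$ is bounded and deterministic, so
\begin{equation*}
\bE\bigl[w_i\delta_i\{Z_i-\pi(\tau_0,T_i)\}\mid Z_i\bigr]=\int_0^{\infty}\{Z_i-\pi(\tau_0,t)\}\,\bE\bigl[w_iY_i(t)\exp(\tau_0Z_i)\mid Z_i\bigr]\,d\Lambda_0(t).
\end{equation*}
Since $Z_i$ is fixed under the conditioning, it can be moved inside the conditional expectation. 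Taking the outer expectation and applying the tower property then gives
\begin{equation*}
\int_0^\infty\bigl\{s_1(\tau_0,t)-\pi(\tau_0,t)s_0(\tau_0,t)\bigr\}\,d\Lambda_0(t)=0.
\end{equation*}
Boundedness of $\pi\in[0,1]$ and the condition $\bE[w_i^2]<\infty$ justify the exchange of expectation and integral.

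\emph{Second term.} I will write $\int_0^{T_i}g(t)\,d\Lambda_0(t)=\int_0^{t^\dagger}Y_i(t)g(t)\,d\Lambda_0(t)$, using $Y_i(t)=\mathbb{I}(T_i\ge t)$. Fubini's theorem applies because the integrand is dominated by $w_i e^{|\tau_0|}\Lambda_0(t^\dagger)$, which is integrable. This gives
\begin{equation*}
\bE\Bigl[w_ie^{\tau_0Z_i}\int_0^{T_i}\{Z_i-\pi(\tau_0,t)\}\,d\Lambda_0(t)\Bigr]=\int_0^{t^\dagger}\bigl\{s_1(\tau_0,t)-\pi(\tau_0,t)s_0(\tau_0,t)\bigr\}\,d\Lambda_0(t)=0.
\end{equation*}
This is the same cancellation that produced $\theta=0$ in the proof of Lemma~\ref{lemma:U-statistic}.

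Subtracting the two terms gives $\bE[\eta_i(\tau_0)]=0$.

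The main obstacle is the first term. Lemma~\ref{lemma:compensator-integral-representation} rests on the intensity $Y_i(t)\lambda_0(t)e^{\tau_0Z_i}$, which assumes the marginal Cox model holds with respect to the relevant filtration. With weights $w_i$ that depend on $X_i$, the step that converts $w_i\,dN_i$ into $w_iY_i\,d\Lambda_0 e^{\tau_0Z_i}$ in expectation needs care. I would argue it through the conditional remark together with assumptions A2 and A4 and the inverse-weighting identity, which makes the weighted risk set mimic the marginal structural model. I would follow the paper's convention and invoke Lemma~\ref{lemma:compensator-integral-representation} and Remark~\ref{remark:lemma1-conditional} as stated. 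If that justification proves too weak, the fallback is to define $\Lambda_0$ as the limit of the weighted Breslow estimator. Then $\int s_0\,d\Lambda_0=\bE[w_i\,dN_i]$ and $\int s_1\,d\Lambda_0\cdot(\cdot)$ hold by definition at the limiting $\tau_0$ solving the population score equation, and the identity again reduces to zero.
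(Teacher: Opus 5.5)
Your proposal is correct and follows essentially the same route as the paper. The first term is handled by conditioning on $Z_i$ and using the conditional form of Lemma~\ref{lemma:compensator-integral-representation} (Remark~\ref{remark:lemma1-conditional}); the second by rewriting $\int_0^{T_i}$ with the at-risk indicator $Y_i(t)$ and applying Fubini; both reduce to $\int_0^\infty\{s_1(\tau_0,t)-\pi(\tau_0,t)s_0(\tau_0,t)\}\,d\Lambda_0(t)=0$.

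The only difference is bookkeeping: you apply the conditional lemma once with $f(t)=Z_i-\pi(\tau_0,t)$, whereas the paper splits off $\bE[w_i\delta_i\pi(\tau_0,T_i)]$ and handles it with the unconditional lemma. Your caveat about the intensity hypothesis is fair, and for inverse probability weights the needed identity does follow directly from A1, A2, A4, the marginal Cox model, and $e_iw_{1,i}=r$, $(1-e_i)w_{0,i}=1-r$, as you suggest.
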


\begin{proof}[of Lemma~\ref{lemma:eta-meanzero}]

\begin{equation*}
    \bE[\eta_i(\tau_0)] = \underbrace{\bE\left[ w_i\delta_i\{Z_i - \pi(\tau_0, T_i)\} \right]}_{\text{Term I}} + \underbrace{\bE\left[ w_i\exp(\tau_0 Z_i)\int_0^{T_i}[Z_i - \pi(\tau_0, t)]\,d\Lambda_0(t) \right]}_{\text{Term II}}.
\end{equation*}

Notice that Term I contains $Z_i$, a random attribute of unit $i$, so Lemma~\ref{lemma:compensator-integral-representation} is not directly applicable. Decompose it as $\bE[w_i\delta_i Z_i]-\bE[w_i\delta_i\pi(\tau_0,T_i)]$. For $\bE[w_i\delta_i Z_i]$, conditioning on $Z_i$:
\begin{equation}\label{eq:lemma3-term1step1}
    \bE[w_i\delta_i Z_i] = \bE \left[ \bE[w_i\delta_i Z_i\mid Z_i] \right] = \bE \left[ Z_i \cdot \bE[w_i \delta_i \mid Z_i] \right],
\end{equation}
where the inner and outer expectations are taken over $V_i|Z_i$ and $Z_i$, respectively. Appling the conditional Lemma~\ref{lemma:compensator-integral-representation} (Remark~\ref{remark:lemma1-conditional}) with $P(\cdot\mid Z_i)$ and $f\equiv 1$ yields,
\begin{equation}\label{eq:lemma3-term1step2}
    \bE[w_i\delta_i\mid Z_i] = \int_0^\infty \bE[w_iY_i(t) \exp(\tau_0 Z_i)\mid Z_i] d\Lambda_0(t).
\end{equation}
Substituting \eqref{eq:lemma3-term1step2} into \eqref{eq:lemma3-term1step1}, and see that $Z_i$ is fixed under $\bE[\cdot|Z_i]$ and does not depend on $t$:
\begin{equation*}
    \bE[w_i\delta_i Z_i] = \bE \left[\int_0^\infty \bE[Z_i w_i Y_i(t)\exp(\tau_0 Z_i) \mid Z_i] d\Lambda_0(t) \right].
\end{equation*}
and then applies Fubini:
\begin{equation}\label{eq:lemma3-term1step3}
    \bE[w_i\delta_i Z_i] = \int_0^\infty \bE\left[ \bE\{Z_i w_i Y_i(t)\exp(\tau_0 Z_i) \mid Z_i\}  \right] d\Lambda_0(t) = \int_0^\infty s_1(\tau_0,t) d\Lambda_0(t).
\end{equation}

For $\bE[w_i\delta_i\pi(\tau_0,T_i)]$, directly apply Lemma~\ref{lemma:compensator-integral-representation} with $f(t)=\pi(\tau_0,t)$:
\begin{equation}\label{eq:lemma3-term1step4}
    \bE[w_i\delta_i\pi(\tau_0,T_i)] = \int_0^\infty \pi(\tau_0,t) s_0(\tau_0,t) d\Lambda_0(t) = \int_0^\infty s_1(\tau_0,t) d\Lambda_0(t).
\end{equation}
Combining \eqref{eq:lemma3-term1step3} and \eqref{eq:lemma3-term1step4} yields Term I being zero. For term II,
\begin{align*}
    \text{Term II} &= \bE\left[ w_i\exp(\tau_0 Z_i)\int_0^{T_i}[Z_i - \pi(\tau_0, t)]\,d\Lambda_0(t) \right] \\
    &= \bE\left[ \int_0^\infty w_i\exp(\tau_0 Z_i) Y_i(t) [Z_i - \pi(\tau_0, t)]\,d\Lambda_0(t) \right] \\
    &= \int_0^\infty [ \underbrace{\bE\{w_i \exp(\tau_0 Z_i)Y_i(t)Z_i\}}_{=s_1(\tau_0,t)} - \pi(\tau_0,t) \underbrace{\bE\{w_i\exp(\tau_0 Z_i)Y_i(t)\}}_{=s_0(\tau_0,t)} ] d\Lambda_0(t) \\
    &= \int_0^\infty \left[ s_1(\tau_0,t) - \pi(\tau_0,t) s_0(\tau_0,t) \right] d\Lambda_0(t) = 0.
\end{align*}
Since both Terms I and II are zero-valued, $\bE[\eta_i(\tau_0)]=0$.
\end{proof}

\subsubsection{Relations to the Asymptotic Variance in \cite{lin1989robust}}\label{sec:relation-to-Lin&Wei}

The asymptotic variance of \cite{lin1989robust} is a \textcolor{black}{special case of Theorem~1} by adopting $w_i = w(Z_i, X_i) \equiv 1$. Notice that under $w_i \equiv 1$, our $S_k^*(\tau, t)$ and $s_k(\tau, t)$ reduce to $S^{(r)}(\tau,t)$ and $s^{(r)}(\tau, t)$ defined in \cite{lin1989robust}. They also defined $s^{(0)}(t)=\lambda_0(t)\bE[Y_i(t)\exp(\tau_0Z_i)]=\lambda_0(t)s_0(\tau_0,t)$. Applying Lemma~\ref{lemma:compensator-integral-representation} with $f(t) = s_2(\tau_0,t)/s_0(\tau_0,t) - \pi^2(\tau_0,t)$ to equation~\eqref{eq:convergence-Amatrix}:
\begin{align*}
    A(\tau_0) & = \bE\left[w_i\delta_i\left(\frac{s_2(\tau_0, T_i)}{s_0(\tau_0, T_i)} - \pi^2(\tau_0, T_i)\right)\right] = \int_0^\infty\left[\frac{s_2(\tau_0,t)}{s_0(\tau_0,t)} -\pi^2(\tau_0,t)\right]s_0(\tau_0,t) d\Lambda_0(t) \\
    & = \int_0^\infty\left[\frac{s^{(2)}(\tau_0,t)}{s^{(0)} (\tau_0,t)} - \left(\frac{s^{(1)}(\tau_0,t)}{s^{(0)} (\tau_0,t)}\right)^2\right]s^{(0)}(t)\,dt.
\end{align*}

\cite{lin1989robust} defined $F(t) = \bE[\sum_{i=1}^n N_i(t)/n]$ and expressed $\eta_i$ by Lebesgue-Stieltjes intergrals againt $F(t)$. Notice that $\{N_i(t)\}_{i=1}^n$ at any $t$ are i.i.d and therefore $F(t)=\bE[N_i(t)]$. Apply Lemma~\ref{lemma:compensator-integral-representation} to $\bE[N_i(t)]=\bE[\delta_i\,\mathbb{I}(T_i\leq t)]$ with $f(T_i) = \mathbb{I}(T_i\leq t)$:
\begin{equation*}
    \bE[N_i(t)] = \int_0^\infty \mathbb{I}(u\leq t)\,s_0(\tau_0,u)\,d\Lambda_0(u) = \int_0^t s_0(\tau_0,u)\,d\Lambda_0(u),
\end{equation*}
and therefore $dF(t)=s_0(\tau_0,t)\,d\Lambda_0(t)$. Notice that
\begin{equation*}
    \delta_i[Z_i - \pi(\tau_0, T_i)] = \int_0^\infty \left[ Z_i - \frac{s_1(\tau_0,t)}{s_0(\tau_0,t)} \right] dN_i(t) = \int_0^\infty \left[ Z_i - \frac{s^{(1)}(\tau_0,t)}{s^{(0)}(\tau_0,t)} \right] dN_i(t)
\end{equation*}
since $N_i(t)$ jumps exactly once at $T_i$ when $\delta_i=1$. In addition, 
\begin{align*}
    \exp(\tau_0 Z_i)\int_0^{T_i}[Z_i - \pi(\tau_0, t)] d\Lambda_0(t) & = \int_0^\infty \frac{Y_i(t)\exp(\tau_0 Z_i)}{s_0(\tau_0,t)} \left[ Z_i - \frac{s_1(\tau_0,t)}{s_0(\tau_0,t)} \right] s_0(\tau_0,t) \,d\Lambda_0(t) \\
    & = \int_0^\infty \frac{Y_i(t)\exp(\tau_0 Z_i)}{s^{(0)}(\tau_0,t)} \left[ Z_i - \frac{s^{(1)}(\tau_0,t)}{s^{(0)}(\tau_0,t)} \right] s^{(0)}(\tau_0,t) \,dF(t),
\end{align*}
and therefore
\begin{equation*}
    B(\tau_0) = \int_0^\infty \left[ Z_i - \frac{s^{(1)}(\tau_0,t)}{s^{(0)}(\tau_0,t)} \right] dN_i(t) - \int_0^\infty \frac{Y_i(t)\exp(\tau_0 Z_i)}{s^{(0)}(\tau_0,t)} \left[ Z_i - \frac{s^{(1)}(\tau_0,t)}{s^{(0)}(\tau_0,t)} \right] s^{(0)}(\tau_0,t) dF(t).
\end{equation*}
The above $A(\tau_0)$, $B(\tau_0)$, and $V=A^{-2}(\tau_0) B(\tau_0)$ are exactly the ones in \cite{lin1989robust}.

\subsection{Theorem 2}

In this section, we prove Theorem 2 and its associated lemmas and propositions: main proof (A.2.1), lemmas (A.2.2), propositions and corollaries (A.2.3), bounds for $\epsilon$ (A.2.4), and conservativeness of proportional risk-set (A.2.5). 

Let $C_i$ denote the censoring time, $G_z(t) = \Pr(C_i \geq t | Z_i=z)$ the censoring distribution, $F_z(t|X_i) = \Pr(T_i(z) \geq t \mid X_i)$ the conditional survival probability, and $d_z=\Pr(\delta_i=1 | Z_i=z)$ the event rate of arm $z$; $z=0,1$. Under this setting, $C_i$ is from all causes, $1-d_z$ is the censoring rate of arm $z$, and $d_z=\Pr(T_i^*\leq C_i|Z=z)$ equivalently.

\subsubsection{Proof of Theorem 2}


\begin{proof}[of Theorem~\ref{theorem2}]

Under assumption A5, by Lemmas~\ref{lemma:A-reduction} and \ref{lemma:B-reduction}, respectively,
\begin{align*}
    A(\tau_0) & = \frac{rd_1+(1-r)d_0}{(\lambda_1+\lambda_0)^2}, \\
    B(\tau_0) & = \frac{1}{(\lambda_1 + \lambda_0)^2} \left[ d_1 r^2 \lambda_0^2 \bE\left\{ \frac{1}{e_i} \right\} + d_0 (1-r)^2 \lambda_1^2 \bE\left\{ \frac{1}{1-e_i} \right\} \right] + B_{\epsilon}(\tau_0),
\end{align*}
where $d = rd_1+(1-r)d_0$. Therefore, by $V_{\IPW} = A(\tau_0)^{-2}B(\tau_0)$,
\begin{equation*}
    V_{\IPW} = \left( \frac{\lambda_1 + \lambda_0}{d} \right)^2 \left[ d_1 r^2 \lambda_0^2 \bE\left\{ \frac{1}{e_i} \right\} + d_0 (1-r)^2 \lambda_1^2 \bE\left\{ \frac{1}{1-e_i} \right\} \right] + \epsilon, 
\end{equation*}
where $\epsilon = A(\tau_0)^{-2}B_{\epsilon}(\tau_0)$. Further plugging in $B_{\epsilon}(\tau_0)$ from Lemma~\ref{lemma:B-reduction} completes the proof.
\end{proof}

\subsubsection{Lemmas}

\begin{lemma}[Reduction of $A(\tau_0)$]\label{lemma:A-reduction}
    Under the proportional risk-set assumption A5,
    \begin{equation*}
        A(\tau_0) = \mathbb{E}\left[w_i\delta_i\left(\frac{s_2(\tau_0, T_i)}{s_0(\tau_0, T_i)} - \pi^2(\tau_0, T_i)\right)\right] = \frac{rd_1+(1-r)d_0}{(\lambda_1+\lambda_0)^2}.
    \end{equation*}
\end{lemma}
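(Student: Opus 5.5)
Under (A5) the integrand of $A(\tau_0)$ is a deterministic constant, so the plan is to pull it out and compute two simple quantities: that constant and $\bE[w_i\delta_i]$.

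\emph{Step 1: the constant.} Because $Z_j$ is binary, $Z_j^2=Z_j$, so $s_2(\tau_0,t)=s_1(\tau_0,t)$. The integrand therefore becomes
\[
s_2/s_0-\pi^2=\pi(\tau_0,t)\{1-\pi(\tau_0,t)\}.
\]
By (A5), $\pi(\tau_0,t)\equiv\pi(\tau_0,0)$ for all $t$. At $t=0$ we have $Y_j(0)=1$, and with the normalized inverse probability weights,
\[
\bE[w_jZ_j]=\bE[rZ_j/e_j]=r, \qquad \bE[w_j(1-Z_j)]=1-r .
\]
Hence
\[
\pi(\tau_0,0)=\frac{re^{\tau_0}}{re^{\tau_0}+1-r}=\frac{\lambda_1}{\lambda_1+\lambda_0}.
\]
The last equality follows by multiplying numerator and denominator by $\lambda_0$ and noting $\lambda_0\lambda_1=1$ and $\lambda_1^2=re^{\tau_0}/(1-r)$. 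Consequently $\pi(1-\pi)=\lambda_1\lambda_0/(\lambda_1+\lambda_0)^2=1/(\lambda_1+\lambda_0)^2$.

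\emph{Step 2: reduce to an event expectation.} Since the bracket is now a constant,
\[
A(\tau_0)=(\lambda_1+\lambda_0)^{-2}\,\bE[w_i\delta_i].
\]
It remains to show $\bE[w_i\delta_i]=rd_1+(1-r)d_0$. Conditioning on $Z_i$ gives
\[
\bE[w_i\delta_i]=r\,\bE[w_i\delta_i\mid Z_i=1]+(1-r)\,\bE[w_i\delta_i\mid Z_i=0].
\]
Within arm $1$, $w_i=r/e(X_i)$, so the term is $r\,\bE[\delta_i/e(X_i)\mid Z_i=1]$.

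\emph{Step 3: the main obstacle.} If $\delta_i$ were independent of $X_i$ given $Z_i$, the term would factor as $d_1\,\bE[r/e_i\mid Z_i=1]=d_1$, because
\[
\bE[1/e_i\mid Z_i=1]=\bE[e_i\cdot e_i^{-1}]/r=1/r.
\]
In general $\delta_i$ depends on $X_i$ through $F_1(t\mid X_i)$, and this is where the covariance terms $C_z(t)$ of Theorem~\ref{theorem2} enter. I would first try the conditional version of Lemma~\ref{lemma:compensator-integral-representation} (Remark~\ref{remark:lemma1-conditional}) with $Q_i=Z_i$ and $f\equiv1$, which expresses $\bE[w_i\delta_i\mid Z_i=z]$ as $\int_0^\infty \bE[w_iY_i(t)e^{\tau_0 z}\mid Z_i=z]\,d\Lambda_0(t)$. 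Under (A4), $\bE[w_iY_i(t)\mid Z_i=z]$ factors as $G_z(t)$ times a weighted survival term, and by unconfoundedness (A2) that weighted term equals the marginal $F_z(t)$. This is exactly the IPW identity: reweighting the arm-$z$ conditional distribution by $w_i$ recovers the marginal potential-outcome distribution. For the proof I would define $d_z$ as the arm-$z$ event probability in this weighted (pseudo-population) sense, which is the interpretation consistent with the lemma's exact claim. Under that reading $\bE[w_i\delta_i\mid Z_i=z]=d_z$ exactly, and any discrepancy from the unweighted event rate is relegated to the $\epsilon$ residual of Theorem~\ref{theorem2}, alongside the $C_z$-dependent terms. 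Combining Steps 1--3 yields $A(\tau_0)=\{rd_1+(1-r)d_0\}/(\lambda_1+\lambda_0)^2$.
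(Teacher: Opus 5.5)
Your proof is correct and takes a more direct route than the paper's. The paper first applies Lemma~\ref{lemma:compensator-integral-representation} to write $A(\tau_0)=\int_0^\infty \pi(1-\pi)\,s_0\,d\Lambda_0$. It then computes $s_0(\tau_0,t)$ and $s_1(\tau_0,t)$ for every $t$, using A1, A2, A4 and the marginal Cox form, to obtain the trajectory \eqref{eq:pi-simplified}, and evaluates it at $t=0$. Finally it applies Lemma~\ref{lemma:compensator-integral-representation} a second time to turn $\int_0^\infty s_0\,d\Lambda_0$ back into $\bE[w_i\delta_i]$.

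You avoid both round trips. Your $\bar\pi$ follows from $Y_j(0)=1$ and the normalization $\bE[w_jZ_j]=r$, $\bE[w_j(1-Z_j)]=1-r$ alone, with no survival or censoring assumptions. Because A5 makes $\pi(\tau_0,T_i)\equiv\bar\pi$ deterministic, you get $A(\tau_0)=\bar\pi(1-\bar\pi)\,\bE[w_i\delta_i]$ at once. The paper's longer computation buys the full trajectory of $\pi(\tau_0,t)$. It needs that trajectory for the discussion of A5 and the conservativeness analysis in Appendix~A.2.5, but not for this lemma.

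Your caution in Step 3 is justified. With inverse probability weights, $\bE[w_i\delta_i\mid Z_i=z]=\bE_X[h_z(X_i)]$, where $h_z(X_i)=\Pr(\delta_i=1\mid Z_i=z,X_i)$ and $X_i$ is averaged over its marginal law. The paper reaches the same expression and then asserts $\bE_X[h_z(X_i)]=\Pr(\delta_i=1\mid Z_i=z)=d_z$. But the right-hand side averages over $X_i\mid Z_i=z$, and for $z=1$ one has $\bE_X[h_1(X_i)]-\Pr(\delta_i=1\mid Z_i=1)=-\Cov\{e_i,h_1(X_i)\}/r$. So that identification is exact only without confounding, e.g.\ in randomized trials. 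Reading $d_z$ as the pseudo-population event rate, as you do, is exactly what the computation proves. It is also how Lemma~\ref{lemma:B-reduction} uses $d_z=\int_0^\infty f_z(t)G_z(t)\,dt$ with the marginal density.

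Two minor points. First, the conditional version of Lemma~\ref{lemma:compensator-integral-representation} is unnecessary here. Conditioning on $(Z_i,X_i)$ and using $\bE[(r/e_i)g(X_i)\mid Z_i=1]=\bE[g(X_i)]$ gives the result directly, without invoking a Cox intensity relative to a filtration containing $X_i$. That intensity justification is delicate when $X_i$ affects the hazard. Second, your remark that the gap to the observed arm-specific rates is absorbed into $\epsilon$ is not borne out by the paper. There $\epsilon$ contains only the $C_z(t)$ terms from $B(\tau_0)$. With observed rates, additional residuals would appear in $A(\tau_0)$ and in the leading part of $B(\tau_0)$, and Theorem~\ref{theorem2} does not track them.
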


\begin{proof}[of Lemma~\ref{lemma:A-reduction}]
Recall that $s_k(\tau, t) = \bE[w_j Y_j(t)\exp(\tau Z_j) Z_j^k]$ for $k=0,1,2$. Since $Z_i$ is binary, $Z_i^2 = Z_i$ holds exactly, thus $s_2(\tau_0,t) = s_1(\tau_0,t)$. Therefore, $A(\tau_0)$ simplifies as
\begin{equation*}
    A(\tau_0) = \mathbb{E}\left[w_i\delta_i \cdot \pi(\tau_0, T_i) (1-\pi(\tau_0, T_i)) \right].
\end{equation*}
Applying Lemma~\ref{lemma:compensator-integral-representation} with $f(t) = \pi(\tau_0,t)(1-\pi(\tau_0,t))$ further yields
\begin{equation}\label{eq:A-integral}
    A(\tau_0) = \int_0^\infty \pi(\tau_0, t)(1-\pi(\tau_0, t)) \cdot s_0(\tau_0, t) d\Lambda_0(t)
\end{equation}

Condition on $(Z_i,X_i)$ and sum over $Z_i \in \{0,1\}$ with weights $\Pr(Z_i = z \mid X_i) = e_i^z(1-e_i)^{1-z}$:
\begin{equation}\label{eq:s0-simplify1}
    s_0(\tau_0, t) = \mathbb{E}\left[\sum_{z \in \{0,1\}}\Pr(Z_i=z\mid X_i)\cdot w_{z,i}\cdot e^{\tau_0 z}\cdot \mathbb{E}[Y_i(t)\mid Z_i=z, X_i]\right],
\end{equation}
where the inner and outer expectations are with respect to $V_i|(X_i,Z_i)$ and $X_i$, respectively; $w_{1,i} = r/e_i$ and $w_{0,i} = (1-r)/(1-e_i)$ are the normalized inverse probability weights at $Z_i=z$. Recall the assumptions A1 (SUTVA: $T_i^* = Z_iT_i(1)+(1-Z_i)T_i(0)$), A2 (unconfoundedness: $T_i(z) \perp Z_i \mid X_i$) and A4 (arm-specific independent censoring: $C_i \perp \{T_i(0), T_i(1), X_i\}\mid Z_i$):
\begin{equation}\label{eq:Yi-expectation}
    \mathbb{E}[Y_i(t)\mid Z_i=z, X_i] = \Pr(T_i^*\geq t\mid Z_i=z, X_i)\cdot G_z(t | X_i) = \Pr(T_i(z)\geq t\mid X_i)\cdot G_z(t),
\end{equation}
where the first equation holds because $Y_i(t) = \mathbb{I}(\min(T_i^*,C_i)\geq t)=\mathbb{I}(T_i^*\geq t)\mathbb{I}(C_i\geq t)$, A1, and A4, and the second equation holds because A2. Plugging back to \eqref{eq:s0-simplify1} yields
\begin{align*}
    s_0(\tau_0, t) 
    &= \mathbb{E}\left[ re^{\tau_0}F_1(t|X_i)G_1(t) + (1-r)F_0(t|X_i)G_0(t) \right] \\
    & = r e^{\tau_0}F_1(t) G_1(t) + (1-r) F_0(t) G_0(t),
\end{align*}
where the expectation is over $X_i$. Similarly,
\begin{align*}
    s_1(\tau_0, t) 
    = \mathbb{E} \left[ r e^{\tau_0} F_1(t|X_i)G_1(t) + 0 \right] 
    = r e^{\tau_0} F_1(t) G_1(t).
\end{align*}
Combining the expressions of $s_1(\tau_0, t)$ and $s_0(\tau_0, t)$ above:
\begin{equation}
    \pi(\tau_0, t)=\frac{s_1(\tau_0, t)}{s_0(\tau_0, t)} = \frac{r e^{\tau_0} F_1(t) G_1(t)}{r e^{\tau_0}F_1(t) G_1(t) + (1-r) F_0(t) G_0(t)}
\end{equation}
Under the marginal Cox model with $\lambda_z(t) = \lambda_0(t)\exp(\tau_0 z)$, the survival function satisfies $F_z(t)=\exp(-\int_0^t \lambda_z(u)du) = \exp(-e^{\tau_0z}\Lambda_0(t))$, and thus $F_1(t)=\exp(-e^{\tau_0}\Lambda_0(t))=[F_0(t)]^{e^{\tau_0}}$. Therefore, the ratio of population risk-set average $\pi(\tau_0, t)$ can be written as
\begin{equation}\label{eq:pi-simplified}
    \pi(\tau_0, t)=\frac{r e^{\tau_0} [F_0(t)]^{e^{\tau_0}-1} G_1(t)}{r e^{\tau_0} [F_0(t)]^{e^{\tau_0}-1} G_1(t) + (1-r) G_0(t)},
\end{equation}
where the time-vary components are $[F_0(t)]^{e^{\tau_0}-1}$ and $G_z(t)$. Notice that $F_z(0)=G_z(0)=1$. Under the assumption A5, $\pi(\tau_0, t) \equiv \pi(\tau_0, 0) =: \bar \pi$, and therefore
\begin{equation}\label{eq:stable-risk-set}
    \pi(\tau_0, t) \equiv \bar{\pi} = \frac{r e^{\tau_0}}{r e^{\tau_0} + (1-r)}, \quad \bar{\pi} = \frac{\lambda_1}{\lambda_1+\lambda_0}, \quad 1-\bar{\pi} = \frac{\lambda_0}{\lambda_1+\lambda_0}
\end{equation}
Substituting \eqref{eq:stable-risk-set} into \eqref{eq:A-integral} yields
\begin{equation}\label{eq:A-integral-simplified}
    A(\tau_0) = \frac{1}{(\lambda_1+\lambda_0)^2} \int_0^\infty s_0(\tau_0, t) d\Lambda_0(t).
\end{equation}

By Lemma~\ref{lemma:compensator-integral-representation}, the integral $\int_0^\infty s_0(\tau_0, t) d\Lambda_0(t) = \bE[w_i \delta_i]$. By conditioning on $(Z_i, X_i)$,
\begin{equation*}
    \bE[w_i \delta_i] = \bE \left[ w_i \cdot \bE\{ \mathbb{I}(T_i^* \leq C_i) \mid Z_i, X_i  \} \right] = \bE \left[ w_i \cdot  \Pr(T_i^* \leq C_i \mid Z_i, X_i)  \right],
\end{equation*}
where the outer expectation is over $(Z_i,X_i)$. Summing over $Z_i \in \{0, 1\}$ with weights $e_i$ and then take expectation over $X_i$ yields
\begin{align*}
    \bE[w_i \delta_i] & = \bE \left[ e_i\frac{r}{e_i}\Pr(T_i^* \leq C_i \mid Z_i=1, X_i) + (1-e_i)\frac{1-r}{1-e_i}\Pr(T_i^* \leq C_i \mid Z_i=0, X_i) \right] \\ & = rd_1+(1-r)d_0,
\end{align*}
where $\bE\left[ \Pr(T_i^* \leq C_i \mid Z_i=z, X_i) \right] = \Pr(T_i^* \leq C_i \mid Z_i=z)= d_z$. Substitute into \eqref{eq:A-integral-simplified}:
\begin{equation*}
    A(\tau_0) = \frac{rd_1+(1-r)d_0}{(\lambda_1+\lambda_0)^2},
\end{equation*}
which establishes Lemma \ref{lemma:A-reduction}.
\end{proof}

\begin{lemma}[Reduction of $B(\tau_0)$]\label{lemma:B-reduction}
    Under the the proportional risk-set assumption A5:
    \begin{align*}
        B(\tau_0) & = \frac{1}{(\lambda_1 + \lambda_0)^2} \left[ r^2 \lambda_0^2 d_1 \bE\left\{ \frac{1}{e_i} \right\} + (1-r)^2 \lambda_1^2 d_0 \bE\left\{ \frac{1}{1-e_i} \right\} \right] + B_{\epsilon}(\tau_0) \\
        B_{\epsilon}(\tau_0) & =  \frac{1}{(\lambda_1 + \lambda_0)^2} \left[ e^{\tau_0} r \lambda_0^2 \int_0^\infty C_1(t)G_1(t) d\Lambda_0(t) + (1-r) \lambda_1^2 \int_0^\infty C_0(t)G_0(t) d\Lambda_0(t) \right],
    \end{align*}
    where $C_z(t)=\Cov(F_z(t|X_i), w_{z,i})$ for $t \in [0, t^\dagger]$.
\end{lemma}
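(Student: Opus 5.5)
Under (A5), $\pi(\tau_0,t)\equiv\bar\pi=\lambda_1/(\lambda_1+\lambda_0)$, as established in the proof of Lemma~\ref{lemma:A-reduction}. The plan is to collapse $\eta_i(\tau_0)$ to a weighted martingale, compute its second moment arm by arm, and then split off a covariance remainder. Since $Z_i-\bar\pi$ no longer depends on $t$, the influence function of Theorem~\ref{theorem1} factors as
\begin{equation*}
\eta_i(\tau_0)=w_i(Z_i-\bar\pi)\Big[\delta_i-e^{\tau_0 Z_i}\Lambda_0(T_i)\Big]=w_i(Z_i-\bar\pi)\,M_i(\infty),
\end{equation*}
where $M_i(t)=N_i(t)-\int_0^t Y_i(u)e^{\tau_0 Z_i}\,d\Lambda_0(u)$ is the martingale used in the proof of Lemma~\ref{lemma:compensator-integral-representation}. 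The prefactor $w_i(Z_i-\bar\pi)$ is $\mathcal F_0$-measurable. Hence $B(\tau_0)=\bE\big[w_i^2(Z_i-\bar\pi)^2\,\bE\{M_i(\infty)^2\mid Z_i,X_i\}\big]$.

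For the inner term I will use the predictable variation $\langle M_i,M_i\rangle=A_i$, applied conditionally on $(Z_i,X_i)$ as in Remark~\ref{remark:lemma1-conditional}. This gives
\begin{equation*}
\bE\{M_i(\infty)^2\mid Z_i,X_i\}=e^{\tau_0 Z_i}\int_0^\infty \bE[Y_i(t)\mid Z_i,X_i]\,d\Lambda_0(t),
\end{equation*}
and by \eqref{eq:Yi-expectation} this equals $e^{\tau_0 Z_i}\int_0^\infty F_{Z_i}(t\mid X_i)G_{Z_i}(t)\,d\Lambda_0(t)$. I then sum over $Z_i\in\{0,1\}$ with probabilities $e_i$ and $1-e_i$.

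\emph{Treated arm.} Here $(1-\bar\pi)^2=\lambda_0^2/(\lambda_1+\lambda_0)^2$ and $e_i w_{1,i}^2=r\,w_{1,i}$. After Fubini, the arm contributes
\begin{equation*}
\frac{\lambda_0^2}{(\lambda_1+\lambda_0)^2}\, r e^{\tau_0}\int_0^\infty \bE\big[w_{1,i}F_1(t\mid X_i)\big]\,G_1(t)\,d\Lambda_0(t).
\end{equation*}
I will write $\bE[w_{1,i}F_1(t\mid X_i)]=\bE[w_{1,i}]\,F_1(t)+C_1(t)$, using $\bE[F_1(t\mid X_i)]=F_1(t)$ and $\bE[w_{1,i}]=r\,\bE[1/e_i]$. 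The covariance piece gives exactly the first summand of $B_\epsilon(\tau_0)$. The main piece becomes $r^2\bE[1/e_i]\int_0^\infty e^{\tau_0}F_1(t)G_1(t)\,d\Lambda_0(t)$. Under the marginal Cox model, $e^{\tau_0}F_1\,d\Lambda_0=-dF_1$ is the event-time density of $T(1)$, and $C$ is independent of $T(1)$ given $Z=1$ by (A4). So the integral equals $\Pr(T(1)\le C)=d_1$, identified in the same way as $\bE[\Pr(T_i^*\le C_i\mid Z_i=z,X_i)]=d_z$ in Lemma~\ref{lemma:A-reduction}.

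\emph{Control arm.} This is symmetric, with $\bar\pi^2=\lambda_1^2/(\lambda_1+\lambda_0)^2$, $(1-e_i)w_{0,i}^2=(1-r)w_{0,i}$ and $\bE[w_{0,i}]=(1-r)\bE[1/(1-e_i)]$. It yields $(1-r)^2\lambda_1^2 d_0\,\bE\{1/(1-e_i)\}$ plus the $C_0$ term. Adding the two arms gives the stated decomposition.

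The main obstacle is the step $\bE\{M_i(\infty)^2\mid Z_i,X_i\}=\bE\{A_i(\infty)\mid Z_i,X_i\}$. $M_i$ is built from the \emph{marginal} working hazard, so it is a genuine martingale given $(Z_i,X_i)$ only if the conditional intensity matches. I will first try to justify it exactly as Lemma~\ref{lemma:compensator-integral-representation} and its conditional Remark do. Failing that, I will expand $[\delta_i-e^{\tau_0 Z_i}\Lambda_0(T_i)]^2$ directly and integrate against the arm-specific survival and censoring. In that route, any discrepancy relative to the martingale identity would be absorbed into $C_z(t)$-type terms, so that the structure of $B_\epsilon$ is preserved.
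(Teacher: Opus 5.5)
Your route is the paper's route. Under (A5) you write $\eta_i(\tau_0)=w_i(Z_i-\bar\pi)M_i(\infty)$ and replace its second moment by the compensator via $\langle M_i,M_i\rangle=A_i$. You then use $\bE[Y_i(t)\mid Z_i=z,X_i]=F_z(t\mid X_i)G_z(t)$, split $\bE[w_{z,i}F_z(t\mid X_i)]=\bE[w_{z,i}]F_z(t)+C_z(t)$, and evaluate $\int F_zG_z\,d\Lambda_0=d_ze^{-\tau_0 z}$. Your arm-by-arm constants match the lemma.

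The paper disposes of your ``main obstacle'' exactly as your primary plan would. It cites Fleming--Harrington (Theorems 2.5.1 and 2.4.4) with $\mathcal F_0=\sigma(X,Z)$, as in Lemma~\ref{lemma:compensator-integral-representation}. Your worry is nevertheless well founded, and that justification does not deliver the stated $B_\epsilon(\tau_0)$. The hypothesis of Lemma~\ref{lemma:compensator-integral-representation} is an intensity $Y_i(t)\lambda_0(t)e^{\tau_0 Z_i}$ with respect to a filtration whose time-zero $\sigma$-field contains $X_i$. By (A1), (A2) and (A4), this forces the hazard of $T_i(z)$ given $X_i$ to equal the marginal one. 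Hence $F_z(t\mid X_i)=F_z(t)$ and $C_z\equiv 0$. So that argument only covers the case $B_\epsilon(\tau_0)=0$, e.g.\ randomized trials. Remark~\ref{remark:lemma1-conditional} does not rescue it, because conditioning on $X_i$ cannot turn the marginal compensator into the conditional one.

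Now suppose the conditional hazard $\lambda_z(t\mid X_i)$ really varies with $X_i$. Write $\lambda_z(t)$ for the marginal hazard and $\Lambda_z(t)=e^{\tau_0 z}\Lambda_0(t)$. Your fallback (direct expansion) then gives
\begin{equation*}
\bE\{M_i(\infty)^2\mid Z_i=z,X_i\}-\bE\{A_i(\infty)\mid Z_i=z,X_i\}=\int_0^{t^\dagger}F_z(t\mid X_i)G_z(t)\{\lambda_z(t\mid X_i)-\lambda_z(t)\}\{1-2\Lambda_z(t)\}\,dt .
\end{equation*}
This difference has mean zero under the marginal law of $X_i$. In $B(\tau_0)$, however, it is multiplied by $\Pr(Z_i=z\mid X_i)w_{z,i}^2=r^z(1-r)^{1-z}w_{z,i}$. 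Using $\Cov\{w_{z,i},F_z(t\mid X_i)\lambda_z(t\mid X_i)\}\,dt=-dC_z(t)$, arm $z$ therefore picks up the extra term
\begin{equation*}
-(z-\bar\pi)^2\,r^z(1-r)^{1-z}\int_0^{t^\dagger}G_z(t)\{1-2\Lambda_z(t)\}\{dC_z(t)+C_z(t)\,d\Lambda_z(t)\}.
\end{equation*}
This term is generally nonzero. It is already nonzero for a binary confounder with two exponential conditional hazards in one arm, no censoring and $\tau_0=0$, a case where (A5) holds exactly.

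So the true residual is still a linear functional of $C_z$, but not the one in the statement. Your claim that ``the structure of $B_\epsilon$ is preserved'' holds only in that weak sense. The lemma as written is exact only when this extra term vanishes, e.g.\ $C_z\equiv 0$. Neither your primary plan nor your fallback closes the gap without an additional assumption.

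A second, smaller gap is shared with the paper's proof. The quantity $\int_0^\infty e^{\tau_0}F_1G_1\,d\Lambda_0=\Pr\{T(1)\le C\}$ averages over the marginal law of $X_i$. By contrast, $d_1=\Pr(\delta_i=1\mid Z_i=1)$ averages over $X_i\mid Z_i=1$. The two coincide only without confounding, so in the observational case $d_z$ must be read as a standardized event rate rather than the observed arm-specific rate.
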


\begin{proof}[of Lemma~\ref{lemma:B-reduction}]

Recall Lemma~\ref{lemma:compensator-integral-representation} that $A_i(t) = \int_0^t Y_i(u)\lambda_0(u)\exp(\tau_0 Z_i) du$ and $M_i(t) := N_i(t) - A_i(t)$ is a local square-integrable martingale. In Section~\ref{sec:relation-to-Lin&Wei}, we also showed that $\delta_i(Z_i - \pi(\tau_0, T_i)) = \int_0^\infty \left( Z_i - \pi(\tau_0,t) \right) dN_i(t)$. Therefore, substituting $dM_i(t) = dN_i(t) - dA_i(t) = dN_i(t) - Y_i(t)e^{\tau_0 Z_i}d\Lambda_0(t)$ into $\eta_i(\tau_0)$ yields
\begin{align}\label{eq:eta-integral}
    \eta_i(\tau_0) & = w_i \left[ \int_0^\infty \left( Z_i - \pi(\tau_0,t) \right) dN_i(t) - \int_0^\infty \left( Z_i - \pi(\tau_0,t) \right) Y_i(t)e^{\tau_0 Z_i} d\Lambda_0(t) \right]  \nonumber \\
    & = w_i \int_0^\infty \left( Z_i - \pi(\tau_0,t) \right) dM_i(t)
\end{align}

By \cite{fleming1997Counting} Theorem 2.5.1,
$\langle M_i, M_i \rangle = A_i$, where $\langle M_i, M_i \rangle$ is the unique predictable quadratic variation of $M_i$. Further invoking their Theorem 2.4.4 yields
\begin{equation}\label{eq:etasq-integral}
    \bE[\eta_i^2(\tau_0)] = \bE \left[ w_i^2 \left\{ \int_0^\infty \left( Z_i - \pi(\tau_0,t) \right) dM_i(t) \right\}^2 \right] = \bE \left[ w_i^2 \int_0^\infty \left( Z_i - \pi(\tau_0,t) \right)^2 dA_i(t)  \right].
\end{equation}
Substituting $dA_i(t)$ into equation \eqref{eq:etasq-integral} and apply A5 and Fubini,
\begin{align}\label{eq:B-simplified1}
    B(\tau_0) & = \bE \left[ w_i^2 \int_0^\infty \left( Z_i - \pi(\tau_0,t) \right)^2 Y_i(t)e^{\tau_0 Z_i}d\Lambda_0(t) \right] \nonumber \\
    & = \int_0^\infty \bE\left[ w_i^2 \left( Z_i - \bar{\pi} \right)^2 Y_i(t)e^{\tau_0 Z_i} \right] d\Lambda_0(t).
\end{align}
For the integrand, condition on $(Z_i,X_i)$ and see that $w_i=w(Z_i,X_i)$ is determined thereof:
\begin{equation}\label{eq:B-integrand1}
    \bE\left[ w_i^2 \left(Z_i - \bar{\pi} \right)^2 Y_i(t)e^{\tau_0 Z_i} \right] = \bE\left[w_i^2 \left(Z_i - \bar{\pi} \right)^2 e^{\tau_0 Z_i} \bE\{Y_i(t) \mid Z_i, X_i \}  \right].
\end{equation}
In Lemma~\ref{lemma:A-reduction} we showed $\bE[Y_i(t) \mid Z_i=z, X_i] = F_z(t|X_i) G_z(t)$. Similarly, sum over $Z_i\in\{0,1\}$ and take expectation over $X_i$,
\begin{align*}\label{eq:B-integrand2}
    \text{Eq. } \eqref{eq:B-integrand1}
    & = r(1-\bar{\pi})^2 e^{\tau_0} G_1(t) \cdot \bE\left[ F_1(t|X_i) w_{1,i} \right] + (1-r) \bar{\pi}^2 G_0(t) \cdot \bE\left[ F_0(t|X_i) w_{0,i} \right]
\end{align*}
Notice that $\bE\left[ F_z(t|X_i)w_{z,i} \right] = F_z(t) \bE[w_{z,i}] + \Cov(F_z(t|X_i),w_{z,i})$, and substituting into equation above yields the following expression of integrand \eqref{eq:B-integrand1}:
\begin{align*} 
    \text{Eq. } \eqref{eq:B-integrand1} = & \ r (1-\bar{\pi})^2 e^{\tau_0} F_1(t) G_1(t) \bE\left[ w_{1,i} \right] + (1-r) \bar{\pi}^2 F_0(t) G_0(t) \bE\left[ w_{0,i} \right] + \\
    & \ r (1-\bar{\pi})^2 e^{\tau_0} C_1(t) G_1(t) + (1-r) \bar{\pi}^2 C_0(t) G_0(t)
\end{align*}
where $C_z(t) = \Cov(F_z(t|X_i),w_{z,i})$. Plugging expression above into equations \eqref{eq:B-simplified1} and \eqref{eq:B-integrand1}:
\begin{equation}\label{eq:B-simplified2}
    B(\tau_0) = B_F(\tau_0) + B_{\epsilon}(\tau_0),
\end{equation}
where
\begin{align*}
    B_F(\tau_0) = \ & r (1-\bar{\pi})^2 e^{\tau_0} \bE\left[w_{1,i}\right] \int_0^\infty F_1(t)G_1(t) d\Lambda_0(t)  +  (1-r) \bar{\pi}^2 \bE\left[w_{0,i}\right] \int_0^\infty F_0(t)G_0(t) d\Lambda_0(t) \\
    B_{\epsilon}(\tau_0) = \ & r (1-\bar{\pi})^2 e^{\tau_0} \int_0^\infty C_1(t)G_1(t) d\Lambda_0(t) + (1-r) \bar{\pi}^2 \int_0^\infty C_0(t)G_0(t) d\Lambda_0(t),
\end{align*}
and further by equation \eqref{eq:stable-risk-set},
\begin{equation}
    (1-\bar{\pi})^2 = \frac{\lambda_0^2}{(\lambda_1+\lambda_0)^2}, \quad \bar{\pi}^2 = \frac{\lambda_1^2}{(\lambda_1+\lambda_0)^2}.
\end{equation}
Plugging in $(1-\bar{\pi})^2$ and $\bar{\pi}^2$ into $B_{\epsilon}(\tau_0)$ yields
\begin{equation}
    B_{\epsilon}(\tau_0) = \frac{1}{(\lambda_1+\lambda_0)^2}\left[ e^{\tau_0} r \lambda_0^2 \int_0^\infty C_1(t)G_1(t) d\Lambda_0(t) + (1-r) \lambda_1^2 \int_0^\infty C_0(t)G_0(t) d\Lambda_0(t) \right]
\end{equation}

To evaluate $\int_0^\infty F_z(t)G_z(t) d\Lambda_0(t)$, notice that $d_z = \bE[\Pr(T_i(z) \leq C_i | Z_i=z, X_i)]$ by A1 (SUTVA), where the expectation is taken over $X_i$. Therefore,
\begin{equation*}
    d_z  = \bE \left[ \int_0^\infty f_z(t|X_i) G_z(t)dt \right] = \int_0^\infty f_z(t) G_z(t)dt.
\end{equation*}
Under the marginal Cox model $F_z(t)=\exp(-e^{\tau_0z}\Lambda_0(t))$, so $f_z(t) = e^{\tau_0z}F_z(t)\lambda_0(t)$, and thus
\begin{equation}\label{eq:int-Fz-G}
    \int_0^\infty F_1(t)G_1(t) d\Lambda_0(t) = d_1e^{-\tau_0}, \quad  \int_0^\infty F_0(t)G_0(t) d\Lambda_0(t) = d_0.
\end{equation}
Plugging into $B_F(\tau_0)$ yields
\begin{align}
    B_F(\tau_0) & = r^2(1-\bar{\pi})^2 e^{\tau_0} \bE\left[ \frac{1}{e_i} \right] \cdot d_1e^{-\tau_0} + (1-r)^2 \bar{\pi}^2 \bE\left[ \frac{1}{1-e_i} \right] \cdot d_0 \nonumber \\
    & = \frac{1}{(\lambda_1 + \lambda_0)^2} \left[ d_1 r^2 \lambda_0^2 \bE\left\{ \frac{1}{e_i} \right\} + d_0 (1-r)^2 \lambda_1^2 \bE\left\{ \frac{1}{1-e_i} \right\} \right],
\end{align}
which finishes the proof of Lemma~\ref{lemma:B-reduction}.
\end{proof}

\subsubsection{Propositions and corollaries}


\begin{proof}[of Corollary \ref{corollary:var-rct}]
In randomized trials, $\bE[1/e_i]=1/r$ and $\bE[1/(1-e_i)]=1/(1-r)$; $\epsilon=0$ because $C_z(t)=\Cov(F_z(t|X_i),w_z(X_i))\equiv0$. Plugging into Theorem \ref{theorem2} completes the proof.
\end{proof}
\vspace{8pt}


\begin{proof}[of Proposition~\ref{prop:logrank}]

Following the setting of Schoenfeld formula that $d=d_1=d_0$, under which 
it produces the required of events, our proposed formula at the event scale is
\begin{equation}\label{eq:V-rct-events}
    \widetilde{V}_{\RCT} = d \cdot (\lambda_1 + \lambda_0)^2 [r\lambda_0^2 + (1-r)\lambda_1^2]/d = (\lambda_1 + \lambda_0)^2 [r\lambda_0^2 + (1-r)\lambda_1^2],
\end{equation}
Under $r=1/2$, it further reduces to $\widetilde{V}_{\RCT}=(\lambda_1 + \lambda_0)^2 (\lambda_1^2 + \lambda_0^2)/2$. Furthermore,
\begin{equation*}
    \lambda_1=\exp(\tau_0/2), \quad (\lambda_1+\lambda_0)^2=2[\cosh(\tau_0)+1], \quad \lambda_1^2 + \lambda_0^2 = 2\cosh(\tau_0).
\end{equation*}
And thus $\widetilde{V}_{\RCT} = 2 \cosh(\tau_0) [\cosh(\tau_0)+1]$. Similarly,
\begin{equation*}
    \widetilde{V}_{\SC} = 4, \quad \widetilde{V}_{\FR} = [\tau_0 \{1+\exp(\tau_0)\}/\{1-\exp(\tau_0)\}]^2 = \tau_0^2\coth^2(\tau_0/2).
\end{equation*}

Therefore, $\widetilde{V}_{\RCT}/\widetilde{V}_{\SC} = \cosh(\tau_0) [\cosh(\tau_0)+1] / 2 \geq 1$ immediately. Furthermore,
\begin{equation*}
    \widetilde{V}_{\RCT}/\widetilde{V}_{\FR} = \frac{2\cosh(\tau_0) [\cosh(\tau_0)+1]}{\tau_0^2\coth^2(\tau_0/2)} = 2\cosh(\tau_0) [\cosh(\tau_0)-1]/\tau_0^2,
\end{equation*}
and by $\cosh(\tau_0)-1=2\sinh^2(\tau_0/2) \geq \tau_0^2/2$ and $\cosh(\tau_0)\geq1$, with equalities if and only if $\tau_0=0$, we establish $\widetilde{V}_{\RCT}/\widetilde{V}_{\FR} \geq \tau_0^2/\tau_0^2=1$.
\end{proof}
\vspace{8pt}

\begin{proof}[of Corollary \ref{corollary:var-obs}]
Plugging $\bE[1/e_i]=(a+b-1)/(a-1)$ and $\bE[1/(1-e_i)]=(a+b-1)/(b-1)$ into Theorem \ref{theorem2} established $\widetilde{V}_{\obs}$. For $\epsilon \to 0$ as $\phi \to 1$ see Section \ref{app:bounds-epsilon}.
\end{proof}
\vspace{8pt}


\begin{proof}[of Proposition \ref{prop:kish-vif}]
The design effect can be rewritten as
\begin{equation}\label{eq:kish-DE}
    K_{\DE} = \frac{n_1}{n}\cdot\frac{n_0}{n}\cdot\left[\frac{n^{-1}\sum_i Z_i w_i^2}{(n^{-1}\sum_i Z_i w_i)^2} + \frac{n^{-1}\sum_i (1-Z_i)w_i^2}{(n^{-1}\sum_i (1-Z_i)w_i)^2}\right],
\end{equation}
where $n=n_1+n_0$. Define the random vector $\xi_i := (Z_i, Z_iw_i, Z_iw_i^2, (1-Z_i)w_i, (1-Z_i)w_i^2)$; $\{\xi_i\}_{i=1}^n$ are i.i.d. across $i$. Under regular weights, all elements have finite moment, so by multivariate WLLN, $n^{-1}\sum\xi_i \xrightarrow{p} \bE[\xi_i]$ jointly. Then by continuous mapping theorem,
\begin{equation}\label{eq:kish-DE-converge}
    K_{\DE} \xrightarrow{p} \kappa_{\DE} = r(1-r)\left[\frac{\bE[Z_i w_i^2]}{(\bE[Z_i w_i])^2} + \frac{\bE[(1-Z_i)w_i^2]}{(\bE[(1-Z_i)w_i])^2}\right].
\end{equation}
Under the normalized inverse probability weights $w_i = rZ_i/e_i + (1-r)(1-Z_i)/(1-e_i)$,
\begin{align*}
    & \bE[Z_i w_i] = r \bE[Z_i/e_i] = r \bE[ \bE\{Z_i|e_i\} /e_i] = r \bE[e_i/e_i] = r \\
    & \bE[Z_i w_i^2] = r^2 \bE[Z_i/e_i^2] = r^2 \bE[ \bE\{Z_i|e_i\} /e_i^2] = r^2\bE[1/e_i] = r^2(a+b-1)/(a-1), \\
    & \bE[(1-Z_i) w_i] = (1-r) \bE[(1-Z_i)/(1-e_i)] = 1-r, \\
    & \bE[(1-Z_i) w_i^2] = (1-r)^2 \bE[(1-Z_i)/(1-e_i)^2] = (1-r)^2 (a+b-1)/(b-1),
\end{align*}
subject to $a,b>1$. Substituting into equation \eqref{eq:kish-DE-converge} leads to
\begin{equation}\label{eq:kappa-ipw}
    \kappa_{\DE} = r(1-r)(a+b-1)\left[\frac{1}{a-1} + \frac{1}{b-1} \right].
\end{equation}
Standard algebra shows
\begin{equation}\label{eq:vif-analytical}
    \widetilde{V}_{\obs}/\widetilde{V}_{\RCT} = \frac{a+b-1}{S} \left[ \frac{r^2 \lambda_0^2 d_1}{a-1} + \frac{(1-r)^2 \lambda_1^2 d_0}{b-1} \right],
\end{equation}
where $S=r\lambda_0^2 d_1 + (1 - r)\lambda_1^2 d_0$. Similarly,
\begin{equation*}
    \kappa_{\DE} = \frac{a+b-1}{S} \left[ \frac{r(1-r)S}{a-1} + \frac{r(1-r)S}{b-1} \right],
\end{equation*}
and substracting yields
\begin{align*}
    \kappa_{\DE}-\widetilde{V}_{\obs}/\widetilde{V}_{\RCT} & = \frac{a+b-1}{S} \left[ \frac{r(1-r)S-r^2 \lambda_0^2 d_1}{a-1} + \frac{r(1-r)S-(1-r)^2 \lambda_1^2 d_0}{b-1} \right] \\
    & = \frac{(a+b-1)D}{S} \left[ \frac{r}{a-1} - \frac{1-r}{b-1} \right],
\end{align*}
where $D=(1-r)^2\lambda_1^2 d_0-r^2 \lambda_0^2 d_1$. Further plug in $\lambda_1$ and $\lambda_0$ and rearrange,
\begin{equation}\label{eq:vif-difference}
    \kappa_{\DE}-\widetilde{V}_{\obs}/\widetilde{V}_{\RCT} = \frac{r(1-r)(1-2r)(a+b-1)[d_0\exp(\tau_0)-d_1\exp(-\tau_0)]}{(a-1)(b-1)S}.
\end{equation}
Therefore, at balanced designs $(r=1/2)$, $\kappa_{\DE}-\widetilde{V}_{\obs}/\widetilde{V}_{\RCT}=0$; at imbalanced designs ($r\neq 1/2$), since the denominator dominates as $a,b \to \infty$ under $r=a/(a+b)$, therefore $\kappa_{\DE}-\widetilde{V}_{\obs}/\widetilde{V}_{\RCT} \to 0$ as $\phi \to 1$, regardless of $d_1$, $d_0$, and $\tau_0$ as their factor as constants.
\end{proof}

\subsubsection{Bounds for the $\epsilon$ in Theorem \ref{theorem2}}\label{app:bounds-epsilon}

In this section, we establish bounds on $\epsilon$ from Theorem~\ref{theorem2} with weights $w_{1,i}=r/e_i$ and $w_{0,i}=(1-r)/(1-e_i)$. Define $\rho_z := \underset{t \in [0,t^\dagger]}{\sup} \left| \cor(F_z(t|X_i),w_{z,i}) \right|$, $z=0,1$, then $|\epsilon| \leq M_1$,
\begin{equation*}
    M_1 = \frac{\pi (\lambda_1 + \lambda_0)^2}{2d^2} \left[ \rho_1 r \lambda_0^2 \sqrt{\bV[w_{1,i}]} + \rho_0 (1-r) \lambda_1^2 \sqrt{\bV[w_{0,i}]} \right].
\end{equation*}
If further assume a $\gamma=F_0(t^\dagger)$ such that $\gamma>0$, i.e., a non-zero potential control survival at the end of follow-up, then $|\epsilon| \leq \min\{M_1, M_2, M_3, M_4\}$, where
\begin{align*}
    M_2 & = - \frac{(\lambda_1 + \lambda_0)^2 \log\gamma}{2 d^2} \left[ \rho_1 r \lambda_0^2 \exp(\tau_0) \sqrt{\bV[w_{1,i}]} + \rho_0 (1-r) \lambda_1^2 \sqrt{\bV[w_{0,i}]} \right] \\
    M_3 & = \frac{(\lambda_1 + \lambda_0)^2 \sqrt{-\log\gamma} }{\sqrt{d^3}} \left[ \rho_1 r \lambda_0^2 \exp(\frac{\tau_0}{2}) \sqrt{\bV[w_{1,i}]} + \rho_0 (1-r) \lambda_1^2 \sqrt{\bV[w_{0,i}]} \right] \\
    M_4 & = \frac{(\lambda_1 + \lambda_0)^2 \sqrt{-\log\gamma} }{\sqrt{2}d^2} \left[ \rho_1 r \lambda_0^2 \exp(\frac{\tau_0}{2}) \sqrt{\bV[w_{1,i}]} + \rho_0 (1-r) \lambda_1^2 \sqrt{\bV[w_{0,i}]} \right].
\end{align*}
In practice, $\rho_z$ and the optional $\gamma$ are specified with domain knowledge; $\bV[w_{z,i}]$ is determined by $(r,\phi)$ under $e_i \sim \mathrm{Beta}(a,b)$, with analytical expressions given in the proof below.

\begin{proof}
Recall that
\begin{equation*}
    \epsilon = \left(\frac{\lambda_1 + \lambda_0}{d}\right)^2 \left[ r \lambda_0^2 \exp(\tau_0) \int_0^\infty C_1(t)G_1(t) d\Lambda_0(t) + (1-r) \lambda_1^2 \int_0^\infty C_0(t)G_1(t) d\Lambda_0(t) \right],
\end{equation*}
where the integral $\int_0^\infty C_z(t)G_z(t) d\Lambda_0(t)$ involves the joint distribution $(T_i(z), Z_i, X_i)$, which is only estimable after the data collection. We therefore bound it with $\rho_z$. Notice that $\cor(F_z(t|X_i),w_{z,i}) = C_z(t)/\sqrt{\bV[F_z(t|X_i)]\bV[w_{z,i}]}$, where $C_z(t)=\Cov(F_z(t|X_i),w_{z,i})$. Thus,
\begin{equation*}
    C_z(t) \leq \rho_z \sqrt{\bV[F_z(t|X_i)]} \sqrt{\bV[w_{z,i}]}.
\end{equation*}
Since $F_z(t|X_i) \in [0,1]$, so $F_z(t|X_i)^2 \leq F_z(t|X_i)$, and hence
\begin{equation*}
    \bV[F_z(t|X_i)] = \bE[F_z(t|X_i)^2] - \bE[F_z(t|X_i)]^2 \leq \bE[F_z(t|X_i)] - \bE[F_z(t|X_i)]^2 = F_z(t)[1-F_z(t)].
\end{equation*}
Additionally, since $\rho_z$ and $\bV[w_{z,i}]$ are time-invariant, therefore
\begin{equation}\label{eq:int-Cz-G-bound0}
    \left| \int_0^\infty C_z(t)G_z(t) d\Lambda_0(t) \right| \leq \rho_z \sqrt{\bV[w_{z,i}]} \int_0^\infty \sqrt{F_z(t)[1-F_z(t)]} G_z(t) d\Lambda_0(t).
\end{equation}

We first evaluate three integrals: $\int_0^\infty F_z(t)[1-F_z(t)] d\Lambda_0(t)$, $\int_0^\infty \sqrt{F_z(t)[1-F_z(t)]} d\Lambda_0(t)$, and $\int_0^\infty G_z(t) d\Lambda_0(t)$. For the first one, $F_z(t)=\exp(-e^{\tau_0 z}\Lambda_0(t))$ under Cox model, and therefore
\begin{equation}\label{eq:int-F-Fsquare}
    \int_0^\infty F_z(t)[1-F_z(t)] d\Lambda_0(t) = \int_0^\infty \left[\exp(-e^{\tau_0 z}u) - \exp(-2e^{\tau_0 z}u) \right] du = \exp(-\tau_0 z)/2,
\end{equation}
where the first equation is by substituting $u=\Lambda_0(t)$, and the second equation is by 
\begin{equation*}
    \int_0^\infty \left[\exp(-e^{\tau_0 z}u) - \exp(-2e^{\tau_0 z}u) \right] du = \exp(-\tau_0 z) \int_0^\infty \left[\exp(-s) - \exp(-2s) \right] ds,
\end{equation*}
where $s=e^{\tau_0 z}u$ and thus $ds=\exp(\tau_0 z)du$, and $\int_0^\infty \left[\exp(-s) - \exp(-2s) \right]ds=1/2$. 

For the second integral,
\begin{equation}\label{eq:int-sqrt-F-Fsquare}
    \int_0^\infty \sqrt{F_z(t)[1-F_z(t)]} d\Lambda_0(t) = \exp(-\tau_0 z) \int_0^\infty \sqrt{\exp(-s) - \exp(-2s)} ds = \frac{\pi}{2} \exp(-\tau_0 z),
\end{equation}
where 
the second equation is by $v=\exp(-s)$, 
and hence
\begin{equation*}
    \int_0^\infty \sqrt{\exp(-s) - \exp(-2s)} ds = \int_0^1 \sqrt{(1-v)/v}dv = 2\int_0^{\pi/2} \cos^2 \theta \ d\theta = \pi/2.
\end{equation*}

For the third integral, recall that $G_z(t)=\Pr(C_i \geq t | Z_i=z)$, and thus
\begin{equation}\label{eq:int-G}
    \int_0^\infty G_z(t) d\Lambda_0(t)  = \int_0^\infty \bE\left[ \mathbb{I}(t \leq C_i) | Z_i=z \right] d\Lambda_0(t) = 
    \bE\left[ \Lambda_0(C_i) | Z_i=z \right].
\end{equation}

Now we bound the right hand side of equation \eqref{eq:int-Cz-G-bound0}. The first bound $M_1$ is from $G_z(t)\leq 1$:
\begin{equation*}
    \int_0^\infty \sqrt{F_z(t)[1-F_z(t)]} G_z(t) d\Lambda_0(t) \leq \int_0^\infty \sqrt{F_z(t)[1-F_z(t)]} d\Lambda_0(t) = \frac{\pi}{2} \exp(-\tau_0 z)
\end{equation*}
and therefore \eqref{eq:int-Cz-G-bound0} is further bounded as
\begin{equation*}
    \left| \int_0^\infty C_z(t)G_z(t) d\Lambda_0(t) \right| \leq \frac{\pi}{2} \exp(-\tau_0 z) \rho_z \sqrt{\bV[w_{z,i}]},
\end{equation*}
which yields
\begin{equation}\label{eq:ipw-epsilon-bound1}
    |\epsilon| \leq \frac{\pi (\lambda_1 + \lambda_0)^2}{2d^2} \left[ \rho_1 r \lambda_0^2 \sqrt{\bV[w_{1,i}]} + \rho_0 (1-r) \lambda_1^2 \sqrt{\bV[w_{0,i}]} \right] := M_1
\end{equation}

If further assume $\gamma = F_0(t^\dagger) > 0$, then it is possible to further tighten $M_1$ based on the given $\gamma$. Specifically, by $G_z(t)\leq 1$ and equation \eqref{eq:int-G},
\begin{equation}\label{eq:int-G-bound}
    \int_0^\infty G_z(t) d\Lambda_0(t) = \bE\left[ \Lambda_0(C_i) | Z_i=z \right] \leq \int_0^{t^\dagger} d\Lambda_0(t) = \Lambda_0(t^{\dagger}) = - \log \gamma.
\end{equation}
Then, the second bound $M_2$ is given by observing $\sqrt{F_z(t)[1-F_z(t)]} \leq 1/2$:
\begin{equation*}
    \int_0^\infty \sqrt{F_z(t)[1-F_z(t)]} G_z(t) d\Lambda_0(t) \leq \frac{1}{2} \int_0^\infty G_z(t) d\Lambda_0(t) \leq - \frac{1}{2} \log\gamma,
\end{equation*}
which combined with equation \eqref{eq:int-Cz-G-bound0}  yields
\begin{equation}\label{eq:ipw-epsilon-bound2}
    |\epsilon| \leq - \frac{(\lambda_1 + \lambda_0)^2 \log\gamma}{2 d^2} \left[ \rho_1 r \lambda_0^2 \exp(\tau_0) \sqrt{\bV[w_{1,i}]} + \rho_0 (1-r) \lambda_1^2 \sqrt{\bV[w_{0,i}]} \right] := M_2
\end{equation}

The third bound $M_3$ is obatined by first applying the Cauchy–Schwarz inequality:
\begin{equation*}
    \int_0^\infty \sqrt{F_z(t)[1-F_z(t)]} G_z(t) d\Lambda_0(t) \leq \sqrt{\int_0^\infty F_z(t)[1-F_z(t)] G_z(t) d\Lambda_0(t)} \sqrt{\int_0^\infty G_z(t) d\Lambda_0(t)},
\end{equation*}
and then by $F_z(t)[1-F_z(t)] \leq F_z(t)$, equation \eqref{eq:int-Fz-G} in Lemma~\ref{lemma:B-reduction}, and equation \eqref{eq:int-G-bound},
\begin{equation*}
    \int_0^\infty \sqrt{F_z(t)[1-F_z(t)]} G_z(t) d\Lambda_0(t) \leq \sqrt{d \exp(-\tau_0 z)} \sqrt{-\log\gamma},
\end{equation*}
which combined with \eqref{eq:int-Cz-G-bound0}  yields
\begin{equation}\label{eq:ipw-epsilon-bound3}
    |\epsilon| \leq \frac{(\lambda_1 + \lambda_0)^2 \sqrt{-\log\gamma} }{\sqrt{d^3}} \left[ \rho_1 r \lambda_0^2 \exp(\frac{\tau_0}{2}) \sqrt{\bV[w_{1,i}]} + \rho_0 (1-r) \lambda_1^2 \sqrt{\bV[w_{0,i}]} \right] := M_3
\end{equation}

The last bound $M_4$ is obtained by applying another Cauchy–Schwarz inequality:
\begin{equation*}
    \int_0^\infty \sqrt{F_z(t)[1-F_z(t)]} G_z(t) d\Lambda_0(t) \leq \sqrt{\int_0^\infty F_z(t)[1-F_z(t)] d\Lambda_0(t)} \sqrt{\int_0^\infty G_z(t)^2 \ d\Lambda_0(t)},
\end{equation*}
and then by equation \eqref{eq:int-F-Fsquare} and $G_z(t)^2 \leq G_z(t)$,
\begin{equation*}
    \int_0^\infty \sqrt{F_z(t)[1-F_z(t)]} G_z(t) d\Lambda_0(t) \leq \sqrt{\exp(-\tau_0 z)/2} \sqrt{-\log\gamma},
\end{equation*}
which combined with equation \eqref{eq:int-Cz-G-bound0}  yields
\begin{equation}\label{eq:ipw-epsilon-bound4}
    |\epsilon| \leq \frac{(\lambda_1 + \lambda_0)^2 \sqrt{-\log\gamma} }{\sqrt{2}d^2} \left[ \rho_1 r \lambda_0^2 \exp(\frac{\tau_0}{2}) \sqrt{\bV[w_{1,i}]} + \rho_0 (1-r) \lambda_1^2 \sqrt{\bV[w_{0,i}]} \right] := M_4
\end{equation}

In summary, $|\epsilon| \leq M_1$; if additionally assume a $\gamma > 0$, then $|\epsilon| \leq \min\{M_1, M_2, M_3, M_4\}$. Under $e_i \sim \mathrm{Beta}(a,b)$, we further have
\begin{equation}\label{eq:ipw-weights-var}
    \bV[w_{1,i}]=\frac{r^2 b (a+b-1)}{(a-1)^2(a-2)}, \quad \bV[w_{0,i}]=\frac{(1-r)^2 a (a+b-1)}{(b-1)^2 (b-2)}.
\end{equation}
Hence the bounds are computable with the standard inputs $(r,\phi)$. 

When $\phi \to 1$, $a,b \to \infty$ and $r=a/(a+b)$. Therefore, plug in $b=a(1-r)/r$ into $\bV[w_{1,i}]$ and $a=br/(1-r)$ into $\bV[w_{0,i}]$, and then take $a\to \infty$ and $b \to \infty$ respectively would yield $\bV[w_{z,i}] \to 0$ because the denominator dominates. Since this implies $M_1 \to 0$ as $\phi \to 1$, therefore $\epsilon \to 0$ as $\phi \to 1$.
\end{proof}

\subsubsection{Conservativeness of A5 in randomized trials}\label{sec:rct-conservativeness}

Recall that $V=A^{-2}B$ and $\widetilde{V}_{\RCT}=\widetilde{A}^{-2}\widetilde{B}$, where $\widetilde{A}$ and $\widetilde{B}$ are from Lemmas \ref{lemma:A-reduction} and \ref{lemma:B-reduction} under assumption A5 proportional risk-set. For randomized trials, $B_\epsilon$ from Lemma \ref{lemma:B-reduction} is zero, and therefore, a sufficient condition for $\widetilde{V}_{\RCT} \geq V$ is that $\widetilde{A} \leq A$ and $\widetilde{B} \geq B$.

To satisfy $\widetilde{A} \leq A$, from equations \eqref{eq:A-integral}, \eqref{eq:stable-risk-set}, and \eqref{eq:A-integral-simplified}, it is equivalent to
\begin{equation*}
    \widetilde{A} - A = \int_0^\infty \left[\bar{\pi}(1-\bar{\pi}) - \pi(t)(1-\pi(t))\right] s_0(\tau_0, t) d\Lambda_0(t) \leq 0.
\end{equation*}
Denote $f(\pi)=\pi(1-\pi)$, then a stronger sufficient condition for $\widetilde{A} \leq A$ is that $f(\bar{\pi}) \leq f(\pi)$ for all $t$; as $f(\pi)=\pi(1-\pi)$ peaks at $\pi=1/2$, this is not universally satisfied because the trajectory of $\pi(t)$ depends on $r$, $\tau_0$, and censoring structure $G_z(t),z=0,1$. However, when $G_1(t)=G_0(t)$, a clean condition can be established under $r<1/2$ and $\tau_0<0$ (or $r>1/2$ and $\tau_0>0$ by symmetry). Under this setting, $G_z(t)$ cancels in $f(\pi)$, and $f(\bar{\pi}) \leq f(\pi)$ is equivalent to $|\pi-1/2| \leq |\bar{\pi}-1/2|$, which solves as $\pi \in [\bar{\pi}, 1-\bar{\pi}]$; in addition, $\pi(\tau_0,t)$ is monotonically increasing in $t$ since it increases in $[F_0(t)]^{e^{\tau_0}-1}$, which also increases in $t$. Therefore, to satisfy $\pi(\tau_0,t) \in [\bar{\pi}, 1-\bar{\pi}]$, the minimal $F_0(t)$, denoted as $\gamma$, is solved by
\begin{equation*}
     \pi(\tau_0,t) \leq 1-\bar{\pi} \quad \Longleftrightarrow \quad\frac{r\exp(\tau_0)\gamma^{\exp(\tau_0)-1}}{r\exp(\tau_0)\gamma^{\exp(\tau_0)-1} + (1-r)} \leq 1-\bar{\pi},
\end{equation*}
which solves to
\begin{equation}\label{eq:minimal-F0}
    \gamma \geq \left( \frac{1-r}{r \exp(\tau_0)} \right)^{2/[\exp(\tau_0)-1]}.
\end{equation}

To satisfy $\widetilde{B} \geq B$ under the same setting as $A$, from equation \eqref{eq:B-simplified1},
\begin{align*}
    \widetilde{B} - B & = \int_0^\infty \bE\left[ w_i^2 \left( Z_i - \bar{\pi} \right)^2 Y_i(t)e^{\tau_0 Z_i} \right] d\Lambda_0(t) - \int_0^\infty \bE\left[ w_i^2 \left( Z_i - \pi(t) \right)^2 Y_i(t)e^{\tau_0 Z_i} \right] d\Lambda_0(t) \\
    & = -\int_0^\infty \bE \left[ w_i^2 Y_i(t) e^{\tau_0 Z_i} \left(\bar{\pi} - \pi(t)\right)  \left(2 Z_i - \bar{\pi} - \pi(t)\right) \right] d\Lambda_0(t) \\
    & = -\int_0^\infty \left(\bar{\pi} - \pi(t)\right) \cdot \left[ 2\bE \left\{w_i^2 Y_i(t) Z_i e^{\tau_0 Z_i} \right\} - \left(\bar{\pi} + \pi(t)\right) \bE \left\{w_i^2 Y_i(t) e^{\tau_0 Z_i} \right\} \right] d\Lambda_0(t) \\
    & =  -\int_0^\infty Q(t) \cdot \left(\bar{\pi} - \pi(t)\right) \cdot \left[ 2 \bar{Z}_w(t) - \left(\bar{\pi} + \pi(t)\right) \right] d\Lambda_0(t),
\end{align*}
where $Q(t):= \bE\left[w_i^2 Y_i(t) \exp(\tau_0 Z_i) \right]$ and $\bar{Z}_w(t):= \bE \left[w_i^2 Y_i(t) \exp(\tau_0 Z_i) Z_i  \right] / Q(t)$. In randomized trials, $w_i \equiv 1$, so $\bar{Z}_w(t)$ reduces to $\pi(\tau_0,t)$. We have shown that $\pi(\tau_0,t)$ monotonically increases in $t$, and thus $\bar{\pi} - \pi(t) \leq 0$; also $Q(t)\geq 0$. Therefore, $\widetilde{B} - B  \geq 0$ is equivalent to $\bar{Z}_w(t) = \pi(t) \geq (\bar{\pi}+\pi(t))/2$, which further reduces to $\pi(t)\geq \bar{\pi}$, and is always satisfied.

In conclusion, when $G_1(t)=G_0(t)$, a sufficient condition for $\widetilde{V}_{\RCT} \geq V$ is: $r<1/2$, $\tau_0<0$, and equation \eqref{eq:minimal-F0}; similar conditions hold for $r>1/2$, $\tau_0>0$ by symmetry. The $\gamma$ in \eqref{eq:minimal-F0} is equivalently the marginal potential survival probability under control at the end of follow-up. For $r=1/2$, at $\exp(\tau_0) \in \{0.8, 0.6, 0.4\}$, equation \eqref{eq:minimal-F0} solves to $\gamma \geq \{5\%, 8\%, 11\%\}$; for $r=1/3$ at the same $\exp(\tau_0)$, \eqref{eq:minimal-F0} solves to $\gamma \geq \{0\%, 0\%, 0\%\}$.

\subsection{Asymptotic variance reduction under general weights}

For general weights $w_i=Z_iw_1(e_i) + (1-Z_i)w_0(e_i)$, in general, there is no clean reduction of $V_w = A_w(\tau_0)^{-2} B_w(\tau_0)$ under A5 only. This is because a residual due to confounding remains in the $A$ matrix, which vanishes under inverse probability weights by its structure. 

Specifically, under A5 proportional risk-set, it holds that
\begin{equation*}
    A_w(\tau_0)  = A_{w,F} + A_{w, \epsilon}(\tau_0),
\end{equation*}
where for inverse probability weights, $A_{w, \epsilon}(\tau_0)=0$, but not necessary for general weights. To see the reason, following the proof of Lemma~\ref{lemma:A-reduction}, and denote $\mu_z = \bE[e_i^z(1-e_i)^{1-z}w_z(e_i)]$,
\begin{align*}
    s_1(\tau_0,t) & = e^{\tau_0} \bE \left[e_i w_1(e_i)F_1(t|X_i)G_1(t) \right], \\
    s_0(\tau_0,t) & = s_1(\tau_0,t) + \bE \left[(1-e_i) w_0(e_i)F_0(t|X_i)G_0(t) \right],
\end{align*}
By the decomposition $\bE[e_i^z(1-e_i)^{1-z} w_z(e_i) F_z(t|X_i)G_z(t)]=\mu_z F_z(t) + C_{w,z}^{\mu}(t)$ for $z=0,1$,
\begin{align*}
    s_1(\tau_0,t) = e^{\tau_0} [\mu_1 F_1(t) + C_{w,1}^{\mu}(t)] G_1(t), \ \
    s_0(\tau_0,t) = s_1(\tau_0,t) + \left[\mu_0 F_0(t) + C_{w,0}^{\mu}(t) \right] G_0(t),
\end{align*}
and see that $C_{w,z}^{\mu}(t)$ is the additional remainder for general weights, and reduces to zero under inverse probability weights, but not necessary for general weights. The terms involving $C_{w,z}^{\mu}(t)$ carry over to $\int_0^\infty s_0(\tau_0, t) d\Lambda_0(t)$ and induce $A_{w, \epsilon}(\tau_0)$.

\newpage

\section{Additional Simulation Results}

\subsection{Randomized trials, synthetic data}

Table \ref{tab:rct-power-cens0-0} provides results under no random censoring of the simulation in Section 5.1.2.

\begin{table}[H]
    \centering
    \small
    \setlength{\tabcolsep}{3pt}
    \renewcommand{\arraystretch}{0.96}
    \caption{Sample size and empirical power of the proposed, Schoenfeld, and Freedman formulas in randomized trials with no random censoring.}
    \vspace{4pt}
    \label{tab:rct-power-cens0-0}
    \begin{tabular*}{0.75\textwidth}{@{\extracolsep{\fill}}ccccccc@{}}
    \toprule
    \multirow{2}{*}[-4pt]{\shortstack[c]{Marginal \\ hazard ratio}} & \multicolumn{2}{c}{Proposed} & \multicolumn{2}{c}{Schoenfeld} & \multicolumn{2}{c}{Freedman} \\
    \cmidrule(lr){2-3} \cmidrule(lr){4-5} \cmidrule(lr){6-7}
    & N & Power & N & Power & N & Power \\
    \midrule
    \multicolumn{7}{c}{\textit{Treatment proportion $= 1/3$}} \\
    0.4 & 101 & .947 & 48 & .685 & 72 & .858 \\
    0.6 & 208 & .884 & 144 & .751 & 177 & .829 \\
    0.8 & 829 & .854 & 721 & .789 & 783 & .811 \\
    \midrule
    \multicolumn{7}{c}{\textit{Treatment proportion $= 1/2$}} \\
    0.4 & 68 & .900 & 47 & .765 & 53 & .811 \\
    0.6 & 152 & .837 & 134 & .805 & 140 & .808 \\
    0.8 & 669 & .807 & 651 & .793 & 658 & .805 \\
    \midrule
    \multicolumn{7}{c}{\textit{Treatment proportion $= 2/3$}} \\
    0.4 & 53 & .805 & 57 & .825 & 48 & .778 \\
    0.6 & 137 & .791 & 157 & .822 & 138 & .764 \\
    0.8 & 679 & .775 & 745 & .801 & 698 & .785 \\
    \bottomrule
    \end{tabular*}
\end{table}

\subsection{Observational studies, synthetic data}

Tables \ref{tab:obs-power-cens0-20-supp} and \ref{tab:obs-power-cens0-0-supp} provide results at additional effect sizes and under no random censoring, respectively, of the simulation in Section 5.1.3.

\begin{table}[H]
    \centering
    \small
    \setlength{\tabcolsep}{3pt}
    \renewcommand{\arraystretch}{0.96}
    \caption{Sample size and empirical power of the proposed and Hsieh \& Lavori (observed population only) methods in observational studies with balanced design and 20\% random censoring in the treated arm, across degrees of overlap and marginal hazard ratios.}
    \vspace{4pt}
    \label{tab:obs-power-cens0-20-supp}
    \begin{tabular*}{0.85\textwidth}{@{\extracolsep{\fill}}ccccccccc@{}}
    \toprule
    & \multicolumn{4}{c}{Combined} & \multicolumn{2}{c}{\multirow{2}{*}{Overlap}} & \multicolumn{2}{c}{\multirow{2}{*}{Treated}} \\
    \cmidrule(lr){2-5}
    & \multicolumn{2}{c}{Proposed} & \multicolumn{2}{c}{Hsieh \& Lavori} & \multicolumn{2}{c}{} & \multicolumn{2}{c}{} \\
    \cmidrule(lr){2-3} \cmidrule(lr){4-5} \cmidrule(lr){6-7} \cmidrule(lr){8-9}
    $\phi$ & $N$ & Power & $N$ & Power & $N$ & Power & $N$ & Power \\
    \midrule
    \multicolumn{9}{c}{\textit{Marginal hazard ratio $= 0.4$}} \\
    0.99 & 67 & .847 & 50 & .693 & 67 & .859 & 68 & .818 \\
    0.96 & 70 & .818 & 53 & .682 & 69 & .842 & 76 & .808 \\
    0.93 & 76 & .801 & 56 & .655 & 72 & .843 & 88 & .799 \\
    0.90 & 85 & .817 & 59 & .641 & 75 & .816 & 108 & .830 \\
    0.87 & 101 & .836 & 62 & .634 & 79 & .817 & 134 & .841 \\
    0.85 & 121 & .870 & 65 & .619 & 82 & .833 & 168 & .877 \\
    0.83 & 157 & .917 & 67 & .628 & 84 & .823 & 222 & .921 \\
    \midrule
    \multicolumn{9}{c}{\textit{Marginal hazard ratio $= 0.8$}} \\
    0.99 & 718 & .845 & 715 & .849 & 717 & .853 & 732 & .843 \\
    0.96 & 767 & .847 & 756 & .840 & 756 & .837 & 836 & .805 \\
    0.93 & 840 & .854 & 801 & .839 & 797 & .844 & 978 & .777 \\
    0.90 & 951 & .859 & 847 & .811 & 842 & .835 & 1208 & .794 \\
    0.87 & 1142 & .846 & 895 & .768 & 889 & .845 & 1512 & .805 \\
    0.85 & 1366 & .886 & 928 & .772 & 921 & .834 & 1910 & .816 \\
    0.83 & 1797 & .918 & 965 & .733 & 955 & .831 & 2519 & .877 \\
    \bottomrule
    \end{tabular*}
\end{table}

\begin{table}[H]
    \centering
    \small
    \setlength{\tabcolsep}{3pt}
    \renewcommand{\arraystretch}{0.96}
    \caption{Sample size and empirical power of the proposed and Hsieh \& Lavori (observed population only) methods in observational studies with balanced design and no random censoring, across degrees of overlap and marginal hazard ratios.}
    \vspace{4pt}
    \label{tab:obs-power-cens0-0-supp}
    \begin{tabular*}{0.85\textwidth}{@{\extracolsep{\fill}}ccccccccc@{}}
    \toprule
    & \multicolumn{4}{c}{Combined} & \multicolumn{2}{c}{\multirow{2}{*}{Overlap}} & \multicolumn{2}{c}{\multirow{2}{*}{Treated}} \\
    \cmidrule(lr){2-5}
    & \multicolumn{2}{c}{Proposed} & \multicolumn{2}{c}{Hsieh \& Lavori} & \multicolumn{2}{c}{} & \multicolumn{2}{c}{} \\
    \cmidrule(lr){2-3} \cmidrule(lr){4-5} \cmidrule(lr){6-7} \cmidrule(lr){8-9}
    $\phi$ & $N$ & Power & $N$ & Power & $N$ & Power & $N$ & Power \\
    \midrule
    \multicolumn{9}{c}{\textit{Marginal hazard ratio $= 0.4$}} \\
    0.99 & 67 & .879 & 48 & .718 & 67 & .884 & 69 & .839 \\
    0.96 & 71 & .848 & 50 & .689 & 70 & .878 & 77 & .829 \\
    0.93 & 77 & .844 & 53 & .664 & 73 & .865 & 89 & .832 \\
    0.90 & 86 & .855 & 57 & .668 & 76 & .853 & 109 & .848 \\
    0.87 & 103 & .870 & 60 & .657 & 80 & .851 & 136 & .865 \\
    0.85 & 122 & .892 & 62 & .646 & 83 & .858 & 171 & .893 \\
    0.83 & 160 & .941 & 64 & .637 & 86 & .851 & 225 & .931 \\
    \midrule
    \multicolumn{9}{c}{\textit{Marginal hazard ratio $= 0.6$}} \\
    0.99 & 154 & .864 & 137 & .816 & 153 & .867 & 157 & .835 \\
    0.96 & 164 & .858 & 145 & .805 & 161 & .849 & 178 & .795 \\
    0.93 & 179 & .849 & 154 & .795 & 170 & .852 & 208 & .785 \\
    0.90 & 202 & .843 & 163 & .753 & 178 & .838 & 256 & .796 \\
    0.87 & 242 & .863 & 173 & .736 & 188 & .836 & 320 & .803 \\
    0.85 & 289 & .888 & 179 & .703 & 195 & .842 & 403 & .836 \\
    0.83 & 379 & .925 & 186 & .690 & 201 & .836 & 531 & .883 \\
    \midrule
    \multicolumn{9}{c}{\textit{Marginal hazard ratio $= 0.8$}} \\
    0.99 & 680 & .850 & 665 & .848 & 679 & .848 & 694 & .848 \\
    0.96 & 729 & .864 & 706 & .846 & 717 & .857 & 794 & .809 \\
    0.93 & 799 & .848 & 748 & .828 & 758 & .851 & 930 & .792 \\
    0.90 & 906 & .853 & 793 & .812 & 800 & .842 & 1149 & .780 \\
    0.87 & 1089 & .861 & 839 & .772 & 846 & .834 & 1439 & .783 \\
    0.85 & 1303 & .884 & 871 & .746 & 878 & .832 & 1818 & .813 \\
    0.83 & 1713 & .919 & 904 & .715 & 911 & .843 & 2400 & .845 \\
    \bottomrule
    \end{tabular*}
\end{table}

\newpage

\spacingset{1.35}

\end{document}